\documentclass[conference]{IEEEtran}
\IEEEoverridecommandlockouts

\usepackage{amsmath,amssymb,amsthm}
\usepackage{stmaryrd}
\usepackage{booktabs}
\usepackage{enumitem}
\usepackage{microtype}
\usepackage{xcolor}
\usepackage{tikz}
\usetikzlibrary{decorations.pathreplacing}
\definecolor{accent}{RGB}{31,74,125}
\usepackage[colorlinks,linkcolor=accent,citecolor=accent,urlcolor=accent]{hyperref}

\newtheorem{theorem}{Theorem}
\newtheorem{proposition}{Proposition}
\newtheorem{lemma}{Lemma}
\newtheorem{corollary}{Corollary}
\theoremstyle{definition}
\newtheorem{definition}{Definition}
\theoremstyle{remark}
\newtheorem{remark}{Remark}
\newtheorem{example}{Example}

\usepackage{cleveref}
\crefname{proposition}{Proposition}{Propositions}
\Crefname{proposition}{Proposition}{Propositions}
\crefname{lemma}{Lemma}{Lemmas}
\Crefname{lemma}{Lemma}{Lemmas}
\crefname{corollary}{Corollary}{Corollaries}
\Crefname{corollary}{Corollary}{Corollaries}
\crefname{definition}{Definition}{Definitions}
\Crefname{definition}{Definition}{Definitions}
\crefname{remark}{Remark}{Remarks}
\Crefname{remark}{Remark}{Remarks}
\crefname{example}{Example}{Examples}
\Crefname{example}{Example}{Examples}
\crefname{section}{Section}{Sections}
\Crefname{section}{Section}{Sections}

\newcommand{\Var}{\mathit{Var}}
\newcommand{\fv}{\mathit{fv}}
\newcommand{\asgn}{\mathit{asgn}}
\newcommand{\LoI}{\mathrm{LoI}}
\newcommand{\QoI}{\mathrm{QoI}}
\newcommand{\tc}{\mathrm{tc}}
\newcommand{\mix}{\mathrm{mix}}
\newcommand{\supp}{\mathrm{supp}}
\newcommand{\ord}{\mathrm{ord}}

\newcommand{\Ch}{\mathrm{Ch}}
\newcommand{\Uset}{\mathcal{U}}
\newcommand{\Umul}{\mathcal{U}_{\mathbb{N}}}
\newcommand{\Uloc}{\mathcal{L}}
\newcommand{\dc}{\mathord{\downarrow}}
\newcommand{\ten}{\otimes}
\newcommand{\lub}{\bigvee}
\newcommand{\derM}[4]{#1 \vdash_{M} #2\;\{#3\}\;#4}

\newcommand{\aln}{\alpha^{\mathbb{N}}}
\newcommand{\Alg}{\mathcal{A}}
\newcommand{\Jset}{\mathbb{J}}
\newcommand{\sem}[1]{\llbracket #1 \rrbracket}
\newcommand{\imp}{\Rightarrow}
\newcommand{\irule}[2]{\ensuremath{\dfrac{\;#1\;}{\;#2\;}}}

\begin{document}

\title{Renaming or Tightness:\\Enforcing Disjunctive Information Flow Policies}

\author{%
\IEEEauthorblockN{Xin Xu}
\IEEEauthorblockA{\textit{Carnegie Mellon University}\\
xuxin@cmu.edu}
\and
\IEEEauthorblockN{Siru Tao}
\IEEEauthorblockA{\textit{Carnegie Mellon University}\\
sirutao@andrew.cmu.edu}
\and
\IEEEauthorblockN{Kaizhen Tan}
\IEEEauthorblockA{\textit{Carnegie Mellon University}\\
kaizhent@cmu.edu}
}

\maketitle

\begin{abstract}
A disjunctive policy allows a value to depend on at most one of two secrets
and never on both: an analyst may consult one client's file or the other's, a
share of a split secret may be released but not its sibling. Such policies
are not lattice-shaped, and Hunt and Sands introduced the quantale of
information to give them a semantics, leaving the enforcement layer open. We
build the flow-sensitive type system family that the quantale calls for, and
show that the object which makes such families useful, the universal type
object from which every member specialises, splits in two, with a
consequence for enforcement. Over the free commutative quantale on the
program variables the whole mechanism survives for every policy: monotone
renaming, canonical derivations, principal typings, internal completeness.
Over the free object with idempotent generators the certified bound is
strictly more precise and still sound, because it records that two reads of
one source honour one disjunct. The gap cannot be closed from inside the independent-attribute family: no mechanism of that shape whose labelling maps support monotone renaming certifies a bound more precise than the first, and for principal typings under generator-exact homomorphic specialisation the two coincide. Under the ethical-wall and secret-sharing labels the second read of a disjunctive source therefore drives every such certificate to \emph{no guarantee}, and programs that satisfy the policy are rejected. The literal transcription of the lattice-era object is no escape
either: it is a further quotient that loses branch disjunction. Precision is
recovered by deferring specialisation to the judgement level, and the resulting read-out map is the least sound join-preserving one.
\end{abstract}

\section{Introduction}

A consultancy holds the files of two competing clients. Its ethical wall, the Chinese Wall of Brewer and Nash~\cite{BrewerNash89}, says that no report may depend on both. A key-management service splits a secret into two shares~\cite{Shamir79} and releases at most one. A clinical dashboard may show a
patient's diagnosis or their identity, never the pair. What these policies
forbid is not a release but a \emph{combination} of otherwise permissible
releases, and no lattice of security levels expresses that: in a lattice the
join of two admissible dependencies is admissible again.

Hunt and Sands answered this with the \emph{quantale of
information}~\cite{HuntSands21}, which keeps the lattice of
information~\cite{LandauerRedmond93} as a special case and adds a second
combination operator. Disjunction remains the join, and conjunction becomes a
tensor $\otimes$ that is \emph{not} idempotent, so that combining a
disjunctive dependency with itself may reveal strictly more than the
dependency alone. That is exactly what an ethical wall is about.

Lattice-based enforcement is a mature line, from certification through security type systems to flow-sensitive analyses~\cite{DenningDenning77,Volpano96,SabelfeldMyers03,HuntSands06}. What has been missing for the quantale is enforcement. \cite{HuntSands21} closes by naming the
step: adapt the flow-sensitive dependency analysis of~\cite{HuntSands06} so
that it becomes the language of types over the quantale. Disjunctive policies
have since been enforced for a database query language against a
purpose-built condition~\cite{Ahmadian24}, but the general question of what a
type system can certify about a disjunctive policy, and of what it must fail
to certify, has not been asked.

The analysis they point to is not an arbitrary starting point. Its
distinguishing feature is a \emph{universal object}. Fixing the flow lattice
to be the powerset of program variables yields one system that subsumes the
whole family: each program has a principal typing there, every typing under
every other lattice follows by monotone renaming, and no member of the family
certifies more than the target lattice's own system does. For enforcement
this is the difference between analysing a program once and re-analysing it
for every policy.

The suggestion even names an object: the abstract domain
of~\cite[\S V]{HuntSands21}, sets of sets of variables kept irredundant under
inclusion, which is the set object $\Uset$ below presented by antichains.
This paper carries out the step and reports what follows. The universal
object does not break in the passage to a quantale. It splits, the object
their roadmap names is one half of the split, and which half one takes
decides what can be enforced.

\subsection*{One program, two answers}

\begin{example}[the second read]\label{ex:running}
Client files $r_1$ and $r_2$ are independent. An upstream process produces a
summary $s$ under the ethical wall, so in any run $s$ reveals at most $r_1$
or at most $r_2$: its label is the disjunctive element
$\mathbb{S} = \tc\{\sim_1,\sim_2\}$ of the quantale. A report generator uses
the summary twice, once in a header and once in a table:
\[
C_3 \;=\; (u := s;\; w := s).
\]
Is the pair $(u,w)$ within the wall? It is. Whichever client's data the
summary honoured on this run, both copies honour the same one, so an observer
of $u$ and $w$ learns exactly what an observer of $s$ learns. But a type system
whose environments hold elements of the quantale must give $u$ and $w$ the
type $\mathbb{S}$ each, and the joint observation of two variables is their
tensor: $\mathbb{S} \ten \mathbb{S} = \top$. The certificate says the report
may reveal both clients' files. The mechanism has forgotten that the two
copies are one choice.
\end{example}

The forgetting is not an artefact of a rule. It is forced, and its scope is
exactly the family of programs the quantale was introduced for.

\subsection*{The universal object splits}

The powerset of variables carries two structures at once, and a lattice
cannot tell them apart. Read as an ordered monoid under union it is the free
commutative \emph{idempotent} monoid on the variables. With no idempotence
imposed, the free commutative monoid has multisets as its elements. Freeness
of the associated quantale asks a labelling to extend to a monotone monoid
homomorphism, so the idempotent reading forces every label to satisfy
$\sigma(v)\ten\sigma(v) = \sigma(v)$ while the multiset reading forces
nothing beyond affineness. Every labelling into a locale is idempotent, so in
the lattice setting the two objects specialise identically and nothing
distinguishes them. Over a quantale they come apart, and each keeps one half
of what the lattice case gave for free:

\begin{itemize}[leftmargin=1.2em]
\item Over the multiset object $\Umul$ the whole of~\cite{HuntSands06}
survives \emph{unconditionally}: monotone renaming, canonical derivations,
principal typings and internal completeness, for every affine labelling
(\Cref{thm:survive}). The $M$-indexed system is exactly the image of $\Umul$
under specialisation (\Cref{thm:exact-image}). This object counts uses, so on
\Cref{ex:running} it answers $\top$.
\item Over the set object $\Uset$ specialisation is no longer a homomorphism.
It factors as a strict homomorphism after an oplax section that crushes
multiplicities (\Cref{prop:decomp}), and the bound it produces is smaller. It
is sound for every labelling (\Cref{thm:psnd}) and strictly smaller as soon
as a disjunctive source is read twice (\Cref{thm:t3}). On
\Cref{ex:running} it answers $\mathbb{S}$, which is the truth.
\end{itemize}

A third reading suggests itself: keep the object of~\cite{HuntSands06}
literally, sets of variables with a single combination operator. That is the
free \emph{locale} on the variables, a further quotient of $\Uset$. It is
tight on \Cref{ex:running}, since union is idempotent, but it identifies the
join with the tensor and so erases the difference between ``this run depended
on $X_1$, that run on $X_2$'' and ``some run depended on $X_1 \cup X_2$'',
which is precisely the distinction the quantale exists to make
(\Cref{cor:loc}). Of the three objects, the multiset one loses same-source
correlation, the locale one loses branch disjunction, and only $\Uset$ keeps
both.

\subsection*{Renaming or tightness, not both}

Two escapes suggest themselves: write better rules, or build a better
universal object. Neither is open. Any sound mechanism that gives each
variable an element of $M$ and reads an observation off as a tensor
over-counts \Cref{ex:running}, by an argument that names no rule
(\Cref{prop:ceiling}). And a universal object is precision-neutral: if its
specialisation is monotone, lax monoidal and lax unital, which is what
transports derivations, then it certifies nothing more precise than re-analysing the program directly in the target quantale (\Cref{thm:nogo}), and for principal typings under generator-exact homomorphic specialisation the two bounds coincide (\Cref{prop:exact}). Reuse is what such an
object buys, and reuse is all it buys:

\begin{quote}
\emph{On a quantale one may have renaming or tightness, but not both.}
\end{quote}

In enforcement terms the consequence is sharp. Call a label
\emph{saturated} when $\sigma(v)\ten\sigma(v) = \top$, which is the shape of
the ethical-wall label of \Cref{ex:running} and of a two-share split, since
there the two permitted dependencies jointly determine the secret. Under a
saturated label, a single read of a source can be certified against a
non-trivial policy and a second read certifies nothing at all
(\Cref{cor:sat}), and there are programs that satisfy the policy, are
accepted by the mechanism of \Cref{sec:repair}, and are rejected by every
mechanism in the renaming paradigm (\Cref{thm:fr}). \Cref{ex:running} is one
of them. The loss saturates: by \Cref{prop:order} it is paid in full at the
second read and later reads are free.

Escaping costs exactly the renaming property. Typing in $\Uset$ and applying
the labelling to \emph{judgements} rather than to environments recovers the
tight bound (\Cref{thm:t4}), and that read-out map is the least sound join-preserving one (\Cref{thm:x}), so the repair does not merely help, it lands on the floor
while the renaming paradigm cannot reach below the ceiling.

\subsection*{Contributions}

\begin{enumerate}[leftmargin=1.4em]
\item A type system family $F_1$ over an arbitrary commutative affine
quantale, every rule anchored to a composition lemma of~\cite{HuntSands21},
sound against a mediator-based semantics of disjunctive policies (\Cref{sec:system}). Loops need the guard factor
replaced by its idempotent closure, which is invisible in a lattice
(\Cref{rem:while}).
\item The splitting of the universal object, the trichotomy with the free
locale, and the factorisation of specialisation through an oplax section
(\Cref{sec:objects}).
\item Unconditional survival of principal typings and internal completeness
over $\Umul$, and the identification of the $M$-system as exactly its image
(\Cref{sec:survive}).
\item Soundness of the set bound for arbitrary labels, the strictness
witness, the dichotomy with a local form that is checkable per program, and
the saturation bound on the loss (\Cref{sec:tight}).
\item An impossibility theorem for the renaming paradigm and its enforcement
reading: secure programs are rejected, and under saturated labels the second
read certifies nothing (\Cref{sec:nogo}).
\item The repair, minimality of its read-out, the resulting sandwich,
decidability of all three objects, a verification carried out end to end, and
a separation of the three ways a walled source gets used
(\Cref{sec:repair,sec:patterns}).
\end{enumerate}

\section{Disjunctive policies and what a run reveals}\label{sec:policies}

\subsection{Secrets, observers, policies}

Fix a finite set $D$ of secrets. $\LoI(D)$~\cite{LandauerRedmond93} is the
set of equivalence relations on $D$ ordered by $P \sqsubseteq Q$ iff
$Q \subseteq P$, so larger means more informative, with least element
$\mathrm{All}$, greatest element $\mathrm{Id}$ and join
$P \sqcup Q = P \cap Q$. An observation is a function $h$ on $D$, and
$h : P \imp \mathrm{Id}$ means $P \subseteq \ker h$: an observer of $h$
cannot separate secrets that $P$ identifies.

The quantale of information~\cite{HuntSands21} replaces one relation by a
tiling-closed set of them. For $\mathbb{P} \subseteq \LoI(D)$,
$\mix(\mathbb{P})$ collects the relations each of whose cells is a cell of
some member of $\mathbb{P}$, and $\tc(\mathbb{P}) = \dc\mix(\mathbb{P})$.
$\QoI(D)$ consists of the tiling-closed subsets ordered by inclusion, with
$\lub_i \mathbb{P}_i = \tc(\bigcup_i \mathbb{P}_i)$,
$\mathbb{P}\ten\mathbb{Q} = \tc\{P \sqcup Q \mid P\in\mathbb{P},\,
Q\in\mathbb{Q}\}$ and unit $1 = \{\mathrm{All}\}$; it is a commutative
quantale~\cite[Prop.~4]{HuntSands21}. Writing
$h : \mathbb{P}\imp\mathrm{Id}$ for ``$\exists P \in \mathbb{P}$ with
$P \subseteq \ker h$'', the two operators are disjunctive and conjunctive
combination of permissions, and weakening
($\mathbb{P}\sqsubseteq\mathbb{Q}$ and $h : \mathbb{P}\imp\mathrm{Id}$ imply
$h : \mathbb{Q}\imp\mathrm{Id}$) is~\cite[Prop.~1]{HuntSands21}. The tensor is
idempotent only on principal ideals. For the ethical-wall label
$\mathbb{S} = \tc\{\sim_1,\sim_2\}$ of \Cref{ex:running},
$\mathbb{S}\ten\mathbb{S} = \top$.

The attacker-facing reading of $\imp$ is the knowledge-based
one~\cite[\S V]{HuntSands21}. A \emph{knowledge set} of an observation $h$ is
a nonempty preimage $h^{-1}\{e\}$, what an observer of a single output can
deduce about the secret, and $h : R\imp\mathrm{Id}$ holds exactly when every
knowledge set of $h$ is tiled by $R$. A permission is thus an upper bound on
each individual deduction, not an average or an expectation, and combining
permissions with $\ten$ asks what an observer of both outputs can deduce at
once.

\begin{remark}[what tiling closure means operationally]\label{rem:perrun}
A witness $P$ for $h : \mathbb{P} \imp \mathrm{Id}$ with
$\mathbb{P} = \tc\{P_1,P_2\}$ need not be $P_1$ or $P_2$: it may be a mix,
whose cells come from $P_1$ in one region of the secret space and from $P_2$
in another. This is the operational content of a disjunctive permission, and
it is what the knowledge-based reading of the join
in~\cite[\S V]{HuntSands21} says: what an observer may deduce on a given run
is consistent with one of the disjuncts. \emph{Which} disjunct is honoured
may depend on the secret, hence on the run, and within a run exactly one is
honoured. Everything in this paper turns on that sentence: the second read of
a source in \Cref{ex:running} is safe because it happens in the same run as
the first.
\end{remark}

Throughout, $M$ denotes a commutative complete quantale
$\langle M,\sqsubseteq,\lub,\ten,1\rangle$ with $\ten$ distributing over
arbitrary joins, and $M_1 = \{m \mid 1 \sqsubseteq m\}$ its affine part;
environments take values in $M_1$, and the nonempty elements of $\QoI(D)$ are
exactly its affine elements~\cite[Lem.~3]{HuntSands21}. The variable set $\Var$ and the set $D$ of secrets are finite, and
$\mathit{Val}$ contains an encoding of every finite quotient of $D$ and
supports injective pairing, which for finite $D$ is a size assumption only.
The paper works at two levels: typing, renaming, algorithmic inference and
the universal objects are stated for an arbitrary such $M$, while every
statement that mentions mediators, compliance, soundness or tightness fixes
$M = \QoI(D)$, whose elements are the sets of equivalence relations that make
legality meaningful.

\subsection{Policies, compliance, mechanisms}

\begin{definition}[policy and compliance]\label{def:policy}
A \emph{policy} is a triple $(\sigma, W, \mathit{Pol})$ with an input
labelling $\sigma : \Var \to M_1$, an observed set $W \subseteq \Var$ and a
target $\mathit{Pol} \in M_1$. Information enters through \emph{mediators}: a
family $g = \langle g_v : D \rightharpoonup \mathit{Val}\rangle_{v\in\Var}$
with a common domain, writing $\hat g(d)$ for the store whose $v$-component
is $g_v(d)$; $g$ is \emph{legal} for $\sigma$ when $g_v : \sigma(v)\imp\mathrm{Id}$ for every $v$, that is, some member of $\sigma(v)$ identifies any two secrets that $g_v$ does not separate; for total $g_v$ this says exactly $\ker g_v\in\sigma(v)$, by downward closure. A program $C$ \emph{complies} with
the policy when
$\pi_W \circ \sem{C}\circ\hat g : \mathit{Pol} \imp \mathrm{Id}$
for every legal $g$.
\end{definition}

Mediators are what makes disjunction operational, and they are the reason
\Cref{def:policy} is not phrased as a two-run condition on stores. A
disjunctive label constrains what a value \emph{may} reveal, and to say that
two variables are served by one disjunctive source one has to name the source;
a condition quantifying over pairs of stores has no place to record it.
Several variables may therefore draw on one secret $d$, with the correlation
between them living in $D$. In \Cref{ex:running} both copies are served by
the mediator of $s$, which is why one witness certifies both. When all labels
are principal the source can be taken to be the store itself and
\Cref{rem:ni} recovers the usual two-run formulation. An \emph{enforcement mechanism} computes a bound
and accepts $C$ when the bound is below $\mathit{Pol}$. All mechanisms
considered here are sound, so acceptance implies compliance, and the question
throughout is which compliant programs are accepted.

\begin{remark}[the classical case]\label{rem:ni}
When every label and the target are principal, say
$\sigma(v) = \dc P_v$ and $\mathit{Pol} = \dc Q$, compliance says that
whenever the inputs are no more informative than the $P_v$ the observation of
$W$ is no more informative than $Q$. That is termination-insensitive
noninterference for the corresponding equivalence relations, so
\Cref{def:policy} is the standard condition with disjunctive permissions
added, not a new one.
\end{remark}

\begin{example}[the ethical wall, computed]\label{ex:wall}
Take $D = \{0,1\}^2$, a secret being a pair $(r_1,r_2)$ of independent client
files, and let $\sim_i$ be the kernel of the $i$th projection, with cells
$\{00,01\},\{10,11\}$ and $\{00,10\},\{01,11\}$. The wall permits a value to
reveal $r_1$ or to reveal $r_2$, so its label is
$\mathbb{S} = \tc\{\sim_1,\sim_2\}$. Computing it: a relation lies in
$\mix\{\sim_1,\sim_2\}$ when each of its cells is one of those four
two-element sets, and the only ways to cover $D$ by them are $[\sim_1]$ and
$[\sim_2]$, so $\mix\{\sim_1,\sim_2\} = \{\sim_1,\sim_2\}$ and
$\mathbb{S} = \{\mathrm{All},\sim_1,\sim_2\}$. The parity relation
$\ker(r_1\oplus r_2)$, with cells $\{00,11\},\{01,10\}$, is not in
$\mathbb{S}$: those cells are cells of neither $\sim_i$ and cannot be split.
Combining the wall with itself gives
$\sim_1\sqcup\sim_2 = \mathrm{Id}\in\mathbb{S}\ten\mathbb{S}$, so
$\mathbb{S}\ten\mathbb{S} = \top$. One value under the wall is harmless and
two independent values under the wall reveal everything, which is the
distinction a lattice cannot draw and the one this paper is about.
\end{example}

\begin{remark}[compound policies]\label{rem:compound}
Policies in~\cite[\S IV]{HuntSands21} are conjunctions
$\bigwedge_i (W_i, \mathit{Pol}_i)$ of requirements on several observations,
each $\mathit{Pol}_i$ a disjunction of permitted dependencies. Compliance
with a conjunction is compliance with each conjunct, and a single derivation
yields a bound for every $W$ at once (\Cref{def:joint}), so one analysis
discharges a compound policy conjunct by conjunct. Nothing below depends on
the number of conjuncts, and we work with a single one.
\end{remark}

\section{A type system family over a quantale}\label{sec:system}

\subsection{Rules}

The language is the \textbf{While} language of~\cite{HuntSands06}. Judgements
are $\derM{p}{\Gamma}{C}{\Gamma'}$ with $\Gamma : \Var \to M_1$, $p \in M_1$.
Write $t_E(\Gamma) = \bigotimes_{v\in\fv(E)}\Gamma(v)$ and
$m^{*} = \lub_{n\geq 0} m^{n}$.

\begin{figure}[t]
\centering
\small
\begin{tabular}{@{}c@{}}
\irule{}{\derM{p}{\Gamma}{\textsf{skip}}{\Gamma}} \\[2.2ex]
\irule{t = t_E(\Gamma)}
      {\derM{p}{\Gamma}{x := E}{\Gamma[x \mapsto p \ten t]}} \\[2.2ex]
\irule{\derM{p}{\Gamma}{C_1}{\Gamma''} \quad
       \derM{p}{\Gamma''}{C_2}{\Gamma'}}
      {\derM{p}{\Gamma}{C_1; C_2}{\Gamma'}} \\[2.2ex]
\irule{t = t_E(\Gamma) \quad \derM{p \ten t}{\Gamma}{C_i}{\Gamma_i'}}
      {\derM{p}{\Gamma}{\textsf{if}\;E\;C_1\;C_2}{\Gamma'}} \\[1.0ex]
\multicolumn{1}{@{}c@{}}{\footnotesize
 $\Gamma'(x) = t \ten (\Gamma_1'(x) \vee \Gamma_2'(x))$ for
 $x \in \asgn(C_1) \cup \asgn(C_2)$, else $\Gamma(x)$} \\[2.2ex]
\irule{t = t_E(\Gamma) \quad \derM{p \ten t}{\Gamma}{C}{\Gamma}}
      {\derM{p}{\Gamma}{\textsf{while}\;E\;C}{\Gamma^{\dagger}}} \\[1.0ex]
\multicolumn{1}{@{}c@{}}{\footnotesize
 $\Gamma^{\dagger}(x) = t^{*} \ten \Gamma(x)$ for $x \in \asgn(C)$,
 else $\Gamma(x)$} \\[2.2ex]
\irule{\derM{p}{\Gamma}{C}{\Gamma_1'} \quad
       p' \sqsubseteq p \quad \Gamma' \sqsubseteq \Gamma \quad
       \Gamma_1' \sqsubseteq \Gamma_2'}
      {\derM{p'}{\Gamma'}{C}{\Gamma_2'}}
\end{tabular}
\caption{The family $F_1$, indexed by a commutative affine quantale $M$.}
\label{fig:rules}
\end{figure}

Each rule transcribes an already proved composition lemma
of~\cite{HuntSands21}: their Lemma~5 for \textsc{Seq}, their Lemma~6, whose
premise has the shape $\mathbb{P}\ten(\mathbb{R}_1\vee\mathbb{R}_2)$, for
\textsc{If}, and their Lemma~8 for multi-variable expressions in
\textsc{Assign} and for the joint observation below. \Cref{app:anchor} gives
the correspondence rule by rule and argues that it is the faithful reading.

\begin{proposition}[the lattice case sits inside]\label{prop:conserv}
The idempotent affine elements of $\QoI(D)$ are the principal ideals, and
they form a lattice isomorphic to $\LoI(D)$ in which $\ten$ is the
join~\cite[Prop.~5]{HuntSands21}. On environments valued there,
\textnormal{\textsc{Assign}}, \textnormal{\textsc{Seq}},
\textnormal{\textsc{While}} and \textnormal{\textsc{Sub}} of
\Cref{fig:rules} are literally the rules of~\cite{HuntSands06} over that
lattice, since $p\ten t = p\sqcup t$, $t^{*} = t$ and the guard factor of a
loop is absorbed by an assigned variable's type. Hence on conditional-free
programs the two systems derive the same judgements.
\end{proposition}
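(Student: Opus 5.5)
The first sentence of the proposition is imported from \cite[Prop.~5]{HuntSands21}, so I take it as given: the idempotent affine elements of $\QoI(D)$ are the principal ideals $\dc P$ with $P\in\LoI(D)$. The map $P\mapsto\dc P$ is an order isomorphism onto them with $\dc P\ten\dc Q=\dc(P\sqcup Q)$. The identity needs one check. $\{P'\sqcup Q'\mid P'\sqsubseteq P,\,Q'\sqsubseteq Q\}$ is already $\dc(P\sqcup Q)$, and $\tc$ of a principal ideal is itself, because every mix of relations below $R$ is again below $R$. The unit $1=\{\mathrm{All}\}=\dc\mathrm{All}$ also lies in this image. From these facts three identities follow for principal $p,t$. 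First, $p\ten t=p\sqcup t$. Second, $t^{*}=\lub_n t^n=1\vee t=t$ by affineness and idempotence. Third, $\Gamma\ten$ of principal elements is again principal, so the set of principal-valued environments is closed under every rule except \textsc{If}.

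The plan is to go through the rules one at a time. For \textsc{Assign}, $p\ten t_E(\Gamma)=p\sqcup\bigsqcup_{v\in\fv(E)}\Gamma(v)$ is exactly the assignment rule of~\cite{HuntSands06}, with the empty tensor $1$ matching the bottom $\mathrm{All}$. \textsc{Skip} and \textsc{Seq} need nothing. \textsc{Sub} is identical, because the order on principal ideals is inclusion and this matches $\sqsubseteq$ on $\LoI$. For \textsc{While}, the premise context $p\ten t=p\sqcup t$ is the same as in the lattice rule. The conclusion of the lattice rule leaves $\Gamma$ unchanged, since the loop rule there is an invariant rule. Here $\Gamma^\dagger(x)=t^*\ten\Gamma(x)=t\sqcup\Gamma(x)$ for $x\in\asgn(C)$.

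The step needing care is this last \textsc{While} case: I must show $t\sqcup\Gamma(x)=\Gamma(x)$, which is the absorption the statement mentions. The argument uses the premise $\derM{p\ten t}{\Gamma}{C}{\Gamma}$, which holds with output $\Gamma$. I will prove a small lemma by induction on derivations: if $\derM{q}{\Gamma_0}{C}{\Gamma_1}$ then $q\sqsubseteq\Gamma_1(x)$ for all $x\in\asgn(C)$, assuming principal-valued environments and the absence of \textsc{If}. The cases are as follows. \textsc{Assign} gives $q\ten t'\sqsupseteq q$ by affineness. \textsc{Seq} holds because the later derivation either reassigns $x$ under the same $q$ or leaves $\Gamma(x)$ unchanged. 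Nested \textsc{While} holds by the induction hypothesis together with the fact that $t^*\ten-$ is inflationary. \textsc{Sub} preserves the claim, because it can only lower $q$ and raise outputs. Applying the lemma with $q=p\sqcup t$ gives $t\sqsubseteq\Gamma(x)$, hence $\Gamma^\dagger=\Gamma$. Conversely, every lattice loop derivation is matched by $F_1$ with output $\Gamma^\dagger=\Gamma$. Thus the \textsc{While} rules coincide on the reachable judgements.

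I expect this absorption lemma to be the main obstacle. There is a subtlety in the \textsc{Seq} case where $x$ is assigned in $C_1$ but not in $C_2$. The bound $q\sqsubseteq\Gamma''(x)$ is preserved there, since $C_2$ copies $\Gamma''(x)$ through, or \textsc{Sub} raises it. The final step is an induction on derivations in both directions, translating via $\dc$ and its inverse. It shows that for conditional-free $C$, $\derM{\dc p}{\dc\Gamma}{C}{\dc\Gamma'}$ is derivable in $F_1$ iff $p\vdash\Gamma\{C\}\Gamma'$ is derivable in~\cite{HuntSands06}. Conditionals are excluded because the $F_1$ \textsc{If} rule uses $\vee$, which is not the lattice join on principal ideals.
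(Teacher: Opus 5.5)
Your identities $p\ten t=p\sqcup t$ and $t^{*}=t$, and the absorption of the loop's guard factor through the context bound, are the paper's own argument. The context bound is the paper's \Cref{lem:ctx}; it holds for all derivations, so your restriction to principal-valued, conditional-free ones is harmless but unnecessary. Together these give the direction from~\cite{HuntSands06} to $F_1$.

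The gap is the converse direction of your last step. Translating by $\dc^{-1}$ node by node assumes that an $F_1$ derivation of a principal-valued judgement is principal-valued throughout. Closure of principal environments under the rules does not give that. \textsc{Sub} can weaken to any element of $M_1$, and the \textsc{While} invariant is a free choice whose best value is typically not principal. Take $D=\{0,1\}^2$, $\Gamma(x)=\dc\!\sim_1$, $\Gamma(y)=\dc\!\sim_2$, $\Gamma(z)=1$, $p=1$ and $C=\textsf{while}\;z\;(x:=y)$. The invariant $\Gamma[x\mapsto\mathbb{S}]$ is admissible, so $F_1$ derives the postcondition $x\mapsto\mathbb{S}$. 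That is strictly below $\dc\mathrm{Id}=\top$, the only value~\cite{HuntSands06} can derive for $x$. An $F_1$ derivation of the principal judgement with $x\mapsto\top$ may pass through that invariant and then \textsc{Sub}, and $\dc^{-1}$ is undefined at the \textsc{While} node. The example also shows that ``the same judgements'' can only mean the same principal-valued ones.

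The fix is to replace $\dc^{-1}$ by its extension $\rho(\mathbb{P})=\bigsqcup\mathbb{P}\in\LoI(D)$. This is defined for every nonempty, hence every affine, $\mathbb{P}$, since $D$ is finite. It has these properties:
\begin{itemize}
\item it is monotone;
\item $\rho(\dc P)=P$ and $\rho(1)=\mathrm{All}$;
\item $\rho(a\ten b)=\rho a\sqcup\rho b$, because neither mixing nor downward closure changes $\bigsqcup$ (a cell of a mix is a cell of some constituent, hence a union of cells of the constituents' join).
\end{itemize}
Induction on $F_1$ derivations of conditional-free programs then turns $\derM{p}{\Gamma}{C}{\Gamma'}$ into a derivation of $\rho p\vdash\rho\circ\Gamma\,\{C\}\,\rho\circ\Gamma'$ in~\cite{HuntSands06}. \textsc{Assign} goes through by the tensor identity and \textsc{Sub} by monotonicity. \textsc{While} goes through because $\rho(p\ten t)=\rho p\sqcup\bigsqcup_{v\in\fv(E)}\rho\Gamma(v)$ is exactly the lattice rule's premise context, and the remaining $\rho\circ\Gamma\sqsubseteq\rho\circ\Gamma^{\dagger}$ is handled by \textsc{Sub}. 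On principal endpoints $\rho$ coincides with $\dc^{-1}$, which supplies the missing direction.

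A smaller slip: $\{P'\sqcup Q'\mid P'\sqsubseteq P,\ Q'\sqsubseteq Q\}$ is not downward closed. For $\sim_1,\sim_2$ it is $\{\mathrm{All},\sim_1,\sim_2,\mathrm{Id}\}$ and misses the parity relation. It does have greatest element $P\sqcup Q$, and $\tc$ includes $\dc$, so your conclusion $\dc P\ten\dc Q=\dc(P\sqcup Q)$ still holds.
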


The one rule that is not a transcription is \textsc{If}, and only because the
principal ideals are not closed under $\vee$: where~\cite{HuntSands06} must
join the two branch types, \Cref{fig:rules} keeps their disjunction, which is
the expressiveness $\QoI$ was introduced for and which \Cref{cor:loc} shows
is not free to give up.

\begin{definition}[joint observation]\label{def:joint}
A derivation $\derM{p}{\Gamma}{C}{\Gamma'}$ certifies, for every
$W \subseteq \Var$, the bound $\bigotimes_{w\in W}\Gamma'(w)$ on what an
observer of $W$ learns in the final state.
\end{definition}

\begin{remark}[the guard factor of a loop]\label{rem:while}
In~\cite{HuntSands06} a loop's postcondition is the invariant itself, a form
that relies on guard information being absorbed by an idempotent join. Over a
quantale, certifying ``the loop ran exactly $k$ times'' consumes one member
of $t$ per iteration and these members may differ, so the postcondition
carries the idempotent closure $t^{*}$, the least idempotent affine element
above $t$. When $\otimes$ is idempotent, that is when $M$ is a locale,
$t^{*} = t$ and the rule of~\cite{HuntSands06} is recovered verbatim.
With the failed free step of monotone renaming, the tensor in the joint
observation and the guard factor in \textsc{If}, this is the fourth place
where the lattice assumption was doing silent work.
\end{remark}

\subsection{Why these rules}\label{app:anchor}
\begin{table}[h]
\centering
\small
\begin{tabular}{@{}ll@{}}
\toprule
Rule & Anchor \\
\midrule
\textsc{Skip}, \textsc{Sub} & \cite{HuntSands06}, verbatim \\
\textsc{Assign} & \cite[Lem.~8]{HuntSands21} (pairing) and
  \cite[Prop.~5]{HuntSands21} \\
\textsc{Seq} & \cite[Lem.~5]{HuntSands21} (chain rule) \\
\textsc{If} & \cite[Lem.~6]{HuntSands21}, premise shape
  $\mathbb{P}\ten(\mathbb{R}_1\vee\mathbb{R}_2)$ \\
\textsc{While} & \cite{HuntSands06} plus \Cref{rem:while} \\
joint observation & \cite[Lem.~8]{HuntSands21} \\
\bottomrule
\end{tabular}
\end{table}

Three points make this the faithful reading rather than one of many. First,
the assignment of operators is fixed by the source: \cite{HuntSands21}
defines $\ten$ as conjunctive and $\vee$ as disjunctive combination, so a
multi-variable expression, whose value determines all of its operands'
contributions at once, must combine with $\ten$, and a branch merge, where a
run follows one side or the other, must combine with $\vee$. Second, the
shapes are theirs, not ours: the \textsc{If} rule is the premise of their
Lemma~6 transcribed with environments in place of single functions, and the
joint observation of a set of variables is their Lemma~8, the only pairing
rule the source provides. Third, \cite[Prop.~5]{HuntSands21} states that the embedding of $\LoI$ into
$\QoI$ sends the lattice join to $\ten$, so on environments valued in
principal ideals every rule above computes exactly what the corresponding
rule of~\cite{HuntSands06} computes over $\LoI$: the tensor is the lattice
join, and the guard factors are absorbed because an assigned variable's type
already dominates the context, with $t^{*} = t$ by idempotence. On
straight-line code F1 therefore reproduces~\cite{HuntSands06} verbatim. The
principal ideals are not closed under $\vee$, so a branch merge leaves them,
and there F1 records the disjunction that~\cite{HuntSands06} must join away.
That difference is not a liberty we take with their rules but the content of
\Cref{cor:loc}, and it is what the quantale was introduced to express.

The one place where a choice remains is the guard factor of \textsc{While}
(\Cref{rem:while}), and it is forced by soundness rather than by fidelity.
None of this constrains \Cref{sec:nogo}: \Cref{prop:ceiling} names no rule at all, and \Cref{thm:nogo} quantifies over every renaming-stable rule set, so a reader who prefers different rules inherits the same ceiling.

\subsection{Soundness}

\begin{definition}[semantic judgement]\label{def:sem}
$\models p\;\Gamma\,\{C\}\,\Gamma'$ holds when (i) every $x$ with
$p \not\sqsubseteq \Gamma'(x)$ is unchanged by $C$, and (ii) for every legal
$g$ and every $W$, $\pi_W \circ \sem{C}\circ\hat g :
\bigotimes_{w\in W}\Gamma'(w) \imp \mathrm{Id}$.
\end{definition}

\begin{remark}[relation to the source semantics]\label{rem:defs}
\Cref{def:sem} is~\cite[Def.~3]{HuntSands21} lifted from a single function to
an environment and relaxed to partial functions. The lifting is what lets a
judgement speak about every observed set at once, and the relaxation is the
usual termination-insensitive reading: $\imp$ constrains a pair of secrets
only where both runs converge, while a witness is still an equivalence
relation on all of $D$. Nothing else changes, and the composition,
conditional and pairing lemmas of the source are used as given.
\end{remark}

\begin{theorem}[soundness]\label{thm:t0}
$\derM{p}{\Gamma}{C}{\Gamma'}$ implies $\models p\;\Gamma\,\{C\}\,\Gamma'$.
Consequently a derivation with $\bigotimes_{w\in W}\Gamma'(w) \sqsubseteq
\mathit{Pol}$ establishes compliance with $(\sigma,W,\mathit{Pol})$ whenever
$\sigma \sqsubseteq \Gamma$.
\end{theorem}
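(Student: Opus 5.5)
The plan is to prove the first claim by induction on the derivation of $\derM{p}{\Gamma}{C}{\Gamma'}$, with the induction hypothesis strengthened so that clause (ii) of \Cref{def:sem} holds for an arbitrary initial store map satisfying the \emph{precondition}, not only for $\hat g$ itself. Concretely, I will induct on the statement: for every partial map $\phi : D \rightharpoonup \mathit{Store}$ such that $\pi_V\circ\phi : \bigotimes_{v\in V}\Gamma(v)\imp\mathrm{Id}$ for every $V\subseteq\Var$, and every $W$, we have $\pi_W\circ\sem{C}\circ\phi : \bigotimes_{w\in W}\Gamma'(w)\imp\mathrm{Id}$. We also require that every $x$ with $p\not\sqsubseteq\Gamma'(x)$ is unchanged by $C$. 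Legal mediators give such a $\phi=\hat g$ with $\Gamma=\sigma$: the bound $\pi_V\circ\hat g : \bigotimes_{v\in V}\sigma(v)\imp\mathrm{Id}$ follows from legality of each $g_v$ by the pairing lemma \cite[Lem.~8]{HuntSands21}. The consequence then follows from weakening \cite[Prop.~1]{HuntSands21}, using monotonicity of $\ten$ and $\sigma\sqsubseteq\Gamma$ for the precondition, and $\bigotimes_W\Gamma'(w)\sqsubseteq\mathit{Pol}$ for the conclusion.

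The rule cases run as follows.

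\textsc{Skip} and \textsc{Sub} are immediate from weakening and monotonicity of $\ten$.

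\textsc{Seq} composes the two hypotheses. This is the chain rule \cite[Lem.~5]{HuntSands21} in the form where the intermediate environment $\Gamma''$ serves as the precondition of the second premise. Clause (i) composes because $p$ is shared.

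\textsc{Assign} handles $W$ in two parts. For $W\not\ni x$ nothing changes. For $W\ni x$, the new value of $x$ is a function of the values of $\fv(E)$. So $\pi_W$ of the final store is a function of $\pi_{(W\setminus\{x\})\cup\fv(E)}$ of the initial store, which by hypothesis is bounded by $\bigotimes_{W\setminus x}\Gamma\ten t$. Since $1\sqsubseteq p$ (affineness), this is $\sqsubseteq\bigotimes_{W\setminus x}\Gamma\ten p\ten t$. One subtlety needs care: when $\fv(E)$ overlaps $W\setminus\{x\}$, a variable's type appears twice in the tensor. That is harmless here because $\Gamma(v)\sqsubseteq\Gamma(v)\ten\Gamma(v)$ by affineness, and the over-count only weakens the bound. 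Clause (i) is trivial, since $x$ gets type $p\ten t\sqsupseteq p$.

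\textsc{If} proceeds in three steps.
\begin{itemize}
\item The guard value $b=\sem{E}\circ\phi$ satisfies $b : t\imp\mathrm{Id}$.
\item On the region where $b$ is true, the run coincides with $C_1$. The hypothesis for the premise gives the bound $\bigotimes_W\Gamma_1'$ there, and similarly $\bigotimes_W\Gamma_2'$ on the false region.
\item Clause (i) of the premises, at pc $p\ten t$, says that variables not dominating $p\ten t$ are unchanged. This is what lets variables outside $\asgn$ keep their $\Gamma$ type in both branches.
\end{itemize}
\cite[Lem.~6]{HuntSands21} then combines these into the shape $t\ten(\mathbb{R}_1\vee\mathbb{R}_2)$. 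The joint-observation version must distribute the per-variable $\vee$ through the tensor over $W$. For this I use $\bigotimes_{x}(a_x\vee b_x)\sqsupseteq\bigotimes_x a_x\vee\bigotimes_x b_x$. Since $t\sqsubseteq t\ten t$, enough copies of $t$ are available to prefix each factor in $W\cap\asgn$. For $W$ disjoint from $\asgn$, the values are untouched and the precondition bound already applies.

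\textsc{While} is handled by unfolding the loop. Termination-insensitively, the loop's meaning is the union over $k$ of the runs that execute exactly $k$ iterations. The region "exactly $k$ iterations" is determined by $k+1$ successive guard values, each bounded by $t$ via the invariant $\Gamma$ (the premise keeps $\Gamma$ stable at pc $p\ten t$). Its contribution is therefore bounded by $t^{k+1}\ten\bigotimes_W\Gamma$ by repeated \textsc{If}-style reasoning. Joining over $k$ gives $t^{*}\ten\bigotimes_W\Gamma$, using distributivity of $\ten$ over joins, and then $\sqsubseteq\bigotimes_W\Gamma^\dagger$ since $t^*\sqsubseteq t^*\ten t^*$.

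I expect this \textsc{While} case to be the main obstacle. It needs an infinitary version of the conditional lemma: a partition of the converging domain into countably many (here finitely many, as $D$ is finite) guard-determined regions, with the tiling closure in the definition of $\vee$ absorbing the mix across regions (\Cref{rem:perrun}). My first approach is to prove a small lemma by induction on $k$: $\pi_W$ restricted to the set of secrets whose run terminates within $k$ iterations satisfies the bound $\bigvee_{j\le k}t^{j+1}\ten\bigotimes_W\Gamma$. Finiteness of $D$ then makes the chain stabilise.
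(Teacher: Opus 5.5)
Your proof is correct and shares the paper's skeleton: induction on the derivation, coarsening and pairing for \textsc{Assign}, chaining of store maps for \textsc{Seq}, and for \textsc{If} the conditional lemma followed by monotonicity of $\ten$ and $t\sqsubseteq t^{n}$.

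Quantifying over arbitrary $\phi$ with the joint precondition is a repackaging, not a strengthening. Every $\phi$ is $\hat g$ for $g_v=\pi_v\circ\phi$, and per-variable legality already yields every joint bound by \Cref{lem:b1}. What it buys is convenience: pre- and postconditions have the same shape, so \textsc{Seq} and your loop step can feed outputs back in as inputs.

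The genuine difference is \textsc{While}. The paper argues in one shot. It fixes guard witnesses $Q_k\in t$ at every step and forms the single relation $T=\bigsqcup_{k\in K}Q_k\in t^{|K|}$. It then shows the iteration count is constant on the terminating points of each $T$-cell, and mixes the $(T\sqcup R_k)$-cells.

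Your unrolling also works, provided two things hold. The induction on $k$ must be stated for every $\phi$ satisfying the joint $\Gamma$-precondition. Its step must peel off the \emph{first} iteration. Restricting $\phi$ to the fibre $b_0=\mathrm{true}$ and applying the body premise gives a map with the same precondition. So with $G=\bigotimes_{w\in W}\Gamma(w)$, one application of the conditional lemma with guard $b_0$ yields $t\ten(t^{k+1}\ten G\vee G)=t^{k+2}\ten G$. Peeling off the last iteration instead would need the regions ``within $k$'' and ``exactly $k{+}1$'' to be cut out by one common guard witness, which is the paper's $T$ in disguise.

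You need the conditional lemma for partial branches. Since the unrolled guards are evaluated through partial maps, you also need it for a partial guard. \Cref{lem:b3} assumes a total guard, but the same construction goes through if each $Q$-cell is sent to the fibre containing its defined points; Hunt--Sands' Lemma~6 as stated does not give you this.

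The paper's route needs no auxiliary induction and makes explicit that pinning the iteration count is all the guard information costs. Yours reuses the conditional lemma and never builds a fresh mix.

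Three local repairs are needed.
\begin{enumerate}
\item In \textsc{While}, $t^{*}\ten G\sqsubseteq\bigotimes_{w\in W}\Gamma^{\dagger}(w)$ holds only if $W\cap\asgn(C)\neq\emptyset$. When $W$ misses $\asgn(C)$, the observation is the initial one restricted to terminating inputs and must be bounded directly, as you did for \textsc{If}.
\item In \textsc{If} with $W$ containing both assigned and unassigned variables, apply the conditional lemma to the assigned part only and pair with the untouched components by \Cref{lem:b1}. The reason is that a premise's $\Gamma_i'(w)$ at an unassigned $w$ may lie strictly above $\Gamma(w)$. Those components stay fixed because they are not assigned, not because of the premises' clause~(i).
\item ``$p$ is shared'' does not close clause~(i) for \textsc{Seq}. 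From $p\not\sqsubseteq\Gamma'(x)$ you still need $p\not\sqsubseteq\Gamma''(x)$ before concluding that $C_1$ leaves $x$ alone. The paper discharges (i) once, syntactically, via \Cref{lem:ctx}. That also covers \textsc{While}, where you did not address (i).
\end{enumerate}
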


The induction is in Appendix~\ref{app:t0}. Sequential composition is literally
composition of mediators. The conditional is proved from a partial-function
form of~\cite[Lem.~6]{HuntSands21}, restated and proved in the $\imp$ style
so that the appendix is self-contained. The loop assembles a witness by tiling the space of secrets by iteration count, which is
\Cref{rem:perrun} used as a proof device.

\subsection{Specialisation and renaming}

The engine of~\cite{HuntSands06} moves derivations along a map between flow
lattices. Over a quantale the requirement on that map sharpens.

\begin{lemma}[renaming]\label{lem:qren}
Let $f : M_1 \to M'_1$ be monotone, lax monoidal
($f(a)\ten f(b) \sqsubseteq f(a\ten b)$) and lax unital
($1 \sqsubseteq f(1)$). Then $\derM{p}{\Gamma}{C}{\Gamma'}$ implies
$f(p) \vdash_{M'} f\circ\Gamma\,\{C\}\,f\circ\Gamma'$.
\end{lemma}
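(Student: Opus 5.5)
The proof is by induction on the derivation of $\derM{p}{\Gamma}{C}{\Gamma'}$, one case per rule of \Cref{fig:rules}. The point throughout is that $f$ is only lax, so the image of a rule instance is in general not itself a rule instance in $M'$. I will therefore apply the corresponding $M'$-rule to the mapped premises and then close the gap with \textsc{Sub} in $M'$. \textsc{Sub} allows lowering the context and precondition and raising the postcondition, and that is exactly the direction laxity produces.

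Two preliminary facts are needed. First, iterating lax monoidality and using lax unitality for the empty product gives
\[
\textstyle\bigotimes_{v\in X} f(\Gamma(v)) \sqsubseteq f\bigl(\bigotimes_{v\in X}\Gamma(v)\bigr)
\]
for every finite $X$, hence $t_E(f\circ\Gamma)\sqsubseteq f(t_E(\Gamma))$. Second, monotonicity gives $f(a)\vee f(b)\sqsubseteq f(a\vee b)$. Since $f$ lands in $M'_1$, all mapped environments and program counters are legitimate in $M'$.

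Most cases are routine.
\begin{itemize}
\item \textsc{Skip} is immediate, and \textsc{Seq} follows directly from the two induction hypotheses.
\item \textsc{Sub} follows from monotonicity of $f$.
\item For \textsc{Assign}, the $M'$ rule yields the postcondition $(f\circ\Gamma)[x\mapsto f(p)\ten t_E(f\circ\Gamma)]$. By the first fact and lax monoidality this lies below $f(p\ten t_E(\Gamma))$, so \textsc{Sub} raises it to $f\circ(\Gamma[x\mapsto p\ten t])$.
\item For \textsc{If}, write $t'=t_E(f\circ\Gamma)$. The induction hypothesis gives the branch judgements at precondition $f(p\ten t)\sqsupseteq f(p)\ten t'$, and \textsc{Sub} lowers them to precondition $f(p)\ten t'$, as the $M'$ rule requires. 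Its merged type for an assigned $x$ is $t'\ten(f\Gamma_1'(x)\vee f\Gamma_2'(x))$. By the second fact and lax monoidality this is below $f(t)\ten f(\Gamma_1'(x)\vee\Gamma_2'(x))\sqsubseteq f(\Gamma'(x))$, so a final \textsc{Sub} finishes. Unassigned variables map to $f(\Gamma(x))$ on both sides.
\end{itemize}

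The case I expect to need the most care is \textsc{While}, because of the idempotent closure. Again \textsc{Sub} lowers the mapped body judgement to precondition $f(p)\ten t'$ with the invariant $f\circ\Gamma$ unchanged, so the $M'$ rule applies and yields $t'^{*}\ten f(\Gamma(x))$ for assigned $x$. We then need $t'^{*}\ten f(\Gamma(x))\sqsubseteq f(t^{*}\ten\Gamma(x))$, and for this it suffices to show $f(t)^{*}\sqsubseteq f(t^{*})$. I would prove this power by power:
\[
f(t)^{n}\sqsubseteq f(t^{n})\sqsubseteq f(t^{*}).
\]
The first inequality holds by induction on $n$, using lax unitality at $n=0$ and lax monoidality at each step. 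The second is monotonicity. Since $t'\sqsubseteq f(t)$ and joins are monotone, we get $t'^{*}\sqsubseteq f(t)^{*}\sqsubseteq f(t^{*})$, and one more use of lax monoidality gives $t'^{*}\ten f(\Gamma(x))\sqsubseteq f(t^{*})\ten f(\Gamma(x))\sqsubseteq f(t^{*}\ten\Gamma(x))$. A final \textsc{Sub} then delivers $f\circ\Gamma^{\dagger}$. Note that $f$ is not assumed to preserve infinite joins, so the argument must go through each power separately rather than commute $f$ with $\lub_n$. This is the only place where lax unitality does essential work beyond the empty-expression case of \textsc{Assign}.
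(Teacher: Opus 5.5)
Your proof is correct and follows the paper's own argument: induction on the derivation, \textsc{Sub} in $M'$ absorbing the laxity gaps in \textsc{Assign} and \textsc{If}, and the power-by-power bound $f(t)^{n}\sqsubseteq f(t^{n})\sqsubseteq f(t^{*})$ handling the idempotent closure in \textsc{While}. One aside is inaccurate without affecting the proof: your closing remark about lax unitality doing ``essential work'' is off, because the hypothesis is redundant in this statement, since $f$ takes values in $M'_1$ and so $1\sqsubseteq f(a)$ holds for every $a$, in particular for $a=1$ and $a=t^{*}$.
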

\begin{proof}
Induction on the derivation. For \textsc{Assign} the
target rule produces $f(p)\ten\bigotimes_v f(\Gamma(v))$, and iterated
laxity bounds this by $f(p\ten t)$, after which \textsc{Sub} weakens. For
\textsc{If}, laxity gives $f(p)\ten t'\sqsubseteq f(p\ten t)$ with
$t' = \bigotimes_v f(\Gamma(v))$, so \textsc{Sub} moves the hypotheses to the
context the target rule needs, and on the merge
$t'\ten(f\Gamma_1'\vee f\Gamma_2')(x)\sqsubseteq
f\bigl(t\ten(\Gamma_1'\vee\Gamma_2')(x)\bigr)$ by monotonicity and laxity.
For \textsc{While}, $f(t)^{*}\sqsubseteq f(t^{*})$ because
$f(t)^{n}\sqsubseteq f(t^{n})\sqsubseteq f(t^{*})$, and the invariant is
preserved by monotonicity. \textsc{Seq} and \textsc{Sub} are immediate.
\end{proof}

Monotonicity alone sufficed in the lattice setting because the single
combination operator was a join, for which
$\bigsqcup_v f(\Gamma(v)) \sqsubseteq f(\bigsqcup_v\Gamma(v))$ holds for
free. A tensor is extra structure over the order and gives neither
inequality, so laxity has to be assumed. \Cref{thm:t3} shows it cannot be
dropped. \Cref{lem:qren} is what the rest of the paper means by
``specialisation supports renaming''.

\subsection{The algorithmic system}\label{sec:alg}

Folding \textsc{Sub} into the other rules gives a syntax-directed system
computing a least postcondition $\Alg^{C}_M(p,\Gamma)$, namely $\Gamma$ for
$\textsf{skip}$; $\Gamma[x\mapsto p\ten t_E(\Gamma)]$ for $x := E$;
$\Alg^{C_2}(p,\Alg^{C_1}(p,\Gamma))$ for $C_1;C_2$; for
$\textsf{if}\;E\;C_1\;C_2$ the environment sending assigned $x$ to
$t\ten(\Gamma_1'(x)\vee\Gamma_2'(x))$, where $t = t_E(\Gamma)$ and
$\Gamma_i' = \Alg^{C_i}(p\ten t,\Gamma)$, and every other $x$ to $\Gamma(x)$;
and for $\textsf{while}\;E\;C$ the environment $\Gamma^{*\dagger}$ where
\[
\Gamma^{*} \;=\; \mathrm{lfp}\,\bigl(\Theta\mapsto\Gamma\vee
  \Alg^{C}(p\ten t_E(\Theta),\Theta)\bigr)
\]
and $\dagger$ is as in \Cref{fig:rules} with $t = t_E(\Gamma^{*})$.

\begin{proposition}\label{prop:a1}
For every commutative complete quantale $M$, $\Alg^{C}_M(p,\Gamma)$ is
monotone in $(p,\Gamma)$, derivable, and below every derivable postcondition.
If $f$ is a quantale homomorphism then $f\circ\Alg^{C}_{M} =
\Alg^{C}_{M'}\circ f$; if $f$ satisfies the premises of \Cref{lem:qren} then
$\Alg^{C}_{M'}(fp, f\circ\Gamma) \sqsubseteq f\circ\Alg^{C}_{M}(p,\Gamma)$.
\end{proposition}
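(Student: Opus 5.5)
All claims go by structural induction on $C$, with the \textsf{while} case handled through the least fixed point $\Gamma^{*}$. I prove the four claims in order: monotonicity, derivability, minimality, and finally the two commutation statements, the last of which uses the first two.

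\emph{Monotonicity and well-definedness.} The operations $\ten$, $\vee$ and $m\mapsto m^{*}$ are monotone in a commutative complete quantale, so $t_E$ is monotone. The \textsf{skip}, assignment, sequence and conditional cases then follow directly from the induction hypothesis. In the \textsf{while} case the operator $\Phi_\Gamma(\Theta)=\Gamma\vee\Alg^{C}(p\ten t_E(\Theta),\Theta)$ is monotone in $\Theta$ by the hypothesis for the body. Environments into $M_1$ form a complete lattice, so by Knaster--Tarski the fixed point exists. Because $\Phi$ is also monotone in $(p,\Gamma)$, so is $\mathrm{lfp}\,\Phi$, and hence so is $\Gamma^{*\dagger}$. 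Affineness is preserved, since $1\sqsubseteq m$ and $1\sqsubseteq n$ give $1\sqsubseteq m\ten n$.

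\emph{Derivability.} Each clause mirrors its rule in \Cref{fig:rules}, with one exception. For \textsf{while}, the rule needs an invariant premise $p\ten t\vdash\Gamma^{*}\{C\}\Gamma^{*}$. It holds because $\Gamma^{*}$ is a fixed point, so $\Alg^{C}(p\ten t_E(\Gamma^{*}),\Gamma^{*})\sqsubseteq\Gamma^{*}$. The induction hypothesis makes the left side derivable, and \textsc{Sub} weakens it to $\Gamma^{*}$. \textsc{While} then yields $\Gamma^{*\dagger}$ from precondition $\Gamma^{*}$, and \textsc{Sub} strengthens the precondition to $\Gamma\sqsubseteq\Gamma^{*}$.

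\emph{Minimality.} I prove the stronger claim: if $p\vdash\Gamma\{C\}\Gamma'$ then $\Alg^{C}(p,\Gamma)\sqsubseteq\Gamma'$. The induction is on derivations, and monotonicity absorbs \textsc{Sub}. The main obstacle is \textsf{while}. A derivation ends, up to \textsc{Sub}, with an invariant $\Delta\sqsupseteq\Gamma$ satisfying $p\ten t_E(\Delta)\vdash\Delta\{C\}\Delta$. By the hypothesis on $C$, $\Phi_\Gamma(\Delta)\sqsubseteq\Delta$, so $\Delta$ is a prefixed point and $\Gamma^{*}\sqsubseteq\Delta$. Then $t_E(\Gamma^{*})\sqsubseteq t_E(\Delta)$, and $(\cdot)^{*}$ is monotone, so $\Gamma^{*\dagger}\sqsubseteq\Delta^{\dagger}$ pointwise; assigned variables compare because $\ten$ is monotone. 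One care point is that the \textsc{Sub} weakening of $p$ inside the loop premise has to be folded into the invariant's context before the prefixed-point argument is applied.

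\emph{Specialisation.} The inequality for lax $f$ follows from the first three parts. \Cref{lem:qren} turns the derivation of $\Alg^{C}_M(p,\Gamma)$ into an $M'$-derivation with precondition $f\circ\Gamma$, context $f(p)$ and postcondition $f\circ\Alg^{C}_M(p,\Gamma)$. Minimality in $M'$ then gives $\Alg^{C}_{M'}(fp,f\circ\Gamma)\sqsubseteq f\circ\Alg^{C}_M(p,\Gamma)$.

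For a homomorphism $f$, which preserves $\ten$, $1$ and arbitrary joins, equality goes by direct induction. Each clause commutes because $f(t_E(\Gamma))=t_E(f\circ\Gamma)$, $f(a\vee b)=fa\vee fb$ and $f(m^{*})=f(m)^{*}$, the last since $m^{*}$ is a join of powers. The \textsf{while} case needs $f(\mathrm{lfp}\,\Phi)=\mathrm{lfp}\,\Phi'$. I obtain it from the transfinite Kleene iteration $\Theta_0=\bot$, $\Theta_{\alpha+1}=\Phi(\Theta_\alpha)$, with joins at limit stages. On affine environments the start must be $\Theta_0=\Gamma$ rather than $\bot$. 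Since $f$ preserves joins, $f\circ\Phi=\Phi'\circ f$ by the induction hypothesis, and $f$ commutes with every stage. Hence $f$ carries the limit to the limit, which gives the required equality.
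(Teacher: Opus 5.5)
Your proposal is correct and follows the paper's proof essentially step for step: monotonicity (with well-definedness) by structural induction and Knaster--Tarski, the fixed-point inequality plus \textsc{Sub} for derivability of the loop, and the invariant as a pre-fixed point for minimality. The transport clauses also match, with the transfinite iteration commuting with a join-preserving $f$ for homomorphisms and \Cref{lem:qren} followed by minimality in $M'$ for the lax case; your only overstatement is that the iteration ``must'' start at $\Gamma$ (the constant-$1$ environment, which a homomorphism preserves, would do as well), though starting at $\Gamma$ is valid since $\Gamma\sqsubseteq\Phi(\Gamma)$ and $\Gamma\sqsubseteq\mathrm{lfp}\,\Phi$.
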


The loop takes its invariant as a Knaster--Tarski fixed point, so no chain
condition is needed. This matters because one of the objects below has no
ascending chain condition. Write $\Delta_C$, $\Delta^{\mathbb{N}}_C$ for the
algorithmic postconditions in the two universal objects, from
$\Delta_0(v) = \hat v$ and context $1$, and
$\Jset_W = \bigotimes_{w\in W}\Delta_C(w)$,
$\Jset^{\mathbb{N}}_W = \bigotimes_{w\in W}\Delta^{\mathbb{N}}_C(w)$.

\section{Three universal objects}\label{sec:objects}

\begin{definition}\label{def:objects}
$\Umul(\Var)$ is the set of downsets of $(\mathbb{N}^{\Var},\leq)$ with
$\lub = \bigcup$, $\mathbb{M}\ten\mathbb{N} = \dc\{m+n\}$, $1 = \dc 0$;
$\Uset(\Var)$ is the set of downsets of $(\mathcal{P}(\Var),\subseteq)$ with
$\lub = \bigcup$, $\mathbb{X}\ten\mathbb{Y} = \dc\{X\cup Y\}$,
$1 = \dc\emptyset$; $\Uloc(\Var)$ is $\mathcal{P}(\Var)$ with
$\lub = \bigcup$ and $\ten = \cup$, the free locale on $\Var$ and the type
object of~\cite{HuntSands06}. Generators are $\hat v = \dc e_v$,
$\dc\{v\}$, $\{v\}$.
\end{definition}

The first two are instances of the standard downset construction on an
ordered monoid, with tensor given by Day convolution. Free quantales in this
sense are textbook material~\cite{Rosenthal90}. What matters here is the side
condition each imposes.

\begin{proposition}[universal properties]\label{prop:t1}
Let $M$ be a commutative quantale.
\begin{enumerate}[label=(\alph*),leftmargin=1.4em]
\item For every $\sigma : \Var \to M_1$ there is a unique join, tensor and
unit preserving $\aln_\sigma : \Umul \to M$ with $\aln_\sigma(\hat v) =
\sigma(v)$, namely $\aln_\sigma(\mathbb{M}) = \lub_{m\in\mathbb{M}}
\bigotimes_v \sigma(v)^{m(v)}$. Affineness is the only requirement.
\item Such an $\alpha_\sigma : \Uset \to M$ exists iff every $\sigma(v)$ is
$\ten$-idempotent, and then $\alpha_\sigma(\mathbb{X}) =
\lub_{X\in\mathbb{X}}\bigotimes_{v\in X}\sigma(v)$.
\end{enumerate}
\end{proposition}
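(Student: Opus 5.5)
For (a), I would use the standard presentation of $\Umul(\Var)$ as the downset quantale of the ordered commutative monoid $(\mathbb{N}^{\Var},+,0,\leq)$ and argue in two steps, uniqueness and then existence.

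\textbf{Uniqueness.} Every downset satisfies $\mathbb{M}=\bigcup_{m\in\mathbb{M}}\dc m$. Each principal downset factors through the generators as
\[
\dc m=\bigotimes_v \hat v^{\,m(v)},
\]
since $\dc\{a\}\ten\dc\{b\}=\dc\{a+b\}$. Hence any join, tensor and unit preserving map sending $\hat v$ to $\sigma(v)$ is forced to equal the displayed formula. The empty join goes to $\bot$, and $\dc 0=1$ goes to $1$.

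\textbf{Existence.} I would take the formula as the definition and check the three preservation properties.
\begin{itemize}
\item Joins: a union of downsets is sent to the join of the images, by associativity of $\lub$.
\item Unit: $\aln_\sigma(\dc 0)=\lub_{m\leq 0}1=1$.
\item Tensor: distributivity of $\ten$ over joins gives $\aln_\sigma(\mathbb{M})\ten\aln_\sigma(\mathbb{N})=\lub_{m,n}\bigotimes_v\sigma(v)^{m(v)+n(v)}$. This equals $\aln_\sigma(\dc\{m+n\})$ once I know the formula is insensitive to taking downsets.
\end{itemize}

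That insensitivity is the one place affineness enters. If $m'\leq m$ then $\bigotimes\sigma(v)^{m'(v)}\sqsubseteq\bigotimes\sigma(v)^{m(v)}$, because $1\sqsubseteq\sigma(v)$ lets each missing factor be replaced by $1$ monotonically. So the generators' images may be any affine elements, and nothing further is required. I would also note that affineness is needed even to make $\aln_\sigma$ well defined on $\dc e_v$, which contains $0$: we need $1\sqsubseteq\sigma(v)$ so that $\aln_\sigma(\hat v)=1\vee\sigma(v)=\sigma(v)$. This is exactly why $\sigma$ is taken into $M_1$.

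For (b), $\Uset$ is the downset quantale of $(\mathcal{P}(\Var),\cup,\emptyset,\subseteq)$. In it the generators are idempotent: $\dc\{v\}\ten\dc\{v\}=\dc\{\{v\}\cup\{v\}\}=\dc\{v\}$.
\begin{itemize}
\item Necessity: any tensor-preserving $\alpha$ with $\alpha(\dc\{v\})=\sigma(v)$ forces $\sigma(v)\ten\sigma(v)=\sigma(v)$.
\item Sufficiency: I would reduce to (a) via the quotient $q:\Umul\to\Uset$ induced by the support map $\mathbb{N}^{\Var}\to\mathcal{P}(\Var)$. This map is a monotone monoid homomorphism, and its direct-image-then-downclose extension preserves joins, tensor and unit. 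When each $\sigma(v)$ is idempotent (and affine), $\sigma(v)^{k}=\sigma(v)$ for every $k\geq1$, so $\aln_\sigma(\dc m)$ depends only on $\supp m$. Hence $\aln_\sigma$ factors as $\alpha_\sigma\circ q$ with $\alpha_\sigma$ given by the stated formula.
\end{itemize}
Alternatively, one can verify directly, as in (a), that the formula preserves the operations. The key identity for the tensor is $\bigotimes_{X\cup Y}\sigma=\bigotimes_X\sigma\ten\bigotimes_Y\sigma$, which holds because each overlapping factor is idempotent. Downset-insensitivity again comes from affineness.

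Uniqueness in (b) is as in (a). I expect the main obstacle to be bookkeeping on that tensor identity and making sure the overlap argument is stated cleanly. I would handle it by splitting $X\cup Y$ into the disjoint parts $X\setminus Y$, $X\cap Y$, $Y\setminus X$ and using $\sigma(v)\ten\sigma(v)=\sigma(v)$ on the middle part.
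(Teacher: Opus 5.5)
Your proof is correct. Uniqueness in both parts and necessity in (b) match the paper's arguments exactly; the differences are in the existence halves.

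For (a), the paper does not check the preservation laws. It observes that a monoid homomorphism $h$ out of $\mathbb{N}^{\Var}$ is monotone iff $1\sqsubseteq\sigma(v)$ for all $v$, via $h(m') = h(m)\ten h(m'-m)$, and then cites Rosenthal's theorem that the Day-convolution downset quantale is free on an ordered monoid. Your downset-insensitivity step is that same observation, and your by-hand check of joins, unit and tensor just unfolds the cited freeness. So your version is self-contained at no real cost. Your remark that $\aln_\sigma(\hat v) = 1\vee\sigma(v)$ also shows why affineness is necessary, not just sufficient.

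For sufficiency in (b), the paper checks tensor preservation directly by absorbing the repeated factors from $X\cap Y$, which is your fallback route. Your primary route, factoring $\aln_\sigma$ through the support quotient $q$, is \Cref{prop:decomp} read in the other direction. It yields the identity $\alpha_\sigma\circ q = \aln_\sigma$ for idempotent labels, which is the ``crushing has no effect'' step the paper later needs in \Cref{thm:dichotomy}. To transfer the homomorphism property from $\aln_\sigma$ to $\alpha_\sigma$ along that factorisation, you need $q$ to be surjective, and you should state this explicitly. It holds because $q(\dc\{\chi_X \mid X\in\mathbb{X}\}) = \mathbb{X}$ for every downset $\mathbb{X}$.
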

\begin{proof}
A monotone monoid homomorphism from
$(\mathbb{N}^{\Var},+,\leq)$ to $(M,\ten,\sqsubseteq)$ is determined by the
images $\sigma(v)$ of the generators, and monotonicity is equivalent to
$1\sqsubseteq\sigma(v)$ for all $v$ because $m\leq m'$ implies
$h(m') = h(m)\ten h(m'-m)$. The downset quantale with Day convolution is the
free quantale on that ordered monoid~\cite{Rosenthal90}, which gives (a);
uniqueness also follows directly from $\mathbb{M} =
\lub_{m\in\mathbb{M}}\dc m$ and $\dc m = \bigotimes_v\hat v^{m(v)}$. For (b)
the generating monoid is $(\mathcal{P}(\Var),\cup)$, in which
$X\cup X = X$, so any monoid homomorphism has idempotent images; conversely
if all $\sigma(v)$ are idempotent then in
$\bigotimes_{v\in X\cup Y}\sigma(v)$ the repeated factors coming from
$X\cap Y$ can be absorbed, so the displayed formula preserves $\ten$, and it
preserves joins and the unit by construction.
\end{proof}

Freeness asks for a monotone monoid homomorphism out of the generating
ordered monoid. Monotonicity forces affineness in both cases, and idempotence of
$\cup$ forces idempotence of the images in the second. In a lattice the
second condition is vacuous, which is why one object could play both roles.
The formula for $\alpha_\sigma$ makes sense for every affine $\sigma$. What
idempotence buys is that it preserves the tensor.

\begin{lemma}[the two quotients]\label{lem:q}
$q(\mathbb{M}) = \{\supp m \mid m \in \mathbb{M}\}$ and
$u(\mathbb{X}) = \bigcup_{X\in\mathbb{X}} X$ are surjective quantale
homomorphisms $\Umul \to \Uset \to \Uloc$. The first has a left adjoint
$s(\mathbb{X}) = \dc\{\chi_X \mid X\in\mathbb{X}\}$ with $q\circ s =
\mathrm{id}$ and $s\circ q \sqsubseteq \mathrm{id}$, and a right adjoint
$r(\mathbb{X}) = \{m \mid \supp m \in \mathbb{X}\}$; $s$ is oplax and is not
a homomorphism.
\end{lemma}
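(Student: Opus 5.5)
Throughout I write $\dc_{\mathbb{N}}$ for downsets in $\mathbb{N}^{\Var}$ and $\dc$ for downsets in $\mathcal{P}(\Var)$. The plan is to verify each claim of \Cref{lem:q} directly from \Cref{def:objects}, using that every element of $\Umul$ or $\Uset$ is a union of principal downsets. First I check that $q$ lands in $\Uset$. If $X \subseteq \supp m$, then $\chi_X \leq m$ pointwise after truncating $m$ to $1$ on $X$ and $0$ elsewhere, so $\chi_X\in\mathbb{M}$ and $X = \supp\chi_X$. Hence $q(\mathbb{M})$ is a downset.

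Join preservation of $q$ is immediate, since images commute with unions. Unit preservation is $q(\dc_{\mathbb{N}} 0) = \{\emptyset\} = \dc\emptyset$. For the tensor, $\supp(m+n) = \supp m\cup\supp n$, and $\supp$ is monotone. So the supports of elements below some $m+n$ are exactly the subsets of some $\supp m\cup\supp n$, and this gives $q(\mathbb{M}\ten\mathbb{N}) = q(\mathbb{M})\ten q(\mathbb{N})$. The same pattern works for $u$: unions commute with unions, $u(\dc\emptyset)=\emptyset$, and $u(\dc\{X\cup Y\})$ is the union of all $X\cup Y$, which equals $u(\mathbb{X})\cup u(\mathbb{Y})$. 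The empty-set corner cases need a line each: if either factor is empty, the tensor is empty. Surjectivity of $q$ follows from $q\circ s=\mathrm{id}$, since $\supp\chi_X = X$ and elements below $\chi_X$ have support inside $X$. Surjectivity of $u$ follows from $u(\dc\{X\}) = X$.

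For the adjunctions I verify the Galois conditions pointwise. For $s \dashv q$: $s(\mathbb{X})\subseteq\mathbb{M}$ iff $\chi_X\in\mathbb{M}$ for all $X\in\mathbb{X}$, iff $\mathbb{X}\subseteq q(\mathbb{M})$. The backward direction of the last step uses the truncation argument above. From the adjunction I get $s\circ q\sqsubseteq\mathrm{id}$, and $q\circ s=\mathrm{id}$ I have already shown. For $q\dashv r$: $q(\mathbb{M})\subseteq\mathbb{X}$ iff every $m\in\mathbb{M}$ has $\supp m\in\mathbb{X}$, iff $\mathbb{M}\subseteq r(\mathbb{X})$. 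I also need $r(\mathbb{X})$ to be a downset, which holds because $\supp$ is monotone and $\mathbb{X}$ is down-closed.

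Oplaxity of $s$ means $s(\mathbb{X}\ten\mathbb{Y})\sqsubseteq s(\mathbb{X})\ten s(\mathbb{Y})$, and it holds because $\chi_{X\cup Y}\leq\chi_X+\chi_Y$. (Alternatively, a left adjoint of a strong monoidal map is oplax by doctrinal adjunction, but the direct inequality is cleaner.) Failure of the homomorphism property is witnessed by $\mathbb{X}=\mathbb{Y}=\dc\{v\}$. There $s(\mathbb{X})\ten s(\mathbb{Y})$ contains $2e_v$, while $s(\mathbb{X}\ten\mathbb{Y}) = \dc_{\mathbb{N}} e_v$ does not. No step is a real obstacle. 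The only point needing care is the truncation fact that $\mathbb{M}$ contains $\chi_{\supp m}$ whenever it contains $m$, which both the downset property of $q(\mathbb{M})$ and the adjunction $s\dashv q$ rest on.
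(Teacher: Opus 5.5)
Your argument is the paper's proof, step for step. Both show downward closure of $q(\mathbb{M})$ by truncating $m$ below a subset of its support, and get the tensor from $\supp(m+n)=\supp m\cup\supp n$. Both obtain the Galois condition for $s\dashv q$ from $\chi_{\supp m}\le m$, and both use $2e_v$ to show $s$ is not a homomorphism. The differences are small. You check $q\dashv r$ directly, where the paper infers $r$ from join preservation. You also actually verify oplaxity of $s$ via $\chi_{X\cup Y}\le\chi_X+\chi_Y$, which the paper asserts without proof.

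One claim in your proposal is false: the empty corner case for $u$ does not go through. In $\Uloc$ the tensor is $\cup$, and its unit $\emptyset$ is also the bottom, so the bottom is not absorbing. For $\mathbb{X}=\emptyset$ and $\mathbb{Y}=\dc\{v\}$ you get $u(\mathbb{X}\ten\mathbb{Y})=u(\emptyset)=\emptyset$, but $u(\mathbb{X})\cup u(\mathbb{Y})=\{v\}$. So ``if either factor is empty, the tensor is empty'' holds in $\Umul$ and $\Uset$, which settles $q$, but it fails in $\Uloc$. Consequently $u$ preserves $\ten$ only on nonempty downsets, i.e.\ on the affine part $\Uset_1$.

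This is a defect of the statement rather than of your method. $\Uloc$ as defined is not a quantale, since $\cup$ does not distribute over the empty join. The paper's proof passes over the point in the same way when it writes $\bigcup(\mathbb{X}\ten\mathbb{Y})=u\mathbb{X}\cup u\mathbb{Y}$. The issue is harmless for the paper's uses, because environments are affine. You should still state the restriction to nonempty downsets, or adjoin a fresh absorbing bottom to $\Uloc$, rather than present the corner case as routine.
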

\begin{proof}
$q$ and $u$ preserve unions by construction;
$\supp(m+n) = \supp m\cup\supp n$ and $\bigcup(\mathbb{X}\ten\mathbb{Y}) =
u\mathbb{X}\cup u\mathbb{Y}$ give the tensor, and the units match. For $q$,
note first that its image is already downward closed, since $m\in\mathbb{M}$
and $X\subseteq\supp m$ give $m|_X\leq m$ with support $X$. The adjunction
$s\mathbb{X}\subseteq\mathbb{M}$ iff $\mathbb{X}\subseteq q\mathbb{M}$ holds
because $\chi_X\leq m$ whenever $\supp m = X$; $q\circ s = \mathrm{id}$ and
$s\circ q\sqsubseteq\mathrm{id}$ follow, and $r$ is the right adjoint since
$q$ preserves joins. Finally $s(\hat v\ten\hat v) = s(\hat v)\subsetneq
\dc(2e_v) = s(\hat v)\ten s(\hat v)$.
\end{proof}

Reading the chain from the left, $q$ forgets multiplicity, that is
same-source correlation, and $u$ forgets which branch a dependency came from,
that is disjunction.

\begin{remark}[in the vocabulary of abstract interpretation]\label{rem:disjcomp}
The passage from $\Uloc$ to $\Uset$ is the disjunctive completion of the
lattice-era domain~\cite{CousotCousot79,FileRanzato99}: downsets of
$\mathcal{P}(\Var)$ are exactly the disjunctive completion of
$\mathcal{P}(\Var)$. What makes the completion pay here is on the concrete
side. \cite[\S V]{HuntSands21} points out that completing the domain achieves
nothing if a disjunction of abstract points is read as a union of concrete
ones, because the kernel of a conditional lies in neither branch's set; the
join of $\QoI$ is tiling closure, and a mix of the two branch relations is in
it. The completion and the tiling closure have to arrive together. The
passage from $\Umul$ to $\Uset$ is of a different nature: it is not a
completion but a quotient, and it is what \Cref{thm:nogo} shows no
specialisation can perform on its own.
\end{remark}

\begin{figure}[t]
\centering
\begin{tikzpicture}[>=stealth,font=\small]
  \node (un) at (0,1.35) {$\Umul$};
  \node (us) at (2.5,1.35) {$\Uset$};
  \node (ul) at (5.0,1.35) {$\Uloc$};
  \node (m)  at (2.5,-0.35) {$M$};
  \draw[->>] (un) -- node[above,font=\scriptsize]{$q$}
                    node[below,font=\scriptsize]{forgets reuse} (us);
  \draw[->>] (us) -- node[above,font=\scriptsize]{$u$}
                    node[below,font=\scriptsize]{forgets branches} (ul);
  \draw[->,bend right=35] (us) to node[above,font=\scriptsize,pos=0.5]{$s$} (un);
  \draw[->] (un) -- node[left,font=\scriptsize,pos=0.55]{$\aln_\sigma$ hom.} (m);
  \draw[->] (us) -- node[right,font=\scriptsize,pos=0.5]{$\alpha_\sigma$} (m);
  \draw[->] (ul) -- node[right,font=\scriptsize,pos=0.55]{$\lambda_\sigma$} (m);
\end{tikzpicture}
\caption{The multiset object, the set object and the free locale, with the
two quotients and the section $s$ of $q$. Only $\aln_\sigma$ is a
homomorphism for every affine $\sigma$; the other two are oplax, and
$\alpha_\sigma = \aln_\sigma\circ s$ (\Cref{prop:decomp}). The object named in
the roadmap of~\cite[\S V]{HuntSands21} is $\Uset$; the type object
of~\cite{HuntSands06} is $\Uloc$.}
\label{fig:objects}
\end{figure}

\begin{remark}[the source's own concretisation is a specialisation]
\label{rem:gammaq}
\cite[\S V]{HuntSands21} equips its abstract domain with a concretisation
into $\QoI(\mathit{Sto})$ sending a set of variable-sets to the tiling
closure of the corresponding relations. In the present notation that map is
$\alpha_{\sigma_0}$ at the all-principal labelling
$\sigma_0(v) = \dc\!\sim_v$, where $\sim_v$ is the kernel of reading $v$ from
a store. So the concretisation the source uses to justify its abstract domain
is one member of the family of specialisations studied here, and it lies on
the idempotent side of \Cref{thm:dichotomy}, where the two objects agree. The
disjunctive side is reached only by labellings the source does not
instantiate.
\end{remark}

\begin{proposition}[factorisation]\label{prop:decomp}
For every affine $\sigma$, idempotent or not,
$\alpha_\sigma = \aln_\sigma \circ s$ as maps $\Uset \to M$.
\end{proposition}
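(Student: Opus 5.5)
The plan is a direct computation from the two explicit formulas. Here $\alpha_\sigma$ is taken in the extended sense noted after \Cref{prop:t1}, $\alpha_\sigma(\mathbb{X}) = \lub_{X\in\mathbb{X}}\bigotimes_{v\in X}\sigma(v)$, which is defined for every affine $\sigma$. The map $\aln_\sigma$ is the homomorphism of \Cref{prop:t1}(a), given by $\aln_\sigma(\mathbb{M}) = \lub_{m\in\mathbb{M}}\bigotimes_v\sigma(v)^{m(v)}$.

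First, for $\mathbb{X}\in\Uset$, unfold $s(\mathbb{X}) = \dc\{\chi_X \mid X\in\mathbb{X}\}$ from \Cref{lem:q}. Since $\aln_\sigma$ preserves arbitrary joins and $\dc S = \bigcup_{m\in S}\dc m$, we get
\[
\aln_\sigma(s\mathbb{X}) = \lub_{X\in\mathbb{X}} \aln_\sigma(\dc\chi_X).
\]
So it suffices to show $\aln_\sigma(\dc\chi_X) = \bigotimes_{v\in X}\sigma(v)$ for each finite $X\subseteq\Var$.

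Second, compute $\aln_\sigma(\dc\chi_X)$. By the formula it equals the join over all $m\le\chi_X$ of $\bigotimes_v\sigma(v)^{m(v)}$. Such an $m$ is $\chi_Y$ for some $Y\subseteq X$, because entries of $m$ are $0$ or $1$, and then the term is $\bigotimes_{v\in Y}\sigma(v)$. Affineness, $1\sqsubseteq\sigma(v)$, makes these terms monotone in $Y$: tensoring with extra factors that lie above $1$ can only increase the value. Hence the join is attained at $Y = X$, giving $\bigotimes_{v\in X}\sigma(v)$. Equivalently, one can use $\dc\chi_X = \bigotimes_{v\in X}\hat v$ in $\Umul$ together with tensor-preservation of $\aln_\sigma$ and $\aln_\sigma(\hat v)=\sigma(v)$. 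That route avoids the monotonicity step altogether, and I would present it as the main argument.

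There is no real obstacle here. The only points needing care are, first, that the join in the definition of $s$ is taken before the downward closure, which is harmless because $\aln_\sigma$ preserves joins, and second, the empty cases. For $X=\emptyset$ both sides are $1$. For $\mathbb{X}=\emptyset$ both sides are the bottom element. The decisive remark is that idempotence of $\sigma(v)$ is never used: $s$ sends each set to its characteristic multiset with multiplicity one, so no power $\sigma(v)^2$ ever arises. This is why the identity holds for every affine $\sigma$, even though $\alpha_\sigma$ need not be a homomorphism. The failure of tensor-preservation is carried entirely by $s$, which is oplax but not a homomorphism by \Cref{lem:q}.
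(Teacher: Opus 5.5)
Your proof is correct and is essentially the paper's own one-line computation, which substitutes $s\mathbb{X}$ into the formula for $\aln_\sigma$ and uses $\chi_X(v)\in\{0,1\}$ to read off $\bigotimes_{v\in X}\sigma(v)$, with no idempotence needed. The downward-closure bookkeeping you flag is in fact vacuous: since $\mathbb{X}$ is itself a downset and every $m\leq\chi_X$ equals $\chi_Y$ for some $Y\subseteq X$, the set $s\mathbb{X}$ is already exactly $\{\chi_X \mid X\in\mathbb{X}\}$.
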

\begin{proof}
$\aln_\sigma(s\mathbb{X}) =
\lub_{X\in\mathbb{X}}\bigotimes_v\sigma(v)^{\chi_X(v)} =
\lub_{X\in\mathbb{X}}\bigotimes_{v\in X}\sigma(v) =
\alpha_\sigma(\mathbb{X})$.
\end{proof}

\Cref{prop:decomp} locates the defect precisely. Specialisation itself,
$\aln_\sigma$, is a strict homomorphism for every labelling. All of the
oplaxity comes from $s$, the step that crushes multiplicities to one. When
labels are idempotent, $\aln_\sigma$ absorbs multiplicities on its own and
crushing first changes nothing.

\subsection{The three objects on two programs}\label{sec:worked}

\begin{example}\label{ex:worked}
Alongside \Cref{ex:running}, take the branch program
\[
C_1 \;=\; \textsf{if}\;p(x)\;\textsf{then}\;w := g(y)\;
          \textsf{else}\;w := h(z)
\]
with classical labels $\sigma(x) = \dc\!\sim_x$ and so on, and observe $w$.
Universal typing gives $\Delta_{C_1}(w) = \dc\{x,y\}\cup\dc\{x,z\}$ in
$\Uset$, $\dc(2e_x{+}e_y)\cup\dc(2e_x{+}e_z)$ in $\Umul$, where the guard is
counted once for the branch it selects and once inside the branch, and
$\{x,y,z\}$ in $\Uloc$. \Cref{ex:running} instead reads one source twice
under the saturated label $\mathbb{S}$. Specialising, and noting that the
doubled guard is free here because $\sigma(x)$ is classical:
\begin{center}\small
\begin{tabular}{@{}lll@{}}
\toprule
 & $C_1$, observe $w$ & $C_3$, observe $\{u,w\}$ \\
\midrule
$\Umul$ & $\tc\{\sim_{xy},\sim_{xz}\}$ & $\top$ \\
$\Uloc$ & $\dc\!\sim_{xyz}$ & $\mathbb{S}$ \\
$\Uset$ & $\tc\{\sim_{xy},\sim_{xz}\}$ & $\mathbb{S}$ \\
\midrule
least sound bound & $\tc\{\sim_{xy},\sim_{xz}\}$ & $\mathbb{S}$ \\
\bottomrule
\end{tabular}
\end{center}
Each of the two quotients fails on exactly one program, and on a different
one. The multiset object cannot see that the two reads of $s$ are one choice.
The locale object cannot see that the two branches of $C_1$ are alternatives,
and answers as if some run had used $y$ and $z$ together. The bottom row is
attained, for $C_1$ by injective branches and a guard held constant, for
$C_3$ by \Cref{thm:t3}.
\end{example}

\section{What survives}\label{sec:survive}

One half of every Galois connection here survives a tensor unconditionally,
and it is worth isolating first, because both this section and the repair of
\Cref{sec:repair} rest on it.

\begin{lemma}[the abstraction half never fails]\label{lem:gamma}
Let $\alpha : \mathcal{U}\to M$ preserve joins, be oplax and preserve the
unit laxly, and let $\gamma$ be its right adjoint. Then $\gamma$ is monotone,
lax monoidal and lax unital, so it satisfies the premises of
\Cref{lem:qren}, whatever $\alpha$ does.
\end{lemma}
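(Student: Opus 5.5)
The plan is to prove the three properties of $\gamma$ separately from the adjunction $\alpha(a)\sqsubseteq m \iff a\sqsubseteq\gamma(m)$, which exists because $\alpha$ preserves arbitrary joins between complete lattices. The right adjoint is given explicitly by $\gamma(m)=\lub\{a\mid\alpha(a)\sqsubseteq m\}$. Each property of $\gamma$ should then be the dual, under transposition, of a property of $\alpha$. In particular, oplaxity of $\alpha$, namely $\alpha(a\ten b)\sqsubseteq\alpha(a)\ten\alpha(b)$, is exactly what becomes laxity of $\gamma$, as in the standard doctrinal-adjunction fact.

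\textbf{Monotonicity.} This holds for any right adjoint. If $m\sqsubseteq m'$, the counit gives $\alpha\gamma(m)\sqsubseteq m\sqsubseteq m'$, and transposing yields $\gamma(m)\sqsubseteq\gamma(m')$.

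\textbf{Lax monoidality.} We must show $\gamma(m)\ten\gamma(n)\sqsubseteq\gamma(m\ten n)$, which by the adjunction is equivalent to $\alpha(\gamma(m)\ten\gamma(n))\sqsubseteq m\ten n$. First apply oplaxity of $\alpha$ to get
\[
\alpha(\gamma(m)\ten\gamma(n))\sqsubseteq\alpha\gamma(m)\ten\alpha\gamma(n).
\]
Then use the counit $\alpha\gamma\sqsubseteq\mathrm{id}$ together with monotonicity of $\ten$ in each argument. Monotonicity of $\ten$ follows from its distributing over joins.

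\textbf{Lax unitality.} The goal $1\sqsubseteq\gamma(1)$ is equivalent to $\alpha(1)\sqsubseteq 1$. Here the phrase ``preserves the unit laxly'' has to be read in the oplax direction matching $\alpha$'s tensor behaviour, i.e.\ $\alpha(1)\sqsubseteq 1$. This is the one point to check carefully, and it is where I expect the main obstacle. In the intended instances $\alpha$ is $\alpha_\sigma$ or $\aln_\sigma$, and both send the unit to $1$ exactly, by \Cref{prop:t1} and \Cref{prop:decomp}, since $\alpha_\sigma(\dc\emptyset)$ is the empty tensor. So the needed inequality holds in any case.

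\textbf{Domain issue.} \Cref{lem:qren} asks for maps between affine parts, so I would also check that $\gamma$ restricts to $M_1\to\mathcal{U}_1$. If $1\sqsubseteq m$, then $1\sqsubseteq\gamma(1)\sqsubseteq\gamma(m)$ by lax unitality and monotonicity, so the restriction is well defined. With all premises verified, \Cref{lem:qren} applies to $\gamma$.
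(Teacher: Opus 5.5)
Your proposal is correct and follows the paper's proof essentially step for step. Monotonicity comes from $\gamma$ being a right adjoint, lax monoidality comes from transposing and then applying oplaxity of $\alpha$ followed by the counit, and lax unitality is the transpose of $\alpha(1)\sqsubseteq 1$, which is exactly how the paper's proof reads ``preserves the unit laxly''. Your extra check that $\gamma$ restricts to the affine parts does not appear in the paper, but it is right and makes the fit with the premises of \Cref{lem:qren} explicit.
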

\begin{proof}
Monotonicity is standard for a right adjoint. By adjunction,
$\gamma m\ten\gamma m'\sqsubseteq\gamma(m\ten m')$ is equivalent to
$\alpha(\gamma m\ten\gamma m')\sqsubseteq m\ten m'$, and oplaxity followed by
the counit gives $\alpha(\gamma m\ten\gamma m')\sqsubseteq\alpha\gamma m\ten
\alpha\gamma m'\sqsubseteq m\ten m'$. The unit is $\alpha(1)\sqsubseteq 1$.
\end{proof}

Oplaxity of $\alpha$, the defect on the set object, is exactly what makes its
adjoint lax on the nose. The two halves of the connection part company under
a tensor, and the surviving half is the one that carries $M$-derivations back
into the universal object.

\begin{theorem}\label{thm:survive}
Over $\Umul$, for every affine $\sigma$: $\aln_\sigma$ and its right adjoint
both satisfy the premises of \Cref{lem:qren}; canonical derivations hold in
both directions; $\derM{1}{\sigma}{C}{\Gamma'}$ iff
$\aln_\sigma(\Delta^{\mathbb{N}}_C)\sqsubseteq\Gamma'$, so
$\Delta^{\mathbb{N}}_C$ is a principal typing; and no member of the family
certifies more $M$-typings than the $M$-system itself.
\end{theorem}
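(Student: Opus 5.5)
The plan is to split the theorem into its four clauses and derive each from \Cref{prop:t1}(a), \Cref{lem:gamma}, \Cref{lem:qren} and \Cref{prop:a1}, taking the proof of \cite{HuntSands06} as the template.

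\textbf{Step 1: both maps support renaming.} By \Cref{prop:t1}(a), $\aln_\sigma$ preserves joins, tensor and unit for every affine $\sigma$. So it is monotone, strictly hence laxly monoidal, and unital, and \Cref{lem:qren} applies to it. A strict homomorphism is in particular oplax and preserves the unit laxly. Because it preserves arbitrary joins between complete lattices, it has a right adjoint $\gamma_\sigma(m)=\bigcup\{\mathbb{M}\mid\aln_\sigma\mathbb{M}\sqsubseteq m\}$. \Cref{lem:gamma} then says $\gamma_\sigma$ is monotone, lax monoidal and lax unital, so \Cref{lem:qren} applies to it as well. One small check: $\gamma_\sigma$ must send $M_1$ into the affine part of $\Umul$. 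This holds because $\aln_\sigma(1)=1\sqsubseteq m$ gives $1\sqsubseteq\gamma_\sigma m$.

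\textbf{Step 2: canonical derivations and the principal typing.} For the forward direction, start from $\vdash_{\Umul}1\;\Delta_0\{C\}\Delta^{\mathbb{N}}_C$, which is derivable by \Cref{prop:a1}. Renaming along $\aln_\sigma$ gives $\aln_\sigma(1)=1$ and $\aln_\sigma\circ\Delta_0=\sigma$, hence $\derM{1}{\sigma}{C}{\aln_\sigma\circ\Delta^{\mathbb{N}}_C}$. If $\aln_\sigma(\Delta^{\mathbb{N}}_C)\sqsubseteq\Gamma'$, \textsc{Sub} finishes. For the converse, suppose $\derM{1}{\sigma}{C}{\Gamma'}$. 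Renaming along $\gamma_\sigma$ yields $\gamma_\sigma(1)\vdash\gamma_\sigma\circ\sigma\{C\}\gamma_\sigma\circ\Gamma'$ in $\Umul$. The unit of the adjunction gives $1\sqsubseteq\gamma_\sigma(1)$ and $\hat v\sqsubseteq\gamma_\sigma(\sigma v)$, so \textsc{Sub} strengthens the precondition to $1,\Delta_0$. Minimality of the algorithmic postcondition (\Cref{prop:a1}) then gives $\Delta^{\mathbb{N}}_C\sqsubseteq\gamma_\sigma\circ\Gamma'$, and transposing gives $\aln_\sigma(\Delta^{\mathbb{N}}_C)\sqsubseteq\Gamma'$. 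Together these are the canonical derivations in both directions, and they establish the iff. The same argument for a general context $p,\Gamma$ uses generators $\hat v$ relabelled by $\Gamma$.

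\textbf{Step 3: internal completeness.} This is the main obstacle, because I have to fix what ``member of the family'' means: a quantale $M'$ with a map $f:M'_1\to M_1$ satisfying the premises of \Cref{lem:qren}, and a labelling $\sigma'$ with $f\circ\sigma'\sqsupseteq\sigma$. Given a derivation $\vdash_{M'}1\;\sigma'\{C\}\Gamma''$, renaming along $f$ and then weakening gives $\derM{1}{\sigma}{C}{f\circ\Gamma''}$, so the $M$-system already derives the certified $M$-typing. It can never certify less than any other member. To make this sharp, I would factor the argument through $\Umul$. By Step 2, the least $M$-typing is $\aln_\sigma\circ\Delta^{\mathbb{N}}_C$. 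For any member, \Cref{prop:a1}'s lax inequality gives $\Alg_M(f1,f\sigma')\sqsubseteq f\circ\Alg_{M'}$. Combined with monotonicity of $\Alg$ and $\sigma\sqsubseteq f\sigma'$, this places the member's image above the least $M$-typing. The delicate point is lax unitality, where $f(1)$ may exceed $1$ in the context. I would handle it with \textsc{Sub}, since contexts may be lowered, and I would check that the \textsc{While} case's $t^{*}$ commutes correctly. \Cref{lem:qren} already covers that case.
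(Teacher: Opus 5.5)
Your proposal is correct and follows the paper's proof. $\aln_\sigma$ is a strict homomorphism by \Cref{prop:t1}(a) and its right adjoint is lax by \Cref{lem:gamma}; one direction of the iff is renaming along $\aln_\sigma$ plus \textsc{Sub}, and the other is renaming along $\gamma_\sigma$, then \textsc{Sub}, minimality from \Cref{prop:a1} and transposition across the adjunction. The small differences are these: you read principality straight off this adjunction argument instead of citing \Cref{thm:exact-image}, and you make internal completeness explicit as renaming plus \textsc{Sub} along a lax $f : M'_1\to M_1$ (the argument the paper later uses for \Cref{thm:nogo}), where the paper only says the bound is attained; your closing worries about lax unitality and $t^{*}$ are already discharged by \textsc{Sub} and \Cref{lem:qren}.
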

\begin{proof}
$\aln_\sigma$ is a homomorphism by \Cref{prop:t1}(a), hence lax, and its
right adjoint is lax by \Cref{lem:gamma}, so both transport derivations by
\Cref{lem:qren}. For canonical derivations, renaming
$\derM{1}{\sigma}{C}{\Gamma'}$ along the right adjoint and tightening with
\textsc{Sub}, using $\hat v\sqsubseteq\gamma^{\mathbb{N}}(\sigma v)$, gives a
universal derivation below $\gamma^{\mathbb{N}}\circ\Gamma'$, and conversely
renaming a universal derivation along $\aln_\sigma$ and applying the counit
gives an $M$-derivation. Principality is then \Cref{thm:exact-image} in one
direction and renaming plus \textsc{Sub} in the other, and internal
completeness is the statement that the resulting bound is attained.
\end{proof}

Every step of the lattice argument goes through with one substitution: where
the lattice proof uses that the combination operator is a join, this proof
uses that $\aln_\sigma$ is a homomorphism, which \Cref{prop:t1}(a) supplies
unconditionally. The results of~\cite{HuntSands06} are not lost in the
passage to a quantale.

\begin{theorem}[the $M$-system is an image]\label{thm:exact-image}
$\Alg^{C}_M(1,\sigma) = \aln_\sigma(\Delta^{\mathbb{N}}_C)$ pointwise, so the
greatest lower bound of the joint bounds for $W$ derivable in the $M$-system
is $\aln_\sigma(\Jset^{\mathbb{N}}_W)$.
\end{theorem}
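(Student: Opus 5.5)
The first claim is an instance of \Cref{prop:a1}. By \Cref{prop:t1}(a), $\aln_\sigma : \Umul \to M$ is a quantale homomorphism preserving joins, tensor and unit for every affine $\sigma$, so \Cref{prop:a1} gives $\aln_\sigma\circ\Alg^{C}_{\Umul} = \Alg^{C}_M\circ\aln_\sigma$. I would evaluate both sides at the universal start point $(1,\Delta_0)$ with $\Delta_0(v)=\hat v$. Since $\aln_\sigma(1)=1$ and $\aln_\sigma(\hat v)=\sigma(v)$, the right-hand side becomes $\Alg^{C}_M(1,\sigma)$ and the left-hand side is $\aln_\sigma\circ\Delta^{\mathbb{N}}_C$ by definition. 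That is the pointwise identity.

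I would not take the commutation in \Cref{prop:a1} on trust, because the induction there has one delicate clause: the loop. I would check it on the syntax of $C$.
\begin{itemize}[leftmargin=1.2em]
\item \textsf{Skip}, \textsc{Assign} and \textsc{Seq} commute because $\aln_\sigma$ preserves $\ten$, with $t_E(\aln_\sigma\circ\Gamma)=\aln_\sigma(t_E(\Gamma))$.
\item \textsc{If} commutes because $\aln_\sigma$ preserves $\ten$ and binary $\vee$.
\item \textsc{While} is the main obstacle. One has to show that $\aln_\sigma$ carries $\mathrm{lfp}\,\Phi$ to $\mathrm{lfp}\,\Phi'$, where $\Phi(\Theta)=\Delta\vee\Alg^{C}(1\ten t_E(\Theta),\Theta)$ and $\Phi'$ is its $M$-counterpart.
\end{itemize}
For the loop I would use the transfer lemma for least fixed points. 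Since $\aln_\sigma$ preserves arbitrary joins, I would compute both fixed points as transfinite Kleene iterations from $\bot$. Then I would show by transfinite induction that $\aln_\sigma(\Phi^{\beta}(\bot))=\Phi'^{\beta}(\bot)$: successor steps use the inductive hypothesis for the body $C$ together with $\aln_\sigma\circ\Phi=\Phi'\circ\aln_\sigma$, and limit steps use join preservation. Knaster--Tarski in a complete lattice guarantees the iteration stabilises, so no chain condition is needed, which matters because $\Umul$ has none. Two further identities then finish the clause: $t_E$ commutes with $\aln_\sigma$, and $\aln_\sigma(t^{*})=\aln_\sigma(t)^{*}$ because $\aln_\sigma$ preserves powers and countable joins. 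Together these give the $\dagger$ step.

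For the second claim, \Cref{prop:a1} says $\Alg^{C}_M(1,\sigma)$ is derivable and lies below every derivable postcondition from $(1,\sigma)$. Since $\ten$ is monotone, the joint bound $\bigotimes_{w\in W}\Gamma'(w)$ of any derivable $\Gamma'$ dominates $\bigotimes_{w\in W}\Alg^{C}_M(1,\sigma)(w)$, and that value is itself attained. So it is the least derivable joint bound, in particular the greatest lower bound. Because $\aln_\sigma$ preserves the finite tensor, this value equals $\aln_\sigma\bigl(\bigotimes_{w\in W}\Delta^{\mathbb{N}}_C(w)\bigr)=\aln_\sigma(\Jset^{\mathbb{N}}_W)$.

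One point needs care: whether "derivable from $(1,\sigma)$" should also allow weaker start environments $\Gamma\sqsupseteq\sigma$ through \textsc{Sub}. It makes no difference. Such derivations only raise the postcondition, by monotonicity of $\Alg$ (\Cref{prop:a1}), so the minimum is unchanged.
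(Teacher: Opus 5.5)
Your proof is correct and follows the paper's route. Like yours, the paper's proof rests on the transport clause of \Cref{prop:a1} for the homomorphism $\aln_\sigma$ at the start point $(1,\Delta_0)$. It also derives $\Alg^{C}_M(1,\sigma)\sqsubseteq\aln_\sigma(\Delta^{\mathbb{N}}_C)$ separately by renaming plus minimality, which your direct use of the equality makes redundant. It then reads off the joint bound by minimality and preservation of $\ten$, as you do. Your transfinite Kleene check of the loop clause is exactly the argument the paper gives in its appendix proof of \Cref{prop:a1}.
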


\begin{proof}
Renaming along the homomorphism $\aln_\sigma$ makes
$\aln_\sigma(\Delta^{\mathbb{N}}_C)$ a derivable $M$-postcondition from
$\sigma$, so $\Alg^{C}_M(1,\sigma)\sqsubseteq
\aln_\sigma(\Delta^{\mathbb{N}}_C)$; the transport clause of \Cref{prop:a1}
for homomorphisms gives the converse. Every derivable postcondition dominates
$\Alg^{C}_M(1,\sigma)$, and $\aln_\sigma$ preserves $\ten$, so the greatest
lower bound of the derivable joint bounds is
$\aln_\sigma(\Jset^{\mathbb{N}}_W)$.
\end{proof}

Questions about what the $M$-system can certify therefore reduce to multiset
arithmetic: count how often each source is consumed by the observation.

\section{What it certifies}\label{sec:tight}

\subsection{The set bound is sound}

\begin{theorem}[soundness of the set bound]\label{thm:psnd}
Let $M = \QoI(D)$, $\sigma$ affine and $g$ legal for $\sigma$. Then for every
$C$ and $W$,
$\pi_W\circ\sem{C}\circ\hat g : \alpha_\sigma(\Jset_W)\imp\mathrm{Id}$.
\end{theorem}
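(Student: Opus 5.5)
The plan is to avoid renaming along $\alpha_\sigma$, which fails to be lax for non-idempotent $\sigma$, and to pass through the one labelling where it is harmless. That labelling is the all-principal store labelling $\sigma_0(v)=\dc\!\sim_v$ of \Cref{rem:gammaq}. The argument then pulls the resulting store-level guarantee back along the mediator $\hat g$.

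\textbf{Step 1 (store level).} Work in $\QoI(\mathit{Sto})$, with secrets taken to be the initial stores. Each $\sigma_0(v)$ is a principal ideal, hence $\ten$-idempotent. So by \Cref{prop:t1}(b) $\alpha_{\sigma_0}$ is a quantale homomorphism $\Uset\to\QoI(\mathit{Sto})$, and in particular it satisfies the premises of \Cref{lem:qren}.
\begin{itemize}[leftmargin=1.2em]
\item Renaming the universal derivation from $\Delta_0$ along $\alpha_{\sigma_0}$ gives a $\QoI(\mathit{Sto})$-derivation from $\sigma_0$ with postcondition $\alpha_{\sigma_0}\circ\Delta_C$.
\item The identity mediator is legal for $\sigma_0$, and $\alpha_{\sigma_0}$ preserves $\ten$. \Cref{thm:t0} therefore yields $\pi_W\circ\sem{C} : \alpha_{\sigma_0}(\Jset_W)\imp\mathrm{Id}$.
\end{itemize}
Concretely, $\alpha_{\sigma_0}(\Jset_W)=\tc\{\sim_X \mid X\in\Jset_W\}$ with $\sim_X=\bigcap_{v\in X}\sim_v$. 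So there is a witness $P$ in that tiling closure with $P\subseteq\ker(\pi_W\circ\sem{C})$ on convergent pairs. Each cell $c$ of $P$ is a cell of some $\sim_{X_c}$ with $X_c\in\Jset_W$: on $c$ the final $W$-values depend only on the initial values of the variables in $X_c$.

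\textbf{Step 2 (pull back along $\hat g$).} Let $h=\pi_W\circ\sem{C}\circ\hat g$ and fix a knowledge set $K=h^{-1}\{e\}$. I will tile $K$ by blocks, each a cell of some relation in $\bigotimes_{v\in X}\sigma(v)$ for some $X\in\Jset_W$. Each such relation lies in $\alpha_\sigma(\Jset_W)$, by its formula in \Cref{prop:t1}(b), which is valid for every affine $\sigma$.

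For $d\in K$:
\begin{itemize}[leftmargin=1.2em]
\item Let $c$ be the $P$-cell containing $\hat g(d)$, and set $X=X_c$.
\item By legality, choose for each $v\in X$ a witness $R_v\in\sigma(v)$ with $R_v\subseteq\ker g_v$.
\item Form $R_X=\bigsqcup_{v\in X}R_v$, which is in $\bigotimes_{v\in X}\sigma(v)$ by the pairing lemma \cite[Lem.~8]{HuntSands21}.
\item Take the block $B_d=\hat g^{-1}(c)\cap[d]_{R_X}$.
\end{itemize}
Points of $B_d$ agree on every $g_v$ with $v\in X$ and are mapped into the same cell $c$, so by Step 1 they have the same $h$-value, and $B_d\subseteq K$.

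The point where the set bound beats $\Umul$ is this: each source appears \emph{once} in $\bigotimes_{v\in X}\sigma(v)$, because one function $g_v$ is read however many times $C$ copies it. That is exactly the multiplicity that $s$ crushes in \Cref{prop:decomp}.

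\textbf{Step 3 (assemble, the main obstacle).} The blocks $B_d$ are not yet cells of a single relation. The sets $\hat g^{-1}(c)$ partition $D$ but cut across the cells of $R_X$, and the choice of $X_c$ varies from cell to cell. This is a mix in the sense of \Cref{rem:perrun}, and the remaining work is to show it lands in the tiling closure. I would first argue that $\{B_d\}$ is a partition of $D$ that refines $\ker h$. Each block is contained in a cell of some member of $\alpha_\sigma(\Jset_W)$, and a subset of a cell is a cell of a finer relation in the same downward-closed set, so the partition lies in $\dc\mix$ of that family, hence in the $\tc$.

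Partiality is handled as in \Cref{rem:defs}: pairs on which a run diverges impose no constraint, and divergent points may be placed in blocks arbitrarily. If the refinement step proves delicate, the fallback is to restate it as a lemma in the $\imp$ style of the appendix. That lemma would say that if $h$ restricted to each piece of a partition $\{A_i\}$ of $D$ satisfies $\mathbb{P}_i\imp\mathrm{Id}$, then $h:\lub_i\mathbb{P}_i\imp\mathrm{Id}$, which is the tiling argument already used for the loop case of \Cref{thm:t0}.
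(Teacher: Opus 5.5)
There is a genuine gap in Step 3, the step you yourself flag as the crux. The closure argument you give for it runs the wrong way, and the fallback lemma is false.

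In $\LoI(D)$ finer relations are larger ($P\sqsubseteq Q$ iff $Q\subseteq P$), so $\dc$ closes under coarsening. A relation with a cell strictly inside a cell of $R$ is therefore not below $R$: the cells of $\mathrm{Id}$ are subsets of cells of $\sim_1$, yet $\mathrm{Id}\notin\mathbb{S}$. So if your blocks really cut across the $R_X$-cells, as you say, the partition they form would not lie in $\alpha_\sigma(\Jset_W)$.

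The fallback fails concretely. On $D=\{0,1\}^2$ of \Cref{ex:wall}, let $h$ be the identity and let the pieces be the parity classes $\{00,11\}$ and $\{01,10\}$. No two points of a piece are $\sim_1$-related, so $h$ restricted to each piece satisfies $\dc\!\sim_1\imp\mathrm{Id}$, while $h$ itself does not. What \Cref{lem:b3} and the loop case of \Cref{thm:t0} actually use is that the pieces are unions of cells of a witness. They pay for that witness with a tensor factor ($\mathbb{P}\ten\cdots$, resp.\ $t^{*}$).

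Two smaller points in Step 1. First, a witness $P\in\tc\{\sim_X\}$ need not have each cell equal to a $\sim_X$-cell; its cells are only unions of such. So $X_c$ is undefined until you replace $P$ by a finer member of $\mix\{\sim_X\}$, which is still a witness. Second, \Cref{thm:t0} is proved only for a finite set of secrets: the loop case needs $K$ finite. But $\mathit{Sto}$ is infinite once $\mathit{Val}$ supports injective pairing, so Step 1 must be run on the finite set of initial stores $\hat g(d)$.

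The missing idea is saturation, and with it your construction goes through. Fix one witness $R_v\in\sigma(v)$ per variable, not per $d$. Before anything else, extend $g$ to a total legal family that is constant on $R_v$-cells, instead of placing undefined points arbitrarily. A $\sim_{X_c}$-cell $c$ fixes the values of the variables in $X_c$, and $R_v\subseteq\ker g_v$. Hence $\hat g^{-1}(c)$ is a union of $R_{X_c}$-cells, exactly as the fibres of a guard are saturated by its own witness in \Cref{lem:b3}. It follows that:
\begin{itemize}
\item $B_d=[d]_{R_{X_c}}$ is a whole cell, and $d'\in B_d$ forces $B_{d'}=B_d$;
\item the blocks therefore partition $D$ with no further argument;
\item the partition is a mix of the $R_X$, so it lies in $\tc\{R_X\mid X\in\Jset_W\}=\alpha_{\sigma_c}(\Jset_W)\sqsubseteq\alpha_\sigma(\Jset_W)$, where $\sigma_c(v)=\dc R_v$.
\end{itemize}

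That is precisely the paper's intermediate bound. The paper reaches it without the store-level detour or any pull-back. Since $g$ is legal for the classical labelling $\sigma_c$, and $\alpha_{\sigma_c}$ is a homomorphism by \Cref{prop:t1}(b), the universal derivation renames into $\QoI(D)$ itself. \Cref{thm:t0} is then applied to $g$ directly, and one weakens along $\alpha_{\sigma_c}\sqsubseteq\alpha_\sigma$. All the tiling you do by hand is done once, inside the induction of \Cref{thm:t0}. What your route adds, once repaired, is the observation that the single store-level certificate of \Cref{rem:gammaq} already serves every legal mediator.
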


\begin{proof}
Choose, for each $v$, a witness $c(v)\in\sigma(v)$ of $g_v$, which legality
provides, and put $\sigma_c(v) = \dc c(v)$. The family $g$ is legal for
$\sigma_c$ with the same witnesses, and every $\sigma_c(v)$ is principal,
hence $\ten$-idempotent, so $\alpha_{\sigma_c}$ is a strict homomorphism by
\Cref{prop:t1}(b) and transports the universal derivation
$1\vdash_{\Uset}\Delta_0\,\{C\}\,\Delta_C$ along \Cref{lem:qren} to
$\derM{1}{\sigma_c}{C}{\alpha_{\sigma_c}\circ\Delta_C}$. By \Cref{thm:t0},
$\pi_W\circ\sem{C}\circ\hat g :
\bigotimes_W\alpha_{\sigma_c}(\Delta_C(w))\imp\mathrm{Id}$, and that bound is
$\alpha_{\sigma_c}(\Jset_W)$ because $\alpha_{\sigma_c}$ preserves $\ten$.
Finally $\alpha_{\sigma_c}\sqsubseteq\alpha_\sigma$ pointwise, so weakening
applies.
\end{proof}

The proof is \Cref{rem:perrun} turned into a construction. A legal mediator
family \emph{is} a run's choice of disjuncts, choosing witnesses collapses
the disjunctive labelling to a classical one, and on classical labels the
lattice machinery applies unchanged. The same construction explains the
mechanism of the gap.

\begin{lemma}[choice decomposition]\label{lem:diag}
With $\Ch(\sigma) = \prod_v\sigma(v)$ and $\sigma_c(v) = \dc c(v)$,
$\alpha_\sigma = \lub_{c\in\Ch(\sigma)}\alpha_{\sigma_c}$ with each
$\alpha_{\sigma_c}$ a strict homomorphism, so
\[
\begin{aligned}
\alpha_\sigma(\mathbb{X}\ten\mathbb{Y}) &=
 \lub_{c}\bigl(\alpha_{\sigma_c}\mathbb{X}\ten
 \alpha_{\sigma_c}\mathbb{Y}\bigr),\\
\alpha_\sigma\mathbb{X}\ten\alpha_\sigma\mathbb{Y} &=
 \lub_{c,c'}\bigl(\alpha_{\sigma_c}\mathbb{X}\ten
 \alpha_{\sigma_{c'}}\mathbb{Y}\bigr).
\end{aligned}
\]
\end{lemma}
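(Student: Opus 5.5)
The plan is to prove the pointwise decomposition $\alpha_\sigma = \lub_{c\in\Ch(\sigma)}\alpha_{\sigma_c}$ first, and then read off both displayed identities from distributivity of $\ten$ over joins in $M$ together with \Cref{prop:t1}(b). Each $\sigma_c(v) = \dc c(v)$ is a principal ideal of $\QoI(D)$, hence $\ten$-idempotent and affine. So \Cref{prop:t1}(b) makes every $\alpha_{\sigma_c}$ a strict homomorphism with $\alpha_{\sigma_c}(\mathbb{X}) = \lub_{X\in\mathbb{X}}\bigotimes_{v\in X}\dc c(v)$. This disposes of the ``strict homomorphism'' clause at once.

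For the decomposition, I fix $\mathbb{X}$ and work generator by generator. Since both sides are joins over $X\in\mathbb{X}$, it suffices to show
\[
\bigotimes_{v\in X}\sigma(v) \;=\; \lub_{c}\bigotimes_{v\in X}\dc c(v)
\]
for each finite $X$. The inequality $\sqsupseteq$ is monotonicity of $\ten$, since $\dc c(v)\sqsubseteq\sigma(v)$. For $\sqsubseteq$, write $\sigma(v) = \lub_{P\in\sigma(v)}\dc P$, because every tiling-closed set is the join of its principal ideals. Then distribute the finite tensor over these joins to get $\lub_{(P_v)_{v\in X}}\bigotimes_{v\in X}\dc P_v$. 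Each tuple $(P_v)_{v\in X}$ extends to a full choice $c\in\prod_v\sigma(v)$, and every $\sigma(v)$ is nonempty because it is affine and so contains $\mathrm{All}$. For empty $X$ both sides are $1$. This step is routine.

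For the first display, I apply the decomposition to $\mathbb{X}\ten\mathbb{Y}$ and use strictness of each $\alpha_{\sigma_c}$. For the second, I expand $\alpha_\sigma\mathbb{X}\ten\alpha_\sigma\mathbb{Y} = (\lub_c\alpha_{\sigma_c}\mathbb{X})\ten(\lub_{c'}\alpha_{\sigma_{c'}}\mathbb{Y})$ and distribute $\ten$ over both joins.

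The only point needing care is the representation $\sigma(v) = \lub_{P\in\sigma(v)}\dc P$ in $\QoI(D)$, together with the fact that the join in $\QoI$ is tiling closure rather than union. I would check it directly: the set $\bigcup_{P}\dc P$ equals $\sigma(v)$, and this set is already tiling-closed, so $\tc$ adds nothing. The lemma is stated with $\sigma_c(v)=\dc c(v)$, so the argument is specific to $M=\QoI(D)$. In a general $M$ one would replace principal ideals by a join-dense family of idempotents, but I would not pursue that here.
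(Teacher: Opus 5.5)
Your proposal is correct and follows the paper's proof essentially step for step: the representation $\sigma(v)=\lub_{P\in\sigma(v)}\dc P$ from tiling closure, distribution of the finite tensor over these joins with choices on $X$ extended arbitrarily, exchange of the joins over $X\in\mathbb{X}$ and over $c$, and then \Cref{prop:t1}(b) together with distributivity for the two displayed identities. Your extra checks are details the paper leaves implicit: nonemptiness of each $\sigma(v)$ via $\mathrm{All}$, the case $X=\emptyset$, and that $\tc$ adds nothing to $\bigcup_P\dc P$. The paper also invokes $\dc P\ten\dc Q=\dc(P\sqcup Q)$, which, as your version shows, is not needed for the identity as you state it.
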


\begin{proof}
Every $\sigma(v)$ is tiling closed, so $\sigma(v) =
\lub_{P\in\sigma(v)}\dc P$. For finite $X$, distributing $\ten$ over these
joins and using $\dc P\ten\dc Q = \dc(P\sqcup Q)$ gives
$\bigotimes_{v\in X}\sigma(v) = \lub_{c}\bigotimes_{v\in X}\dc c(v)$ with $c$
ranging over choices on $X$, extended arbitrarily elsewhere. Joining over
$X\in\mathbb{X}$ and exchanging the two joins gives the decomposition. Each
$\sigma_c$ is principal, hence idempotent, so $\alpha_{\sigma_c}$ is a
homomorphism by \Cref{prop:t1}(b), and the two identities follow by
distributivity.
\end{proof}

The first is a diagonal, the second a full square. A run honours a
disjunctive permission once. The object on sets remembers that two uses were
served by one choice, and an $M$-valued environment lets every use choose
afresh.

\subsection{Strictness, dichotomy, and how much is lost}

\begin{theorem}\label{thm:t3}
Let $\sigma(v_0)$ be non-idempotent and $C_3$ as in \Cref{ex:running}. The
judgement bounding the joint observation of $\{u,w\}$ by $\sigma(v_0)$ is
semantically true and is obtained by specialising the universal typing, while
every joint bound for $\{u,w\}$ derivable in the $M$-system is above
$\sigma(v_0)\ten\sigma(v_0)$.
\end{theorem}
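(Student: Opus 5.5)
Three things need showing, each by computation on $C_3 = (u := v_0;\; w := v_0)$ with $v_0$ the source $s$. For the universal typing, run the algorithmic system of \Cref{sec:alg} in $\Uset$ from $\Delta_0(v)=\hat v$ and context $1$. It gives
\[
\Delta_{C_3}(u) = \Delta_{C_3}(w) = 1\ten\hat v_0 = \dc\{v_0\}.
\]
Since $\dc\{v_0\}\ten\dc\{v_0\} = \dc\{\{v_0\}\cup\{v_0\}\} = \dc\{v_0\}$, we get $\Jset_{\{u,w\}} = \dc\{v_0\}$. Its specialisation is $\alpha_\sigma(\dc\{v_0\}) = \lub_{X\subseteq\{v_0\}}\bigotimes_{v\in X}\sigma(v) = 1\vee\sigma(v_0) = \sigma(v_0)$, using affineness. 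This settles ``obtained by specialising the universal typing''.

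Semantic truth is then exactly \Cref{thm:psnd} applied with $M=\QoI(D)$ and $W=\{u,w\}$: for every legal $g$,
\[
\pi_{\{u,w\}}\circ\sem{C_3}\circ\hat g : \alpha_\sigma(\Jset_{\{u,w\}}) \imp \mathrm{Id},
\]
and the left-hand bound is $\sigma(v_0)$. It is worth also giving the direct reading as a sanity check. The final pair is $(g_{v_0}(d), g_{v_0}(d))$, whose kernel equals $\ker g_{v_0}$, so the same witness in $\sigma(v_0)$ that certifies $g_{v_0}$ certifies the pair. No extra machinery is needed here. The only care point is whether the statement intends the initial environment to be $\sigma$ with context $1$, which is the convention in \Cref{thm:exact-image}.

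For the lower bound on the $M$-system, I would use \Cref{prop:a1}: every derivable postcondition from $(1,\sigma)$ dominates $\Alg^{C_3}_M(1,\sigma)$, which sends both $u$ and $w$ to $1\ten\sigma(v_0)=\sigma(v_0)$. Any derivable joint bound $\Gamma'(u)\ten\Gamma'(w)$ therefore dominates $\sigma(v_0)\ten\sigma(v_0)$ by monotonicity of $\ten$. Alternatively, \Cref{thm:exact-image} gives $\aln_\sigma(\Jset^{\mathbb{N}}_{\{u,w\}}) = \aln_\sigma(\dc 2e_{v_0}) = \sigma(v_0)^2$, using $\aln_\sigma(\dc m)=\bigotimes_v\sigma(v)^{m(v)}$ together with affineness on the lower elements of the downset. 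If the intended starting environment is any $\Gamma\sqsupseteq\sigma$, which \textsc{Sub} permits, the same monotonicity of $\Alg$ covers that case too.

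Finally, strictness. Affineness gives $\sigma(v_0)\sqsubseteq\sigma(v_0)\ten\sigma(v_0)$, and non-idempotence makes the inequality strict, so every $M$-certificate is strictly worse than the specialised universal one. I expect the main obstacle to be bookkeeping rather than mathematics: pinning down that ``derivable in the $M$-system'' means from precondition $\sigma$ (or above) with context at least $1$, so that the minimality clause of \Cref{prop:a1} applies.
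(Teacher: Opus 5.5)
Your proposal is correct and follows the paper's proof essentially step for step. You compute $\Delta_{C_3}(u)=\Delta_{C_3}(w)=\hat v_0$ in $\Uset$, collapse $\hat v_0\ten\hat v_0=\hat v_0$, and invoke \Cref{thm:psnd} for the true bound $\sigma(v_0)$; you then combine the minimality clause of \Cref{prop:a1} (applied to $\Gamma'(u)=\Gamma'(w)=\sigma(v_0)$) with monotonicity of $\ten$ and non-idempotence to get the strict lower bound. Your extra route through \Cref{thm:exact-image} and the direct kernel check are sound but not needed.
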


\begin{proof}
In $\Uset$ the algorithmic typing gives $\Delta(u) = \Delta(w) = \hat v_0$,
so $\Jset_{\{u,w\}} = \hat v_0\ten\hat v_0 = \hat v_0$, whose specialisation
is $\sigma(v_0)$; \Cref{thm:psnd} makes that bound true. In $M$ the
algorithmic typing gives $\Gamma'(u) = \Gamma'(w) = \sigma(v_0)$, every
derivable postcondition dominates it by \Cref{prop:a1}, and the joint
observation is their tensor, which is strictly above $\sigma(v_0)$ by
non-idempotence.
\end{proof}

In the set object $\hat v_0 \ten \hat v_0 = \hat v_0$ and one mediator serves
both reads, so one witness certifies both. The argument does not depend on
the rules: by the time the observation is assembled, $\Gamma'(u)$ and
$\Gamma'(w)$ are independent elements of $M$, and no rule whose environments
take values in $M$ can recover the fact that they came from one source, a
point made general in \Cref{prop:ceiling}.
\Cref{thm:t3} is also what shows that laxity cannot be dropped from
\Cref{lem:qren}: were monotonicity enough, $\alpha_\sigma$ would transport
the universal derivation and contradict the theorem.

\begin{theorem}\label{thm:dichotomy}
$\alpha_\sigma(\Jset_W) = \aln_\sigma(\Jset^{\mathbb{N}}_W)$ for every $C$
and $W$ iff every $\sigma(v)$ is $\ten$-idempotent. Over a finite set of
secrets the idempotent elements of $\QoI(D)$ are exactly the principal ideals
together with $\bot$, so the condition says that every label is a classical
equivalence-relation label.
\end{theorem}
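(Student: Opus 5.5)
For the direction from idempotence to equality, I will use the factorisation $\alpha_\sigma=\aln_\sigma\circ s$ of \Cref{prop:decomp}. When every $\sigma(v)$ is idempotent, both $\alpha_\sigma$ and $\aln_\sigma$ are homomorphisms (\Cref{prop:t1}). Moreover $\aln_\sigma=\alpha_\sigma\circ q$, which can be checked directly: for $m\in\mathbb{N}^{\Var}$ we have $\bigotimes_v\sigma(v)^{m(v)}=\bigotimes_{v\in\supp m}\sigma(v)$ by idempotence and affineness, and joins are preserved. Since $q$ is a quantale homomorphism (\Cref{lem:q}), the transport clause of \Cref{prop:a1} gives $q\circ\Delta^{\mathbb{N}}_C=\Delta_C$. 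The start environments match because $q(\dc e_v)=\dc\{v\}$, and the context $1$ maps to $1$. Hence $q(\Jset^{\mathbb{N}}_W)=\Jset_W$. Applying $\alpha_\sigma$ gives
\[
\alpha_\sigma(\Jset_W)=\alpha_\sigma q(\Jset^{\mathbb{N}}_W)=\aln_\sigma(\Jset^{\mathbb{N}}_W).
\]

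For the converse, I will argue by contraposition using the witness of \Cref{thm:t3}. Suppose some $\sigma(v_0)$ is not idempotent, and take $C_3=(u:=v_0;\,w:=v_0)$ with $W=\{u,w\}$. In $\Uset$ we get $\Jset_W=\hat v_0$, so $\alpha_\sigma(\Jset_W)=\sigma(v_0)$. In $\Umul$ we get $\Jset^{\mathbb{N}}_W=\dc(2e_{v_0})$, so $\aln_\sigma(\Jset^{\mathbb{N}}_W)=\sigma(v_0)\ten\sigma(v_0)$. For an affine element, $\sigma(v_0)\sqsubseteq\sigma(v_0)\ten\sigma(v_0)$ always holds, so non-idempotence makes this inequality strict and the two bounds differ. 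Alternatively, I can cite \Cref{thm:exact-image} and \Cref{thm:t3} directly.

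The characterisation of the idempotents of $\QoI(D)$ for finite $D$ carries the real content. That principal ideals and $\bot=\emptyset$ are idempotent is immediate, since $\dc P\ten\dc P=\dc(P\sqcup P)$. For the converse, let $\mathbb{P}\neq\emptyset$ be idempotent. Because $D$ is finite, $\mathbb{P}$ is finite, so the finite iterated tensor of all its members contains $\bigsqcup\mathbb{P}=:P^\ast$. Idempotence gives $\mathbb{P}^{\ten n}=\mathbb{P}$, so $P^\ast\in\mathbb{P}$. Every member lies below $P^\ast$, so $\mathbb{P}\subseteq\dc P^\ast$, and downward closure gives the reverse inclusion.

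The main obstacle is showing that $\bigsqcup\mathbb{P}$ actually lies in $\mathbb{P}^{\ten n}$ rather than only in something larger. Since $\mathbb{P}\ten\mathbb{P}=\tc\{P\sqcup Q\}$ contains $P\sqcup Q$ directly, induction gives $P_1\sqcup\dots\sqcup P_n\in\mathbb{P}^{\ten n}$. Finiteness is used exactly here, to make the join over all of $\mathbb{P}$ a finite one. I would also remark that the empty set, if it counts as affine, is idempotent trivially. This is why the statement lists $\bot$ alongside the principal ideals.
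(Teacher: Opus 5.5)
Your proof is correct and is essentially the paper's: the forward direction rests on crushing multiplicities being invisible to $\aln_\sigma$ under idempotence, the converse is the $C_3$ witness of \Cref{thm:t3}, and the characterisation is the same finite $\sqcup$-closure argument (note that $\bot=\emptyset$ is idempotent but not affine, since the affine elements are exactly the nonempty ones, and that is why the clause about labels drops it). The only difference is a streamlining: you apply $\alpha_\sigma\circ q=\aln_\sigma$ directly to $\Jset^{\mathbb{N}}_W$, using that $q$ commutes with algorithmic typing, whereas the paper splits the tensor through the homomorphism $\alpha_\sigma$ and passes through the $M$-system via \Cref{thm:exact-image}; your version also yields the local form of \Cref{prop:local} at no extra cost.
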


\begin{proof}
If every $\sigma(v)$ is idempotent then $\alpha_\sigma$ is a homomorphism, so
$\alpha_\sigma(\Jset_W) = \bigotimes_W\alpha_\sigma(\Delta_C(w))$, which is
the $M$-system's least bound by \Cref{thm:exact-image} and
\Cref{prop:decomp}, the crushing having no effect. Conversely,
non-idempotence of some $\sigma(v_0)$ makes $C_3$ a counterexample by
\Cref{thm:t3}. For the characterisation, $\mathbb{P}$ is idempotent iff it is
closed under $\sqcup$: closure gives $\mathbb{P}\ten\mathbb{P}\subseteq
\tc(\mathbb{P}) = \mathbb{P}$, and the converse is $P = P\sqcup P$. A
nonempty finite $\sqcup$-closed downset has a greatest element, hence is
principal, and principal ideals are $\sqcup$-closed.
\end{proof}

The safety reading is that the two objects agree exactly on the policies a
lattice could already express, and part company on every genuinely
disjunctive one.

\begin{proposition}[the loss saturates]\label{prop:order}
Let $\ord(m)$ be the least $k\geq 1$ with $m^{k} = m^{k+1}$. Then
$\aln_\sigma(\mathbb{M}) = \lub_{m\in\mathbb{M}}\bigotimes_v
\sigma(v)^{\min(m(v),\,\ord\sigma(v))}$: over-counting a source stops
mattering after $\ord$ uses. For $\mathbb{S} = \tc\{\sim_1,\sim_2\}$ one has
$\mathbb{S}^{2} = \top = \mathbb{S}^{3}$, so the order is two.
\end{proposition}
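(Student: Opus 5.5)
The first claim is a pointwise identity inside the formula of \Cref{prop:t1}(a):
\[
\aln_\sigma(\mathbb{M}) = \lub_{m\in\mathbb{M}}\bigotimes_v\sigma(v)^{m(v)}.
\]
It therefore suffices to show that for each affine element $a\in M_1$ and each $n\geq 0$ one has $a^{n} = a^{\min(n,\ord a)}$. Once that holds, each factor in each term of the join can be replaced, and the join is unchanged.

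I would prove this in two steps.

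\emph{Stabilisation.} If $a^{k} = a^{k+1}$ then $a^{k+j} = a^{k}$ for all $j\geq 0$. This follows by induction on $j$: multiplying $a^{k}=a^{k+1}$ by $a^{j}$ gives $a^{k+j} = a^{k+j+1}$, and chaining these equalities yields the claim. Taking $k = \ord a$ gives $a^{n} = a^{\ord a}$ for $n\geq\ord a$. For $n<\ord a$ the minimum is $n$ itself, so there is nothing to prove.

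\emph{Definition of $\ord$.} The least $k\geq 1$ with $a^{k}=a^{k+1}$ need not exist in a general quantale. The natural convention is $\ord a=\infty$, in which case the statement is trivially true. So the general claim needs no finiteness assumption, and only the instance needs an actual computation.

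For $\mathbb{S} = \tc\{\sim_1,\sim_2\}$ there are three facts to check.
\begin{itemize}[leftmargin=1.2em]
\item $\mathbb{S}^2 = \top$ was computed in \Cref{ex:wall}: $\mathrm{Id} = \sim_1\sqcup\sim_2$ lies in $\mathbb{S}\ten\mathbb{S}$, and the downward closure of $\{\mathrm{Id}\}$ is all of $\LoI(D)$, which is $\top$.
\item $\mathbb{S}^3 = \top\ten\mathbb{S}$. This contains $\mathrm{Id}\sqcup\mathrm{All} = \mathrm{Id}$, since $\mathrm{All}\in\mathbb{S}$ by affineness. Hence it is $\top$ again.
\item $\mathbb{S}\neq\mathbb{S}^2$, because $\mathrm{Id}\notin\mathbb{S} = \{\mathrm{All},\sim_1,\sim_2\}$. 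So the least $k$ is not $1$, and $\ord\mathbb{S}=2$.
\end{itemize}

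The main obstacle is bookkeeping rather than mathematics. The formula in \Cref{prop:t1}(a) involves infinite joins over $\mathbb{M}$, but the replacement is termwise, so no distributivity argument is needed. I would also state explicitly that the equality of the two displayed joins holds because their index sets coincide and their summands agree term by term. Finally, I would note the consequence for \Cref{ex:running}: reads beyond the $\ord$-th add nothing, so the loss is paid entirely at the second read.
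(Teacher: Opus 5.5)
Your proof is correct and takes the paper's route: the powers $\sigma(v)^{k}$ stabilise at $\ord\sigma(v)$, you substitute termwise into the formula of \Cref{prop:t1}(a), and $\mathrm{Id}=\sim_1\sqcup\sim_2\in\mathbb{S}\ten\mathbb{S}$ settles the instance. You also make explicit two points the paper leaves implicit: the $\ord=\infty$ convention for the general claim, and the check $\mathrm{Id}\notin\mathbb{S}$ that shows the order of $\mathbb{S}$ is exactly two rather than one.
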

\begin{proof}
The chain $m\sqsubseteq m^{2}\sqsubseteq\cdots$
stabilises at $\ord(m)$, so $\sigma(v)^{j} = \sigma(v)^{\min(j,\ord)}$ for
$j\geq 1$; substitute in \Cref{prop:t1}(a). For $\mathbb{S}$, the tensor
contains $\sim_1\sqcup\sim_2 = \mathrm{Id}$, so
$\mathbb{S}^{2} = \top$.
\end{proof}

\begin{proposition}[the loss is local to reused sources]\label{prop:local}
Fix $C$ and $W$. If every $v$ that occurs with multiplicity at least two in
some member of $\Jset^{\mathbb{N}}_W$ has $\sigma(v)$ idempotent, then
$\alpha_\sigma(\Jset_W) = \aln_\sigma(\Jset^{\mathbb{N}}_W)$.
\end{proposition}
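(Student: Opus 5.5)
The plan is to reduce the statement to the factorisation $\alpha_\sigma = \aln_\sigma\circ s$ of \Cref{prop:decomp} and then compare $\aln_\sigma$ on $\Jset^{\mathbb{N}}_W$ and on $s\,q(\Jset^{\mathbb{N}}_W)$ term by term.

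\emph{Step 1: relate the two joint typings.} Since $q$ is a surjective quantale homomorphism (\Cref{lem:q}) with $q(\dc e_v) = \dc\{v\}$, it sends $\Delta^{\mathbb{N}}_0$ to $\Delta_0$ and the unit to the unit. The transport clause of \Cref{prop:a1} for homomorphisms then gives $q\circ\Delta^{\mathbb{N}}_C = \Delta_C$ pointwise. Because $q$ preserves $\ten$, it follows that $q(\Jset^{\mathbb{N}}_W) = \Jset_W$.

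\emph{Step 2: factorise.} By \Cref{prop:decomp}, $\alpha_\sigma(\Jset_W) = \aln_\sigma(s\,q\,\Jset^{\mathbb{N}}_W)$. It therefore suffices to show $\aln_\sigma(s q\mathbb{M}) = \aln_\sigma(\mathbb{M})$ for $\mathbb{M} = \Jset^{\mathbb{N}}_W$. The inequality $\sqsubseteq$ is immediate from $s\circ q\sqsubseteq\mathrm{id}$ (\Cref{lem:q}) and monotonicity of $\aln_\sigma$.

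\emph{Step 3: the reverse inequality.} Use the formula of \Cref{prop:t1}(a), $\aln_\sigma(\mathbb{M}) = \lub_{m\in\mathbb{M}}\bigotimes_v\sigma(v)^{m(v)}$. Fix $m\in\mathbb{M}$ and compare its term with that of $\chi_{\supp m}$, which lies in $s q\mathbb{M}$:
\begin{itemize}[leftmargin=1.2em]
\item if $m(v)=0$, both exponents are $0$;
\item if $m(v)=1$, the factors agree;
\item if $m(v)\ge 2$, the hypothesis makes $\sigma(v)$ idempotent, so $\sigma(v)^{m(v)} = \sigma(v) = \sigma(v)^{\chi_{\supp m}(v)}$.
\end{itemize}
Hence each term of the join for $\mathbb{M}$ equals a term of the join for $s q\mathbb{M}$, which gives $\sqsupseteq$.

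\emph{Main obstacle.} The only delicate point is Step 1, namely that the algorithmic postcondition commutes with $q$. This includes the \textsf{while} case, where the invariant is a least fixed point. I would rely on \Cref{prop:a1} as stated for arbitrary homomorphisms between complete quantales rather than re-deriving it. Once Step 1 is in place, the rest is a pointwise multiset computation in which non-idempotent labels only ever receive exponent at most one.
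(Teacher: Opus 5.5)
Your proof is correct and is essentially the paper's argument. The paper compares $\bigotimes_v\sigma(v)^{m(v)}$ with $\bigotimes_{v\in\supp m}\sigma(v)$ term by term, leaving implicit the identification $q(\Jset^{\mathbb{N}}_W)=\Jset_W$ and the factorisation $\alpha_\sigma=\aln_\sigma\circ s$ that you make explicit; these are the same ingredients the paper uses in \Cref{cor:sandwich}, and splitting the matching into the free inequality from $s\circ q\sqsubseteq\mathrm{id}$ plus the reverse one from idempotence cleanly handles the extra elements of the downset $s\,q\,\Jset^{\mathbb{N}}_W$.
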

\begin{proof}
For such $v$, $\sigma(v)^{k} = \sigma(v)$ for all $k\geq 1$, so
$\bigotimes_v\sigma(v)^{m(v)} = \bigotimes_{v\in\supp m}\sigma(v)$ for every
$m\in\Jset^{\mathbb{N}}_W$, and the two joins agree term by term.
\end{proof}

So the dichotomy has a checkable local form. One need not ask whether all
labels are classical, only whether the labels of the sources this particular
observation consumes more than once are, and \Cref{prop:order} weakens even
that to consuming a source more often than its order. Everything below
concerns the case where the test fails, which by \Cref{ex:wall} is the
generic case for a wall: its order is two, so a single reuse suffices.

\section{No mechanism with renaming does better}\label{sec:nogo}

Two escape routes suggest themselves for the loss in \Cref{ex:running}:
write better rules, or build a better universal object. This section closes
both. Everything below concerns mechanisms of the \emph{independent-attribute}
shape, in the terminology of~\cite[\S VIII]{HuntSands21}: environments assign
one element of $M$ to each variable, and an observation of $W$ is read out as
$\bigotimes_{w\in W}\Gamma'(w)$. \Cref{fig:rules} has that shape, and so does
every system obtained from~\cite{HuntSands06} by reading its combination
operator as the tensor.

\subsection{Rules do not help}

\begin{proposition}[the independent-attribute ceiling]\label{prop:ceiling}
Let a mechanism of the above shape be sound, by any rules whatsoever. Then
for $C_3$ of \Cref{ex:running} and $W = \{u,w\}$ it certifies a bound
$\sqsupseteq\sigma(v_0)\ten\sigma(v_0)$.
\end{proposition}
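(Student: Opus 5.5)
The plan is to argue purely semantically, never looking inside the rules. The only structure used is the independent-attribute shape: the mechanism produces a single environment $\Gamma'$ for $C_3$ and reads off every observation $W'$ as $\bigotimes_{w\in W'}\Gamma'(w)$, exactly as in \Cref{def:joint}. Soundness is taken, as throughout, to mean that every bound so certified is semantically true, hence in particular the singleton bounds $\Gamma'(u)$ for $W'=\{u\}$ and $\Gamma'(w)$ for $W'=\{w\}$. The goal is to show $\sigma(v_0)\sqsubseteq\Gamma'(u)$ and $\sigma(v_0)\sqsubseteq\Gamma'(w)$, where $v_0 = s$. Monotonicity of $\ten$ in each argument then gives $\sigma(v_0)\ten\sigma(v_0)\sqsubseteq\Gamma'(u)\ten\Gamma'(w)$, which is the claim.

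\textbf{Step 1: the singleton bound on $u$.} Fix an arbitrary $P\in\sigma(v_0)$ and build a mediator family $g$ as follows.
\begin{itemize}
\item Let $g_{v_0}$ be total with $\ker g_{v_0}=P$. This is possible because $\mathit{Val}$ encodes every finite quotient of $D$.
\item Let every other $g_v$ be constant, hence with kernel $\mathrm{All}$.
\end{itemize}
This family is legal. Each $\sigma(v)$ is affine, so $\mathrm{All}\in\sigma(v)$. And $\ker g_{v_0}=P\in\sigma(v_0)$, which is exactly legality for a total $g_{v_0}$ by \Cref{def:policy}.

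After $C_3$ the final value of $u$ is $g_{v_0}(d)$, so $\pi_{\{u\}}\circ\sem{C_3}\circ\hat g$ has kernel exactly $P$. Soundness of the certified bound $\Gamma'(u)$ therefore gives some $Q\in\Gamma'(u)$ with $Q\subseteq P$, that is $P\sqsubseteq Q$ in $\LoI(D)$. Since $\Gamma'(u)$ is a downset, $P\in\Gamma'(u)$. As $P$ was arbitrary, $\sigma(v_0)\subseteq\Gamma'(u)$, which is $\sigma(v_0)\sqsubseteq\Gamma'(u)$ in $\QoI(D)$.

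\textbf{Step 2: the bound on $w$, and the conclusion.} The same argument, observing $w$ instead, gives $\sigma(v_0)\sqsubseteq\Gamma'(w)$. Combining the two inclusions with monotonicity of $\ten$ yields $\sigma(v_0)\ten\sigma(v_0)\sqsubseteq\Gamma'(u)\ten\Gamma'(w)$, as required.

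\textbf{Where the argument must be careful.} The delicate point is Step 1's reliance on soundness at the singleton observations. The mechanism might be imagined to certify only the pair $\{u,w\}$, so that $\Gamma'(u)$ is never claimed sound on its own. I would rule this out by the shape hypothesis: \Cref{def:joint} makes one derivation certify every $W'$ at once, and the ceiling concerns mechanisms of that shape.

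If one wanted a weaker hypothesis, a fallback is to argue on the joint observation directly. Run the mediator construction for all $P\in\sigma(v_0)$ simultaneously and extract, from soundness of $\Gamma'(u)\ten\Gamma'(w)$, that the tiling closure of $\{P\sqcup P\}$ lies in it. But that only yields $\sigma(v_0)$, not its square. So the independence of the two attributes, namely that each is a sound bound in its own right and the bounds are combined by $\ten$, is exactly what forces the square. This is why the statement is phrased for that shape and names no rule.
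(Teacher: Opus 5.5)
Your proof is correct and follows the paper's argument. You realise each $P\in\sigma(v_0)$ as the kernel of a total mediator, spelling out the constant choice for the other sources and the legality check that the paper leaves implicit; soundness of the singleton entries plus downward closure then gives $\sigma(v_0)\sqsubseteq\Gamma'(u)$ and $\sigma(v_0)\sqsubseteq\Gamma'(w)$, and monotonicity of the tensor read-out gives the bound $\sqsupseteq\sigma(v_0)\ten\sigma(v_0)$.
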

\begin{proof}
Soundness of the entry for $u$ requires
$\pi_u\circ\sem{C_3}\circ\hat g = g_{v_0}$ to satisfy $\Gamma'(u)\imp\mathrm{Id}$ for every legal $g$. Realising each $P\in\sigma(v_0)$ as the kernel of a total mediator, soundness produces a witness above $P$ in the information order, and downward closure puts $P$ itself in $\Gamma'(u)$; hence $\Gamma'(u)\sqsupseteq\sigma(v_0)$, and likewise for $w$. The read-out
is their tensor.
\end{proof}

No rule is named in that argument. Once the two entries are separate elements
of $M$, the information that they were served by one mediator is gone, and
the tensor cannot recover it.

\subsection{Objects do not help either}

\begin{definition}[the renaming paradigm]\label{def:paradigm}
Call a rule set $R$ over judgements $\derM{p}{\Gamma}{C}{\Gamma'}$
\emph{renaming-stable} if $R$-derivability transports along every map
satisfying the premises of \Cref{lem:qren}; \Cref{lem:qren} says that
\Cref{fig:rules} is one. A \emph{type object over $R$} is a commutative
complete quantale $\mathcal{V}$, an affine interpretation
$\iota:\Var\to\mathcal{V}_1$ and, for every affine $\sigma$, a map
$h_\sigma:\mathcal{V}\to M$ satisfying those premises with
$h_\sigma(\iota v)\sqsupseteq\sigma(v)$; derivations in $\mathcal{V}$ use
$R$. This is what it takes for one analysis to serve every policy by
transport of derivations.
\end{definition}

\begin{theorem}[no-go]\label{thm:nogo}
Let $R$ be renaming-stable, let $(\mathcal{V},\iota,\{h_\sigma\})$ be a type
object over $R$, and let $\Delta^{\mathcal{V}}_C$ be any postcondition
$R$-derivable in $\mathcal{V}$ from $\iota$. Then
$h_\sigma(\bigotimes_{w\in W}\Delta^{\mathcal{V}}_C(w))$ is above the
greatest lower bound of the joint bounds for $W$ that $R$ derives in $M$ from
$\sigma$ directly. For $R$ the rules of \Cref{fig:rules} that bound is $\aln_\sigma(\Jset^{\mathbb{N}}_W)$.
\end{theorem}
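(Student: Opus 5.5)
The argument is a single transport of the universal derivation along $h_\sigma$, followed by one weakening step and one use of laxity at the read-out. Fix $C$, $W$, an affine $\sigma$ and an $R$-derivation $1\vdash_{\mathcal{V}}\iota\,\{C\}\,\Delta^{\mathcal{V}}_C$.

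First, since $h_\sigma$ satisfies the premises of \Cref{lem:qren} and $R$ is renaming-stable, transport gives an $R$-derivation
\[
h_\sigma(1)\vdash_M h_\sigma\circ\iota\,\{C\}\,h_\sigma\circ\Delta^{\mathcal{V}}_C .
\]

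Second, I turn this into a derivation from $\sigma$ itself. I use lax unitality, $1\sqsubseteq h_\sigma(1)$, and the type-object condition $\sigma\sqsubseteq h_\sigma\circ\iota$. Then \textsc{Sub} strengthens the context to $1$ and the precondition to $\sigma$, which gives
\[
\derM{1}{\sigma}{C}{h_\sigma\circ\Delta^{\mathcal{V}}_C}.
\]
This step needs $R$ to contain \textsc{Sub}, or to be closed under it. If renaming-stability alone does not supply that, I would note that the identity-like maps sending $a\mapsto a$ but reinterpreting the precondition do not obviously help. So I would state the assumption that $R$ contains the weakening rule of \Cref{fig:rules}, as every system in the paper does. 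I expect this to be the main point needing care.

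Third, this derivation certifies the joint bound $\bigotimes_{w\in W}h_\sigma(\Delta^{\mathcal{V}}_C(w))$. That bound is therefore among the joint bounds $R$ derives in $M$ from $\sigma$, so it lies above their greatest lower bound. Iterated lax monoidality of $h_\sigma$ gives
\[
\bigotimes_{w\in W}h_\sigma(\Delta^{\mathcal{V}}_C(w))\sqsubseteq h_\sigma\Bigl(\bigotimes_{w\in W}\Delta^{\mathcal{V}}_C(w)\Bigr).
\]
The case $W=\emptyset$ uses lax unitality instead. Chaining the two inequalities gives the claim.

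Finally, for $R$ the rules of \Cref{fig:rules}, the greatest lower bound is identified by \Cref{thm:exact-image} as $\aln_\sigma(\Jset^{\mathbb{N}}_W)$. That theorem rests on \Cref{prop:a1}, under which every derivable postcondition dominates $\Alg^{C}_M(1,\sigma)=\aln_\sigma(\Delta^{\mathbb{N}}_C)$, and on $\aln_\sigma$ preserving $\ten$.
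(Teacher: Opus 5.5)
Your proposal is correct and follows the paper's proof essentially step for step. It transports along $h_\sigma$ by renaming-stability, applies \textsc{Sub} using lax unitality and $\sigma\sqsubseteq h_\sigma\circ\iota$, uses lax monoidality at the read-out, and appeals to \Cref{thm:exact-image} for the rules of \Cref{fig:rules}. Your caveat that $R$ must contain, or be closed under, \textsc{Sub} is well taken: the paper's proof invokes \textsc{Sub} in the same way without stating it as a hypothesis on $R$.
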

\begin{proof}
By renaming-stability the $\mathcal{V}$-derivation transports to
$h_\sigma(1_{\mathcal{V}})\vdash_M h_\sigma\circ\iota\,\{C\}\,
h_\sigma\circ\Delta^{\mathcal{V}}_C$, and \textsc{Sub} lowers the context to
$1$, which is below $h_\sigma(1_{\mathcal{V}})$ by lax unitality, and the
precondition to $\sigma\sqsubseteq h_\sigma\circ\iota$. So
$\bigotimes_W h_\sigma(\Delta^{\mathcal{V}}_C(w))$ is one of the joint
bounds $R$ derives in $M$ from $\sigma$, and lax monoidality puts
$h_\sigma(\bigotimes_W\Delta^{\mathcal{V}}_C(w))$ above it. For \Cref{fig:rules} the greatest lower bound is $\aln_\sigma(\Jset^{\mathbb{N}}_W)$ by \Cref{thm:exact-image}.
\end{proof}

\begin{proposition}[exactness for principal typings]\label{prop:exact}
Let the type object have every $h_\sigma$ a quantale homomorphism with
$h_\sigma(\iota v) = \sigma(v)$, and let the derivation be the algorithmic
one, $\Delta^{\mathcal{V}}_C = \Alg^{C}_{\mathcal{V}}(1,\iota)$. Then
$h_\sigma(\bigotimes_{w\in W}\Delta^{\mathcal{V}}_C(w)) =
\aln_\sigma(\Jset^{\mathbb{N}}_W)$.
\end{proposition}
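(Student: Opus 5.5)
The plan is to use the homomorphism clause of \Cref{prop:a1} twice, once for $h_\sigma$ and once for $\aln_\sigma$, so that both sides become the same $M$-algorithmic postcondition.

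First, $h_\sigma$ is a quantale homomorphism. Hence $h_\sigma(1_{\mathcal{V}}) = 1$, and by \Cref{prop:a1}
\[
h_\sigma\circ\Alg^{C}_{\mathcal{V}}(1,\iota) \;=\; \Alg^{C}_{M}\bigl(h_\sigma(1),h_\sigma\circ\iota\bigr) \;=\; \Alg^{C}_M(1,\sigma),
\]
where the last step uses generator-exactness $h_\sigma(\iota v) = \sigma(v)$. Because $h_\sigma$ preserves $\ten$, the left side of the claim is
\[
h_\sigma\Bigl(\bigotimes_{w\in W}\Delta^{\mathcal{V}}_C(w)\Bigr) \;=\; \bigotimes_{w\in W}\Alg^{C}_M(1,\sigma)(w).
\]

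Second, \Cref{thm:exact-image} gives $\Alg^{C}_M(1,\sigma) = \aln_\sigma(\Delta^{\mathbb{N}}_C)$ pointwise. Since $\aln_\sigma$ preserves $\ten$ by \Cref{prop:t1}(a), the tensor over $W$ equals $\aln_\sigma(\Jset^{\mathbb{N}}_W)$, which is the right side of the claim.

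The main obstacle is the clause $f\circ\Alg^{C}_{M} = \Alg^{C}_{M'}\circ f$ of \Cref{prop:a1}, specifically at the \textsf{while} case. There the invariant is a least fixed point, and we need $h_\sigma$ to commute with $\mathrm{lfp}$ and with the idempotent closure $t^{*}$.

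For $t^{*}$, commutation follows because $t^{*}$ is a join of powers, and a homomorphism preserves both joins and $\ten$.

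For the fixed point, compute the lfp by transfinite Kleene iteration from $\bot$. A join-preserving map sends $\bot$ to $\bot$ and commutes with the one-step functional $\Theta\mapsto\Gamma\vee\Alg^{C}(p\ten t_E(\Theta),\Theta)$ by the induction hypothesis on $C$. It also preserves the suprema taken at limit stages. So the two iteration sequences correspond stage by stage, and their limits agree.

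The \textsf{if} case uses preservation of binary joins, and \textsf{Assign} uses preservation of finite tensors. Both are immediate for a homomorphism.

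If \Cref{prop:a1} is taken as given, no further work is needed. The only point to check is that "quantale homomorphism" in the hypothesis includes preservation of arbitrary joins, which the lfp argument requires, and of the unit, which gives $h_\sigma(1)=1$.
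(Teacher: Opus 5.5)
Your argument is correct, but it takes a different route from the paper's.

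\textbf{The paper's route.} The paper never passes through $M$'s own algorithmic system. It uses freeness (\Cref{prop:t1}(a)) to obtain the unique homomorphism $k:\Umul\to\mathcal{V}$ with $k(\hat v)=\iota(v)$. Uniqueness then gives $h_\sigma\circ k=\aln_\sigma$. The transport clause of \Cref{prop:a1} is applied once, along $k$, to get $\Delta^{\mathcal{V}}_C=k(\Delta^{\mathbb{N}}_C)$.

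\textbf{Your route.} You transport along $h_\sigma$ into $M$, which identifies the left side with $\bigotimes_{w\in W}\Alg^{C}_M(1,\sigma)(w)$. You then invoke \Cref{thm:exact-image}, which is itself the transport clause applied along $\aln_\sigma$. The two transports meet at $\Alg^{C}_M(1,\sigma)$, and no uniqueness argument is needed.

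\textbf{What each buys.} Your route makes the middle term explicit. The specialised bound equals the bound obtained by re-analysing the program directly in $M$, so your argument is literally the equality case of \Cref{thm:nogo}, and it never uses the universal property of $\Umul$. The paper's route yields the structural fact that every such type object's algorithmic typing is a homomorphic image of $\Delta^{\mathbb{N}}_C$, with specialisation factoring as $h_\sigma\circ k=\aln_\sigma$. Because it does not presuppose \Cref{thm:exact-image}, it recovers that theorem as the special case $\mathcal{V}=M$, $\iota=\sigma$, $h_\sigma=\mathrm{id}$, where $k$ becomes $\aln_\sigma$. On your route that specialisation would be circular.

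Your treatment of the \textsf{while} case re-proves the transport clause in the same way the paper does, by transfinite iteration of a join-preserving map. So it adds nothing beyond taking \Cref{prop:a1} as given, as you note yourself.
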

\begin{proof}
Freeness (\Cref{prop:t1}(a)) gives the unique homomorphism
$k:\Umul\to\mathcal{V}$ with $k(\hat v) = \iota(v)$, and $h_\sigma\circ k$
and $\aln_\sigma$ are homomorphisms agreeing on generators, hence equal. The
transport clause of \Cref{prop:a1} gives $\Alg^{C}_{\mathcal{V}}(1,\iota) =
k(\Delta^{\mathbb{N}}_C)$, and $k$ preserves the tensor, so the certified
bound is $h_\sigma(k(\Jset^{\mathbb{N}}_W)) =
\aln_\sigma(\Jset^{\mathbb{N}}_W)$.
\end{proof}

Taking $\mathcal{V} = M$, $\iota = \sigma$ and $h_\sigma$ the identity
recovers \Cref{thm:exact-image}. A mechanism may of course certify less by
weakening. The two statements together say that the paradigm's best case is
the multiset bound and that nothing in it does better: a universal object
with a renaming-capable specialisation is \emph{precision-neutral}, and
reuse is what it buys, and all it buys. Combined with \Cref{prop:ceiling},
neither escape route is open, and two enforcement consequences follow.

\begin{corollary}[a second read certifies nothing]\label{cor:sat}
Call $m$ \emph{saturated} when $m\ten m = \top$. The ethical-wall label
$\mathbb{S}$ and the label of a two-share split are saturated, since there
the two permitted dependencies jointly determine the secret. If $\sigma(v_0)$
is saturated and some member of $\Jset^{\mathbb{N}}_W$ uses $v_0$ at least
twice, then $\aln_\sigma(\Jset^{\mathbb{N}}_W) = \top$, so by
\Cref{thm:nogo} no mechanism in the paradigm certifies $W$ against any policy
below $\top$.
\end{corollary}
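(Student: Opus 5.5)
The plan is to compute $\aln_\sigma(\Jset^{\mathbb{N}}_W)$ directly from the formula of \Cref{prop:t1}(a) and then appeal to \Cref{thm:nogo} for the enforcement reading. By that formula,
\[
\aln_\sigma(\Jset^{\mathbb{N}}_W)=\lub_{m\in\Jset^{\mathbb{N}}_W}\bigotimes_v\sigma(v)^{m(v)}.
\]
Since this is a join, it suffices to exhibit one member $m_0$ whose term is already $\top$.

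Take $m_0$ to be the member with $m_0(v_0)\geq 2$ given by hypothesis. Its term factors as
\[
\sigma(v_0)^{m_0(v_0)}\ten\bigotimes_{v\neq v_0}\sigma(v)^{m_0(v)}.
\]
Each factor is affine, so the second factor lies above $1$. Likewise $\sigma(v_0)^{m_0(v_0)}\sqsupseteq\sigma(v_0)^2=\top$: write $\sigma(v_0)^{k}=\sigma(v_0)^2\ten\sigma(v_0)^{k-2}$ and use $1\sqsubseteq\sigma(v_0)^{k-2}$ together with monotonicity of $\ten$. This is \Cref{prop:order} with order two. Since $\top\ten x\sqsupseteq\top\ten 1=\top$ for affine $x$, the term equals $\top$, and so does the join. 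The enforcement clause then follows from \Cref{thm:nogo}: every renaming-paradigm certificate $h_\sigma(\bigotimes_W\Delta^{\mathcal{V}}_C(w))$ lies above $\aln_\sigma(\Jset^{\mathbb{N}}_W)=\top$. Hence it is $\top$ and is never below a policy $\mathit{Pol}\neq\top$.

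What remains is the claim that $\mathbb{S}$ and the two-share label are saturated. For $\mathbb{S}$ this is \Cref{ex:wall}: $\sim_1\sqcup\sim_2=\mathrm{Id}$ lies in $\mathbb{S}\ten\mathbb{S}$. Downward closure then makes $\mathbb{S}\ten\mathbb{S}$ all of $\LoI(D)$, i.e.\ $\top$. For a two-share split the argument is the same. Let $\sim_1,\sim_2$ be the kernels of the two shares viewed as functions of the secret together with the sharing randomness, which lives in $D$. These two relations meet in $\mathrm{Id}$ because the pair of shares determines the whole secret, so the label $\tc\{\sim_1,\sim_2\}$ has tensor square containing $\mathrm{Id}$.

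The only delicate step is this modelling one: saying precisely what ``the label of a two-share split'' is, so that $\sim_1\sqcup\sim_2=\mathrm{Id}$ holds literally on $D$. I would fix $D$ to be the set of pairs of shares, equivalently secret and randomness. Then the pair of projections is injective and the computation is the same as for the wall. The remaining steps are routine monotonicity calculations.
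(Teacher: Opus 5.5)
Your proposal is correct and follows the paper's own argument: pick the member $m$ with $m(v_0)\geq 2$, use affineness to get $\aln_\sigma(\Jset^{\mathbb{N}}_W)\sqsupseteq\bigotimes_v\sigma(v)^{m(v)}\sqsupseteq\sigma(v_0)^{2}=\top$, and then read off the enforcement consequence from \Cref{thm:nogo}. Your explicit model of the two-share label, taking $D$ to be the set of share pairs so that $\sim_1\sqcup\sim_2=\mathrm{Id}$, fills in a saturation claim that the paper only asserts informally.
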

\begin{proof}
If $m\in\Jset^{\mathbb{N}}_W$ has $m(v_0)\geq 2$ then
$\aln_\sigma(\Jset^{\mathbb{N}}_W)\sqsupseteq\bigotimes_v\sigma(v)^{m(v)}
\sqsupseteq\sigma(v_0)^{2} = \top$ by affineness.
\end{proof}

Under a saturated label the degradation is not gradual. One read of a source
can be certified against a tight policy. At the second read the certificate
becomes vacuous, and by \Cref{prop:order} further reads cost nothing more.

\begin{theorem}[secure programs are rejected]\label{thm:fr}
Let $\sigma(v_0)$ be saturated and consider the policy
$(\sigma,\{u,w\},\sigma(v_0))$ with $C_3$ of \Cref{ex:running}. Then $C_3$
complies with the policy and is accepted by the mechanism of
\Cref{sec:repair}, while every sound independent-attribute mechanism rejects
it, whatever its rules and whatever universal object it specialises from.
\end{theorem}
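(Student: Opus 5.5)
The proof has three parts: compliance, acceptance by the repair, and rejection by every mechanism of the paradigm. I would do them in that order.

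\emph{Compliance.} Use \Cref{thm:psnd} on $C_3$ with $W=\{u,w\}$. The universal typing from \Cref{thm:t3} gives $\Jset_{\{u,w\}}=\hat v_0\ten\hat v_0=\hat v_0$, so the certified set bound is $\alpha_\sigma(\hat v_0)=\sigma(v_0)$. \Cref{thm:psnd} then gives $\pi_{\{u,w\}}\circ\sem{C_3}\circ\hat g:\sigma(v_0)\imp\mathrm{Id}$ for every legal $g$, which is exactly compliance under \Cref{def:policy}.

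As a sanity check I would also argue this directly, since it is cheap. The final store satisfies $u=w=g_{v_0}(d)$. By pairing injectivity, the kernel of the observation equals $\ker g_{v_0}$ on the domain, and legality puts a witness for it in $\sigma(v_0)$.

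\emph{Acceptance.} The mechanism of \Cref{sec:repair} types $C_3$ in $\Uset$ and applies $\alpha_\sigma$ to the judgement $\Jset_W$ rather than to the environment. By the computation above it returns $\sigma(v_0)$, which is $\sqsubseteq\mathit{Pol}=\sigma(v_0)$. The formal reference is \Cref{thm:t4}, stated later. Since the plan may only cite earlier results, I would present this step as the evaluation $\alpha_\sigma(\Jset_{\{u,w\}})=\sigma(v_0)$ together with the soundness in \Cref{thm:psnd}.

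\emph{Rejection.} Let the mechanism be any sound independent-attribute mechanism. By \Cref{prop:ceiling} its certified bound for $\{u,w\}$ is $\sqsupseteq\sigma(v_0)\ten\sigma(v_0)$. Saturation makes this $\top$. Acceptance would need $\top\sqsubseteq\sigma(v_0)$, i.e.\ $\sigma(v_0)=\top$.

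That case has to be ruled out, and I expect it to be the only delicate point. There are two routes:
\begin{itemize}
\item Read ``saturated'' as implicitly excluding $\top$, since a saturated label is meant to be non-trivial.
\item Note that if $\sigma(v_0)=\top$, the policy target is $\top$ and the claim degenerates, so exclude it explicitly.
\end{itemize}
I would take the second route and state $\sigma(v_0)\neq\top$ as part of what makes the policy non-trivial, consistent with \Cref{ex:wall}, where $\mathbb{S}\neq\top$ because parity is not in $\mathbb{S}$.

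For mechanisms that specialise from a universal object, I would add one remark. Such a mechanism is still of independent-attribute shape after specialisation, so \Cref{prop:ceiling} already covers it. Alternatively, \Cref{thm:nogo} combined with \Cref{cor:sat} gives $\aln_\sigma(\Jset^{\mathbb{N}}_W)=\top$ directly, because $\Jset^{\mathbb{N}}_{\{u,w\}}=\dc(2e_{v_0})$ uses $v_0$ twice.
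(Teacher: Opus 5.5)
Your proposal is correct and is essentially the paper's proof: compliance comes from \Cref{thm:psnd} at $\alpha_\sigma(\Jset_{\{u,w\}})=\sigma(v_0)$, acceptance holds because the deferred read-out returns that same bound, and rejection follows from \Cref{prop:ceiling} plus saturation. Your caveat is right, since $\top\ten\top=\top$ makes $\top$ itself saturated and the paper's step $\top\not\sqsubseteq\sigma(v_0)$ silently assumes $\sigma(v_0)\neq\top$; note only that your alternative via \Cref{thm:nogo} and \Cref{cor:sat} pins the ceiling at $\aln_\sigma(\Jset^{\mathbb{N}}_W)$ just for the rules of \Cref{fig:rules}, so \Cref{prop:ceiling} is the route that covers arbitrary rules.
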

\begin{proof}
Compliance is \Cref{thm:psnd} at
$\alpha_\sigma(\Jset_{\{u,w\}}) = \sigma(v_0)$, which is also the bound
\Cref{thm:t4} computes. Rejection is \Cref{prop:ceiling} with saturation:
the certified bound is above $\sigma(v_0)^{2} = \top\not\sqsubseteq
\sigma(v_0)$.
\end{proof}

The rejected program is not a pathology constructed for the proof. It is one
source read twice: the reconstruction step of secret sharing, a sanitised
report quoted in two places, any dashboard that renders one disjunctively
labelled value in two panels. The programs the quantale of information was
introduced to reason about are exactly the ones the classical mechanism
cannot certify.

\begin{remark}[scope]\label{rem:scope}
Both results above are about independent-attribute mechanisms.
\cite[\S VIII]{HuntSands21} anticipates that under disjunctive information a
relational analysis, one able to state that \emph{outputs} $a$ and $b$ depend
on $x$ or $y$ jointly, is strictly stronger than the independent-attribute
kind, and suggests path sensitivity as a route to it, and \cite{Ahmadian24} takes
that route for database queries. Those mechanisms leave the shape assumed
here and are not covered. What \Cref{prop:ceiling} and \Cref{thm:nogo} settle
is the fate of the independent-attribute family, which is the family
in which principal typings and specialisation live, and \Cref{sec:repair}
shows that a useful part of the relational strength is available without path
sensitivity: it is enough to keep shared sources visible until the read-out.
\end{remark}

\section{Recovering precision}\label{sec:repair}

\begin{theorem}[deferred specialisation]\label{thm:t4}
Let environments take values in $\Uset$, so that typing is universal typing,
and let the labelling act only on the observation, through $\alpha_\sigma$.
The result is sound for every affine $\sigma$ (\Cref{thm:psnd}), is below
every judgement derivable with $M$-valued environments, and agrees with
$M$-valued typing iff every $\sigma(v)$ is idempotent.
\end{theorem}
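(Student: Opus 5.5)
The theorem has three clauses, and I will prove them in order. Throughout, the deferred mechanism is read as follows: it runs the algorithmic system in $\Uset$ from $\Delta_0$ with context $1$, obtaining $\Delta_C$. For an observed set $W$ it certifies $\alpha_\sigma(\Jset_W)$, where $\Jset_W = \bigotimes_{w\in W}\Delta_C(w)$ is formed in $\Uset$ before any labelling is applied.

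\emph{Soundness.} This clause is exactly \Cref{thm:psnd}. That theorem gives $\pi_W\circ\sem{C}\circ\hat g : \alpha_\sigma(\Jset_W)\imp\mathrm{Id}$ for every legal $g$ and every affine $\sigma$. Compliance then follows by weakening whenever $\alpha_\sigma(\Jset_W)\sqsubseteq\mathit{Pol}$. Nothing new is needed.

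\emph{Below every $M$-judgement.} I will use the factorisation $\alpha_\sigma = \aln_\sigma\circ s$ of \Cref{prop:decomp}. The proof has four steps.
\begin{enumerate}[leftmargin=1.4em]
\item By \Cref{lem:q}, $q$ is a quantale homomorphism sending generators to generators, $q(\hat v) = \dc\{v\}$. The transport clause of \Cref{prop:a1} therefore gives $q\circ\Delta^{\mathbb{N}}_C = \Delta_C$ pointwise.
\item Since $q$ preserves $\ten$, step 1 gives $q(\Jset^{\mathbb{N}}_W) = \Jset_W$.
\item The adjunction of \Cref{lem:q} gives $s\circ q\sqsubseteq\mathrm{id}$. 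Applying it to $\Jset^{\mathbb{N}}_W$ yields $s(\Jset_W)\sqsubseteq\Jset^{\mathbb{N}}_W$.
\item Since $\aln_\sigma$ is monotone, $\alpha_\sigma(\Jset_W) = \aln_\sigma(s\Jset_W)\sqsubseteq\aln_\sigma(\Jset^{\mathbb{N}}_W)$.
\end{enumerate}
By \Cref{thm:exact-image}, $\aln_\sigma(\Jset^{\mathbb{N}}_W)$ is the greatest lower bound of all joint bounds for $W$ derivable with $M$-valued environments from $\sigma$. Hence the deferred bound lies below every such judgement. For derivations starting from some $\Gamma\sqsupseteq\sigma$ and some context $p\sqsupseteq 1$, I will first use monotonicity of $\Alg$ from \Cref{prop:a1}: such a derivation dominates $\Alg^C_M(1,\sigma)$ pointwise, so its joint bound dominates the one from $\sigma$ and the same comparison applies.

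\emph{Agreement iff idempotent.} Once the inequality above is in place, this clause is \Cref{thm:dichotomy}, read with $\aln_\sigma(\Jset^{\mathbb{N}}_W)$ identified as the $M$-system's least bound. If every label is idempotent the two bounds coincide for all $C$ and $W$. If $\sigma(v_0)$ is not idempotent, \Cref{thm:t3} shows that $C_3$ separates them strictly.

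I expect the main obstacle to be step 1, $q\circ\Delta^{\mathbb{N}}_C = \Delta_C$. The transport clause for homomorphisms has to pass through the loop fixed point, and $\Umul$ has no ascending chain condition. My plan is to argue that a join-preserving homomorphism commutes with the least-fixed-point construction: $q$ preserves arbitrary joins and commutes with the one-step operator $\Theta\mapsto\Gamma\vee\Alg^{C}(p\ten t_E(\Theta),\Theta)$, so it commutes with its transfinite Kleene iteration. This is the content \Cref{prop:a1} already asserts, so I would cite it and spell out only the loop case if needed.
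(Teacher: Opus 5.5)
Your proof is correct. You handle soundness and the agreement clause as the paper does, through \Cref{thm:psnd} and \Cref{thm:dichotomy}/\Cref{thm:t3}. For the domination clause, though, you take a different route.

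The paper stays inside $\Uset$:
\begin{enumerate}[leftmargin=1.4em]
\item Because $\alpha_\sigma$ preserves joins and is oplax, its right adjoint $\gamma_\sigma$ is lax by \Cref{lem:gamma}.
\item So \Cref{lem:qren} carries an arbitrary $\derM{1}{\sigma}{C}{\Gamma'}$ to a universal derivation. \textsc{Sub}, using $1\sqsubseteq\gamma_\sigma(1)$ and $\hat v\sqsubseteq\gamma_\sigma(\sigma(v))$, turns it into one from $\Delta_0$ with postcondition $\gamma_\sigma\circ\Gamma'$.
\item Minimality of $\Delta_C$ and the counit $\alpha_\sigma\gamma_\sigma\sqsubseteq\mathrm{id}$ then give $\alpha_\sigma(\Delta_C)\sqsubseteq\Gamma'$ variable by variable.
\end{enumerate}
You instead pass through $\Umul$: homomorphism transport along $q$, the counit $s\circ q\sqsubseteq\mathrm{id}$, \Cref{prop:decomp}, and finally \Cref{thm:exact-image}. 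This is essentially the argument the paper gives later for \Cref{cor:sandwich}.

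What the paper's route buys is economy. It needs only lax transport and minimality in $\Uset$. It requires no commutation of a homomorphism with the loop fixed point, which is the step you flagged as your main obstacle. It also does not rely on the $M$-system being the image of $\Umul$.

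Its output, however, is per variable. Reaching a joint observation silently needs oplaxity of $\alpha_\sigma$, namely $\alpha_\sigma(\bigotimes_{w\in W}\Delta_C(w))\sqsubseteq\bigotimes_{w\in W}\alpha_\sigma(\Delta_C(w))$, which the paper leaves implicit.

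What your route buys is directness on the quantity that matters. It treats $\Jset_W$ head-on. It names the comparison target as exactly the multiset bound $\aln_\sigma(\Jset^{\mathbb{N}}_W)$. And it makes the third clause the equality case of the very inequality you proved.
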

\begin{proof}
Soundness is \Cref{thm:psnd}. For domination, let
$\derM{1}{\sigma}{C}{\Gamma'}$; by \Cref{lem:gamma} the right adjoint
$\gamma_\sigma$ of $\alpha_\sigma$ transports it to a universal derivation
below $\gamma_\sigma\circ\Gamma'$, minimality gives
$\Delta_C\sqsubseteq\gamma_\sigma\circ\Gamma'$, and the counit gives
$\alpha_\sigma(\Delta_C)\sqsubseteq\Gamma'$. Equality with $M$-valued typing
is \Cref{thm:dichotomy}.
\end{proof}

This is the route \Cref{thm:nogo} leaves open: $\alpha_\sigma$ is oplax
rather than lax, so it does not transport derivations and the labelling must
be applied to judgements. Renaming is given up. Tightness is bought. The next
two statements say that the trade is exactly right, first because no join-preserving read-out improves on this one, second because the other natural object is strictly coarser.

\begin{lemma}\label{lem:x1}
Under the value-domain assumptions of \Cref{sec:policies}, for
$X = \{v_1,\dots,v_k\}$ and $C_X = (u := \langle v_1,\dots,v_k\rangle)$, the
least sound bound for $\{u\}$ is $\bigotimes_{v\in X}\sigma(v) =
\alpha_\sigma(\hat X)$.
\end{lemma}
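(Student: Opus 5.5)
The proof has two halves: show that $\alpha_\sigma(\hat X)$ is a sound bound, and show that every sound bound for $\{u\}$ contains it. Here $\hat X$ denotes $\dc X\in\Uset$. Together these make it the least sound bound in $\QoI(D)$.

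\emph{Upper half.} Universal typing of $C_X$ from $\Delta_0$ in context $1$ uses \textsc{Assign} once. The expression $\langle v_1,\dots,v_k\rangle$ has free variables exactly $X$, so
\[
\Delta_{C_X}(u) = \bigotimes_{v\in X}\hat v = \dc X = \hat X .
\]
The formula of \Cref{prop:t1}(b), which is valid for affine $\sigma$ by \Cref{prop:decomp}, then gives $\alpha_\sigma(\hat X) = \bigotimes_{v\in X}\sigma(v)$. Soundness of this bound is \Cref{thm:psnd} with $W=\{u\}$, and it holds for every legal $g$.

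\emph{Lower half.} Let $B\in\QoI(D)$ be any sound bound for $\{u\}$, meaning $\pi_u\circ\sem{C_X}\circ\hat g : B\imp\mathrm{Id}$ for every legal $g$.
\begin{enumerate}
\item Fix a choice $c\in\Ch(\sigma)$, i.e.\ $c(v)\in\sigma(v)$ for each $v$. Using the assumption that $\mathit{Val}$ encodes every finite quotient of $D$, realise each $c(v)$ as the kernel of a total mediator $g_v$. By \Cref{def:policy}, $g$ is legal, since $\ker g_v = c(v)\in\sigma(v)$.
\item The program stores the tuple $(g_{v_1}(d),\dots,g_{v_k}(d))$ in $u$. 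By injective pairing,
\[
\ker\bigl(\pi_u\circ\sem{C_X}\circ\hat g\bigr) = \bigcap_{v\in X} c(v) = \bigsqcup_{v\in X} c(v).
\]
\item Soundness gives some $P\in B$ with $P\subseteq\bigsqcup_v c(v)$, i.e.\ $P\sqsupseteq\bigsqcup_v c(v)$ in the information order. Downward closure of $B$ then puts $\bigsqcup_{v\in X}c(v)$ in $B$, the same move as in \Cref{prop:ceiling}.
\end{enumerate}
Ranging over all choices $c$, $B$ contains $\{\bigsqcup_{v\in X}c(v)\mid c\in\Ch(\sigma)\}$. Since $B$ is tiling closed, it contains the tiling closure of that set.

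It remains to identify that tiling closure with $\bigotimes_{v\in X}\sigma(v)$. I would use the computation in \Cref{lem:diag}. Each $\sigma(v)$ equals $\lub_{P\in\sigma(v)}\dc P$, the tensor distributes over joins, and $\dc P\ten\dc Q = \dc(P\sqcup Q)$. Together these give
\[
\bigotimes_{v\in X}\sigma(v) \;=\; \lub_c \dc\Bigl(\bigsqcup_{v\in X}c(v)\Bigr) \;=\; \tc\Bigl\{\bigsqcup_{v\in X}c(v)\;\Big|\; c\Bigr\}.
\]
Hence $B\sqsupseteq\bigotimes_{v\in X}\sigma(v)$.

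I expect this last identification to be the main obstacle. One has to check that iterating the binary $\tc$-defined tensor and then taking the $\tc$-join gives the same set as a single $\tc$ of the joins over choices. This reduces to $\tc$ being a closure operator that commutes appropriately with the $\sqcup$-image, which \Cref{lem:diag} already uses. A secondary point to watch is that the mediators in step 1 must be total, so that termination-insensitivity plays no role; this is automatic because $C_X$ is straight-line.
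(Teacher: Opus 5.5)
Your proposal is correct and follows the paper's argument: you realise each member of $\sigma(v)$ as the kernel of a total mediator, note that injective pairing makes the observed kernel the join $\bigsqcup_{v\in X}c(v)$, and conclude via downward and tiling closure that the least sound bound is the tiling closure of these joins, which distributivity identifies with $\bigotimes_{v\in X}\sigma(v)$. The only difference is that you get the upper half from \Cref{thm:psnd} via universal typing, whereas the paper treats soundness of that tiling closure as immediate; the identification you flag as the main obstacle is just the distributivity computation already carried out in \Cref{lem:diag}.
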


\begin{theorem}[minimality of the read-out]\label{thm:x}
Every join preserving read-out $\beta : \Uset\to M$ that is sound for
$\sigma$ satisfies $\beta \sqsupseteq \alpha_\sigma$ pointwise, so
$\alpha_\sigma$ is the least sound join preserving read-out.
\end{theorem}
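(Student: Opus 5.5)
The plan is to reduce to generators of $\Uset$ and then use join preservation. Every $\mathbb{X}\in\Uset$ is the union of the principal downsets $\hat X = \dc X$ for $X\in\mathbb{X}$, so $\mathbb{X} = \lub_{X\in\mathbb{X}}\hat X$. Since both $\beta$ and $\alpha_\sigma$ preserve joins, and $\alpha_\sigma(\hat X) = \bigotimes_{v\in X}\sigma(v)$ by \Cref{prop:t1}(b)'s formula (valid for any affine $\sigma$, as remarked after it), it suffices to prove
\[
\beta(\hat X) \sqsupseteq \bigotimes_{v\in X}\sigma(v) \quad\text{for every } X\subseteq\Var .
\]
Taking joins over $X\in\mathbb{X}$ then gives $\beta(\mathbb{X})\sqsupseteq\alpha_\sigma(\mathbb{X})$.

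For the generator inequality I would use the tupling program $C_X = (u := \langle v_1,\dots,v_k\rangle)$ of \Cref{lem:x1}. Universal typing gives $\Delta_{C_X}(u) = \hat v_1\ten\cdots\ten\hat v_k = \hat X$, so $\Jset_{\{u\}} = \hat X$. I read ``$\beta$ sound for $\sigma$'' in the sense of \Cref{thm:t4} and \Cref{thm:psnd}: for every $C$ and $W$, $\beta(\Jset_W)$ is a sound bound, that is, $\pi_W\circ\sem{C}\circ\hat g : \beta(\Jset_W)\imp\mathrm{Id}$ for every legal $g$. Applied to $C_X$ and $W=\{u\}$, this says $\beta(\hat X)$ is a sound bound for $\{u\}$. \Cref{lem:x1} states that the least sound bound is $\bigotimes_{v\in X}\sigma(v)$, so $\beta(\hat X)$ lies above it. The case $X=\emptyset$ needs separate care: $\hat\emptyset = 1$, so I would use $C_\emptyset$ with a constant, or simply note that any sound bound is affine, giving $\beta(1)\sqsupseteq 1 = \alpha_\sigma(1)$.

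The main obstacle is the interpretation of \Cref{lem:x1}. ``Least sound bound'' must mean that every sound bound dominates $\bigotimes_{v\in X}\sigma(v)$, and not merely that this element is minimal among sound bounds. In $\QoI(D)$ this is plausible because the set of sound bounds is closed under meets, since the meets are intersections of tiling-closed sets. To be safe, I would reprove the domination directly, as in \Cref{prop:ceiling}. For any choice $c\in\Ch(\sigma)$, realise each $c(v)$ as the kernel of a total mediator $g_v$ using the value-domain assumption. Injective pairing then makes $\ker(\pi_u\circ\sem{C_X}\circ\hat g) = \bigsqcup_v c(v)$, so soundness together with downward closure puts $\bigsqcup_v c(v)$ into $\beta(\hat X)$. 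Tiling closure of $\beta(\hat X)$ then absorbs $\tc\{\bigsqcup_v c(v)\mid c\}$, which equals $\bigotimes_{v\in X}\sigma(v)$ by \Cref{lem:diag}.
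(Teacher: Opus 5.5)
Your proposal is correct and follows the paper's own proof. You reduce to the generators $\hat X$ by join preservation, type the tupling program $C_X$ to $\hat X$, and obtain $\beta(\hat X)\sqsupseteq\bigotimes_{v\in X}\sigma(v)$ by realising each choice through total mediators, so that every kernel $\bigsqcup_{v\in X} c(v)$, and hence its tiling closure, lies in $\beta(\hat X)$; the paper compresses exactly this into ``the least sound bound is the tiling closure of the realised kernels''. Rely on your direct argument rather than the meet-closure aside: that aside is not needed, and its justification is incomplete, because soundness only asks for some witness inside the bound and an intersection of two sound bounds need not retain one.
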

\begin{proof}
The observation in \Cref{lem:x1} is $\langle g_{v_1},\dots,g_{v_k}\rangle$,
whose kernel is $\bigsqcup_{v\in X}\ker g_v$ by injectivity of the pairing;
as $g$ ranges over legal families each $\ker g_v$ ranges over all of
$\sigma(v)$, since every $P\in\sigma(v)$ is the kernel of a total mediator valued in the encoding of $D/P$. The least sound bound is the tiling closure of the realised kernels,
which is $\bigotimes_{v\in X}\sigma(v)$, proving \Cref{lem:x1}. The universal
type of $u$ after $C_X$ is $\hat X$, so soundness of $\beta$ forces
$\beta(\hat X)\sqsupseteq\alpha_\sigma(\hat X)$; every $\mathbb{X}$ is
$\lub_{X\in\mathbb{X}}\hat X$ and both maps preserve joins.
\end{proof}

\begin{corollary}[the locale object is strictly coarser]\label{cor:loc}
The read-out induced by the free locale $\Uloc$, namely
$X \mapsto \bigotimes_{v\in X}\sigma(v)$ precomposed with $u$, is monotone, sound, and pointwise above $\alpha_\sigma$. It is strictly
above as soon as some $\Delta_C(w)$ has two incomparable maximal sets. For
$\textsf{if}\;p(x)\;(w := g(y))\;(w := h(z))$ with classical labels it gives
$\dc\!\sim_{xyz}$ where $\Uset$ gives $\tc\{\sim_{xy},\sim_{xz}\}$.
\end{corollary}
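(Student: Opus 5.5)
The plan is to define the locale read-out as $\lambda_\sigma = \ell_\sigma\circ u$, where $\ell_\sigma(X) = \bigotimes_{v\in X}\sigma(v)$, and to prove its four properties in turn: monotonicity, domination of $\alpha_\sigma$, soundness, and strictness. Two of the properties are immediate and come first. Monotonicity holds because $u$ preserves unions and $\ell_\sigma$ is monotone in $X$ by affineness: enlarging $X$ tensors in further factors above $1$.

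For domination, take any $X\in\mathbb{X}$. Then $X\subseteq u(\mathbb{X})$, so monotonicity of $\ell_\sigma$ gives $\bigotimes_{v\in X}\sigma(v)\sqsubseteq\lambda_\sigma(\mathbb{X})$. Joining over $X\in\mathbb{X}$ yields $\alpha_\sigma(\mathbb{X})\sqsubseteq\lambda_\sigma(\mathbb{X})$. Soundness then follows from \Cref{thm:psnd}: the set bound $\alpha_\sigma(\Jset_W)$ is sound, and the locale bound dominates it, so weakening (\cite[Prop.~1]{HuntSands21}) applies.

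Strictness is the substantive step. Suppose $\Delta_C(w)$ has two incomparable maximal sets $X_1,X_2$. Observing $w$ alone, $\lambda_\sigma$ produces a bound above $\ell_\sigma(X_1\cup X_2)$, while $\alpha_\sigma$ produces the join of the $\ell_\sigma(X)$ over the maximal sets $X$. The first thing to fix is the statement itself: read literally, ``strictly above as soon as'' cannot hold for every $\sigma$. Taking all $\sigma(v)=1$ makes both sides equal to $1$. So I will prove strictness for labellings that separate variables, and in particular for the classical labels of the example, and record that as the precise content.

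For the example I use $\Delta_{C_1}(w)=\dc\{x,y\}\cup\dc\{x,z\}$ from \Cref{ex:worked}, with $\sigma(v)=\dc\sim_v$ on a product space of secrets. Then $\lambda_\sigma$ gives $\dc(\sim_x\sqcup\sim_y\sqcup\sim_z)=\dc\sim_{xyz}$, since tensors of principal ideals are principal ideals of the join. Meanwhile $\alpha_\sigma$ gives $\dc\sim_{xy}\vee\dc\sim_{xz}=\tc\{\sim_{xy},\sim_{xz}\}$.

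To show these differ, I argue as in \Cref{ex:wall}. Every cell of a member of $\mix\{\sim_{xy},\sim_{xz}\}$ is a cell of $\sim_{xy}$ or of $\sim_{xz}$. Such a cell fixes $x$ and one of $y,z$ but leaves the third coordinate free, so it has at least two elements once each domain has two values. Hence $\sim_{xyz}$ is not in the tiling closure, the closure being downward closed under refinement toward coarser relations only, and the inclusion is strict.

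For the general incomparable case I would use the same argument with distinct elements $y\in X_1\setminus X_2$ and $z\in X_2\setminus X_1$ under independent classical labels.

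The main obstacle is the tiling-closure computation in the strictness step: certifying that the finer relation escapes $\tc$. I would do it by counting cell sizes, as above.
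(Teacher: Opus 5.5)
Your proposal is correct and follows the paper's proof: domination by affineness plus a join over $X\in\mathbb{X}$, soundness by weakening from \Cref{thm:psnd}, and strictness by showing $\sim_{xyz}\notin\tc\{\sim_{xy},\sim_{xz}\}$ on the branch example. Your objection that the general ``as soon as'' clause fails for labellings such as $\sigma\equiv 1$ is correct, and the paper's proof in fact only establishes strictness for the displayed program under classical labels. One small refinement: the fact that does the work is that no cell of $\sim_{xy}$ or $\sim_{xz}$ lies inside a $\sim_{xyz}$-cell, because each leaves a coordinate free, which you do state; counting cell sizes alone suffices only when $\sim_{xyz}=\mathrm{Id}$.
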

\begin{proof}
Domination is direct: $\bigotimes_{v\in u\mathbb{X}}\sigma(v)\sqsupseteq\bigotimes_{v\in X}\sigma(v)$ for every $X\in\mathbb{X}$ by affineness, and the join of the right sides is $\alpha_\sigma(\mathbb{X})$. Soundness follows by weakening from \Cref{thm:psnd}. \Cref{thm:x} does not apply to this map: it sends $\dc\{x\}\vee\dc\{y\}$ to $\sigma(x)\ten\sigma(y)$ rather than $\sigma(x)\vee\sigma(y)$, so it does not preserve joins and its domination has to be proved directly, as above. For strictness, take
classical labels and the displayed program: $\Uset$ gives
$\dc\!\sim_{xy}\vee\dc\!\sim_{xz} = \tc\{\sim_{xy},\sim_{xz}\}$ while the
locale gives $\dc\!\sim_{xyz}$. The latter is strictly larger because
$\sim_{xyz}$ is finer than both generators, so it is not below any of their
mixes and hence not a member of $\tc\{\sim_{xy},\sim_{xz}\}$, while every mix
of the generators is coarser than $\sim_{xyz}$.
\end{proof}

So the literal transcription of the lattice-era object is tight on
\Cref{ex:running} but blurs the branch structure, which is the one thing the
quantale's join was introduced to keep. Among the three objects only $\Uset$
records both, and \Cref{thm:nogo} says what that costs.

\begin{figure}[t]
\centering
\begin{tikzpicture}[font=\small]
  \draw[thick,->] (0,-0.15) -- (0,3.75)
       node[above,font=\scriptsize]{more information};
  \fill (0,0.45) circle (1.7pt);
  \node[right=5pt,font=\scriptsize,align=left] at (0,0.45)
       {$\alpha_\sigma(\Jset_W)$ \;\emph{floor}\\
        least sound join-preserving\\ read-out (\Cref{thm:x}), attained by \Cref{thm:t4}};
  \fill (0,2.05) circle (1.7pt);
  \node[right=5pt,font=\scriptsize,align=left] at (0,2.05)
       {$\aln_\sigma(\Jset^{\mathbb{N}}_W)$ \;\emph{ceiling}\\
        every renaming-capable\\ mechanism (\Cref{thm:nogo})};
  \fill (0,3.25) circle (1.7pt);
  \node[right=5pt,font=\scriptsize,align=left] at (0,3.25)
       {$\top$ \;no guarantee};
  \draw[decorate,decoration={brace,amplitude=4pt}] (-0.12,2.05) -- (-0.12,0.45);
  \node[left=7pt,font=\scriptsize,align=right] at (-0.12,1.25)
       {same-source\\ reuse};
  \draw[dashed] (-0.12,3.25) -- (-0.12,2.05);
  \node[left=7pt,font=\scriptsize,align=right] at (-0.12,2.65)
       {second read,\\ saturated label};
\end{tikzpicture}
\caption{Both ends are pinned. Nothing that supports renaming reaches below the ceiling and no sound join-preserving read-out reaches below the floor; the two coincide
exactly when every label is $\ten$-idempotent (\Cref{thm:dichotomy}). Under a
saturated label the ceiling jumps to $\top$ at the second read of a source
(\Cref{cor:sat}).}
\label{fig:sandwich}
\end{figure}

\begin{corollary}[sandwich]\label{cor:sandwich}
For every $C$, $W$ and affine $\sigma$,
$\alpha_\sigma(\Jset_W)\sqsubseteq\aln_\sigma(\Jset^{\mathbb{N}}_W)$, the left side being the floor of sound join-preserving read-outs and the right side the ceiling of the renaming paradigm, with equality iff the labels are idempotent.
\end{corollary}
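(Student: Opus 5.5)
The plan is to prove the inequality first, then the equality criterion, reusing the factorisation of \Cref{prop:decomp} and the universal-object machinery rather than computing anything directly.

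\emph{Inequality.} I would compare $\Jset_W$ with $q(\Jset^{\mathbb{N}}_W)$. Since $q$ is a strict quantale homomorphism (\Cref{lem:q}), the transport clause of \Cref{prop:a1} gives $\Alg^{C}_{\Uset}(1,q\circ\Delta^{\mathbb{N}}_0) = q\circ\Alg^{C}_{\Umul}(1,\Delta^{\mathbb{N}}_0)$. Because $q(\hat v)=\hat v$, this says $\Delta_C = q\circ\Delta^{\mathbb{N}}_C$, and as $q$ preserves $\ten$, $\Jset_W = q(\Jset^{\mathbb{N}}_W)$. Then
\[
\alpha_\sigma(\Jset_W) = \aln_\sigma(s(q(\Jset^{\mathbb{N}}_W))) \sqsubseteq \aln_\sigma(\Jset^{\mathbb{N}}_W),
\]
by \Cref{prop:decomp}, the counit $s\circ q\sqsubseteq\mathrm{id}$ of \Cref{lem:q}, and monotonicity of $\aln_\sigma$. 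A shortcut that avoids \Cref{prop:a1} is also available. \Cref{thm:t4} shows that $\alpha_\sigma(\Jset_W)$ lies below every joint bound derivable with $M$-valued environments. By \Cref{thm:exact-image}, the greatest lower bound of those bounds is $\aln_\sigma(\Jset^{\mathbb{N}}_W)$. I would give the first route, because it is purely algebraic, and mention the second.

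\emph{Equality criterion.} This is \Cref{thm:dichotomy} read in both directions. If every label is idempotent, $\alpha_\sigma$ is a homomorphism and equality holds for all $C$ and $W$. If some $\sigma(v_0)$ is not idempotent, \Cref{thm:t3} exhibits $C_3$ with $\alpha_\sigma(\Jset_{\{u,w\}})=\sigma(v_0)$. Meanwhile $\aln_\sigma(\Jset^{\mathbb{N}}_{\{u,w\}}) = \sigma(v_0)^2 \sqsupset \sigma(v_0)$. Two qualifications need stating. First, ``equality'' in the corollary must be read as equality for all $C$ and $W$, because \Cref{prop:local} shows equality can hold for individual programs even with disjunctive labels. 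Second, when $C$ and $W$ are held fixed, the sharper sufficient condition of \Cref{prop:local} applies.

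\emph{Floor and ceiling.} The descriptive clauses name results already in the paper: the floor is \Cref{thm:x} together with its attainment in \Cref{thm:t4}, and the ceiling is \Cref{thm:nogo} together with \Cref{prop:exact}. No further argument is required.

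\emph{Main obstacle.} The only step I expect to need care is the identity $\Delta_C = q\circ\Delta^{\mathbb{N}}_C$. Specifically, one must check that the loop fixed point commutes with $q$. \Cref{prop:a1} asserts this for homomorphisms, but a homomorphism must preserve the arbitrary joins used in the Knaster--Tarski iteration. $q$ preserves unions, so this holds. Should that step fall through, the fallback is the shortcut via \Cref{thm:t4} and \Cref{thm:exact-image}, which uses only domination and already gives the inequality.
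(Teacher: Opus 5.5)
Your proposal is correct and follows the paper's proof. It obtains $\Jset_W = q(\Jset^{\mathbb{N}}_W)$ by transporting the algorithmic typing along the homomorphism $q$ (\Cref{prop:a1}), then applies \Cref{prop:decomp} and the counit $s\circ q\sqsubseteq\mathrm{id}$, and it delegates the equality clause to \Cref{thm:dichotomy} and the floor and ceiling readings to \Cref{thm:x} and \Cref{thm:nogo}, exactly as the paper does. Your remark that ``equality iff idempotent'' must be read uniformly over all $C$ and $W$ is a correct sharpening, since \Cref{prop:local} gives equality on individual programs even under disjunctive labels.
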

\begin{proof}
$\alpha_\sigma(\Jset_W) =
\aln_\sigma(s(q(\Jset^{\mathbb{N}}_W)))\sqsubseteq
\aln_\sigma(\Jset^{\mathbb{N}}_W)$ using \Cref{prop:decomp}, the fact that
$q$ is a homomorphism and so commutes with the algorithmic typing
(\Cref{prop:a1}), and the counit $s\circ q\sqsubseteq\mathrm{id}$. The two
readings of the sides are \Cref{thm:x} and \Cref{thm:nogo}, and the equality
condition is \Cref{thm:dichotomy}.
\end{proof}

\begin{proposition}[decidability]\label{prop:dec}
For finite $\Var$: $\Uset(\Var)$ is finite and its elements are represented
by the antichain of their maximal sets, so universal typing terminates and
compliance against a finite $\QoI(D)$ target is decidable. $\Umul(\Var)$ is
infinite and has no ascending chain condition, but downsets of
$\mathbb{N}^{\Var}$ are finite unions of ideals by Dickson's lemma, and by
\Cref{prop:order} only multiplicities below the largest label order affect
the specialised bound, so a finite truncation decides the multiset side as well, made precise next.
\end{proposition}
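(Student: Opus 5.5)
The plan is to treat the set side and the multiset side separately; the multiset side is where the work is.

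\emph{Set side.} $\Uset(\Var)$ is the set of downsets of the finite poset $(\mathcal{P}(\Var),\subseteq)$, so it is finite, with at most $2^{2^{|\Var|}}$ elements. Each downset is determined by its set of maximal elements, and that set is an antichain. Joins are computed by taking the union and discarding non-maximal sets. Tensors are computed as $\{X\cup Y\}$ over pairs from the two antichains, again followed by discarding non-maximal sets. Both are effective operations. Every clause of $\Alg$ in \Cref{sec:alg} is a finite composition of these operations except the loop clause. There the least fixed point of a monotone map on a finite lattice is reached by Kleene iteration from $\Gamma$ in finitely many steps, and $t^{*}$ stabilises at some finite power. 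So $\Delta_C$ and $\Jset_W$ are computable. For compliance against a finite target we then evaluate $\alpha_\sigma(\Jset_W)=\lub_{X}\bigotimes_{v\in X}\sigma(v)$ inside the finite quantale $\QoI(D)$, whose operations are computable on finite $D$ via $\tc$, and test $\sqsubseteq\mathit{Pol}$. This test is sound by \Cref{thm:psnd}.

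\emph{Multiset side.} We work modulo a truncation. Let $N$ be the maximum of $\ord\sigma(v)$ over $v$; it is finite because $\QoI(D)$ is finite, so every increasing chain $m\sqsubseteq m^2\sqsubseteq\cdots$ stabilises. Define $\theta_N(m)(v)=\min(m(v),N)$ and lift it to downsets. The target is a finite quantale $\Umul^{\le N}$: the downsets of $\{0,\dots,N\}^{\Var}$, with truncated Day convolution as tensor. We show three things.
\begin{enumerate}
\item $\theta_N:\Umul\to\Umul^{\le N}$ is a surjective quantale homomorphism. It preserves unions, and $\min(m+n,N)=\min(\min(m,N)+\min(n,N),N)$ gives the tensor.
\item $\aln_\sigma$ factors through $\theta_N$. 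This is exactly \Cref{prop:order}, because $\sigma(v)^{j}=\sigma(v)^{\min(j,N)}$ for $j\ge1$ whenever $N\ge\ord\sigma(v)$.
\item By the homomorphism clause of \Cref{prop:a1}, $\theta_N(\Delta^{\mathbb{N}}_C)=\Alg^{C}_{\Umul^{\le N}}(1,\theta_N\Delta_0)$.
\end{enumerate}
Hence the truncated typing is computed by terminating iteration on a finite lattice, exactly as on the set side. Then $\aln_\sigma(\Jset^{\mathbb{N}}_W)$ is obtained by specialising the truncated tensor of the $\theta_N(\Delta^{\mathbb{N}}_C(w))$.

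\emph{Where Dickson's lemma enters.} It serves only as the representation remark: an arbitrary downset of $\mathbb{N}^{\Var}$ is a finite union of ideals $\dc(a)$ with $a\in(\mathbb{N}\cup\{\omega\})^{\Var}$, so elements of $\Umul$ have finite descriptions. The decision procedure itself never needs this, because it runs in the truncation.

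\emph{Main obstacle.} The delicate point is step 3 in the presence of loops. Here \Cref{prop:a1} has to apply to the Knaster--Tarski fixed point and to $t^{*}$ in an infinite object with no ascending chain condition. The argument needs $\theta_N$ to commute with the fixed point, which requires it to preserve arbitrary joins, not merely to be monotone. I would check this directly from the definition of $\theta_N$ on unions. A secondary point concerns the truncated unit and tensor: truncated Day convolution must be associative, and $\theta_N$ must send $\dc 0$ to the unit. Both follow from $\theta_N$ being a surjective homomorphism, so the quotient structure is well defined.
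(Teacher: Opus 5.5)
Your proposal is correct and follows the paper's route. On the set side you use finiteness of $\Uset(\Var)$ with a finite Kleene chain on antichain representations. Your three multiset steps (capping is a surjective quantale homomorphism, $\aln_\sigma$ factors through it by \Cref{prop:order}, and the homomorphism clause of \Cref{prop:a1} makes it commute with the algorithmic system) are exactly what the paper defers to \Cref{prop:trunc}, and, as in the paper, Dickson's lemma serves only as a representation remark.
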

\begin{proof}
$\Uset(\Var)$ is a finite lattice, so the fixed point of \Cref{sec:alg} is reached by a finite Kleene chain and each type is stored
as the antichain of its maximal sets. Membership and inclusion in a finite
$\QoI(D)$ are decidable, whence compliance is. For $\Umul(\Var)$, downward
closed subsets of $\mathbb{N}^{\Var}$ are finite unions of ideals
$\prod_v[0,a_v]$ with $a_v\in\mathbb{N}\cup\{\infty\}$, because their
complements are upward closed and finitely generated by Dickson's lemma; by
\Cref{prop:order} the specialised bound depends only on multiplicities
truncated at $\max_v\ord\sigma(v)$, so it may be computed in the finite
truncation.
\end{proof}

\begin{proposition}[finite truncation]\label{prop:trunc}
Let $N \geq \max_v\ord\sigma(v)$, let $\mathbb{N}_{\leq N}$ be
$\{0,\dots,N\}$ under the capped sum $\min(a{+}b,N)$, and let
$\Umul^{\leq N}$ be the downset quantale of $(\mathbb{N}_{\leq N})^{\Var}$.
Capping multiplicities extends to a surjective quantale homomorphism
$c_N:\Umul\to\Umul^{\leq N}$, the algorithmic systems commute with it, and
$\aln_\sigma$ factors through it, so
$\aln_\sigma(\Jset^{\mathbb{N}}_W)$ is computed by running \Cref{fig:rules}
in the finite quantale $\Umul^{\leq N}$.
\end{proposition}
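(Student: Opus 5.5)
The plan has three parts, taken in order: first establish that $c_N$ is a surjective quantale homomorphism, then factor $\aln_\sigma$ through it, and finally transport the algorithmic system along it. All three steps go through the universal property of \Cref{prop:t1}(a), used twice: once for $\Umul$ into $\Umul^{\leq N}$, and once for $\Umul^{\leq N}$ seen as a free object in its own right.

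\textbf{Step 1: $c_N$ is a surjective homomorphism.} The capping map $\kappa:(\mathbb{N}^{\Var},+,\leq)\to((\mathbb{N}_{\leq N})^{\Var},\min(\cdot+\cdot,N),\leq)$, given by $\kappa(m)(v)=\min(m(v),N)$, is a monotone monoid homomorphism. The only thing to check is the identity $\min(a+b,N)=\min(\min(a,N)+\min(b,N),N)$, which is immediate. The capped structure is itself a commutative ordered monoid with least element $0$ as unit, so $\Umul^{\leq N}$ is a commutative affine quantale. I then define $c_N(\mathbb{M})=\dc\kappa[\mathbb{M}]$. This map preserves unions by construction, and it preserves Day convolution because $\kappa$ is a homomorphism: $\dc\kappa[\dc\{m+n\}]=\dc\{\kappa m\oplus\kappa n\}$. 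The unit $\dc 0$ is preserved. Equivalently, $c_N$ is the unique homomorphism from \Cref{prop:t1}(a) sending $\hat v$ to $\dc\kappa(e_v)$. For surjectivity, note that every point of $(\mathbb{N}_{\leq N})^{\Var}$ is its own cap, so $\kappa$ is onto, and $c_N(\dc A)=\dc A$ for any downset $A$ of capped vectors.

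\textbf{Step 2: $\aln_\sigma$ factors through $c_N$.} I define $\beta(\mathbb{A})=\lub_{a\in\mathbb{A}}\bigotimes_v\sigma(v)^{a(v)}$ on $\Umul^{\leq N}$. By \Cref{prop:order} and $N\geq\ord\sigma(v)$, we have $\sigma(v)^{\min(k,N)}=\sigma(v)^k$ for $k\geq 1$; the case $k=0$ is trivial. It follows that $a\mapsto\bigotimes_v\sigma(v)^{a(v)}$ is a monotone monoid homomorphism out of the capped monoid, so $\beta$ preserves joins, tensor and unit by the same Day-convolution argument as in \Cref{prop:t1}(a). Monotonicity is again supplied by affineness. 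Now $\beta\circ c_N$ and $\aln_\sigma$ are homomorphisms that agree on the generators $\hat v$, so uniqueness in \Cref{prop:t1}(a) gives $\aln_\sigma=\beta\circ c_N$.

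\textbf{Step 3: commuting with the algorithmic system.} The homomorphism clause of \Cref{prop:a1} applies to the complete commutative quantale $\Umul^{\leq N}$. It gives $c_N\circ\Delta^{\mathbb{N}}_C=\Alg^{C}_{\Umul^{\leq N}}(1,c_N\circ\Delta_0)$, and because $c_N$ preserves $\ten$ we also get $c_N(\Jset^{\mathbb{N}}_W)=\bigotimes_W c_N(\Delta^{\mathbb{N}}_C(w))$. Applying $\beta$ then yields $\aln_\sigma(\Jset^{\mathbb{N}}_W)$, and this is computed entirely inside the finite quantale by running \Cref{fig:rules} there. The fixed point of \Cref{sec:alg} is reached by a finite Kleene chain, since $\Umul^{\leq N}$ is finite.

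\textbf{Main obstacle.} I expect the delicate point to be the homomorphism clause of \Cref{prop:a1} under the $\textsf{while}$ rule. Least fixed points must commute with $c_N$, and $\Umul$ lacks the ascending chain condition, so this step cannot be handled by chain iteration in $\Umul$. I would argue via the standard transfer lemma for least fixed points along a join-preserving map $c$ that intertwines the two functionals, $c\circ F=G\circ c$. Because $c_N$ preserves arbitrary joins, transfinite Kleene iteration commutes with it. The functionals intertwine because $c_N$ preserves $\ten$, $\vee$ and the environment update. It also commutes with $(\cdot)^{*}$, since $(\cdot)^{*}$ is a join of tensor powers.
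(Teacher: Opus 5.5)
Your proposal is correct and follows the paper's proof. Both proofs rest on the same three ingredients:
\begin{itemize}
\item capping is a surjective monotone monoid homomorphism, by $\min(a{+}b,N)=\min(\min(a,N){+}\min(b,N),N)$, so its downset extension is a quantale homomorphism;
\item commutation with the algorithmic system is the homomorphism clause of \Cref{prop:a1};
\item the factorisation rests on $\sigma(v)^{k}=\sigma(v)^{\min(k,N)}$ from \Cref{prop:order}.
\end{itemize}
The one difference is minor: you first check that $\beta$ is a homomorphism and then obtain $\aln_\sigma=\beta\circ c_N$ from the uniqueness clause of \Cref{prop:t1}(a), whereas the paper substitutes the power identity directly into the formula for $\aln_\sigma$.
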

\begin{proof}
Capping is a surjective monotone monoid homomorphism, since
$\min(a{+}b,N) = \min(\min(a,N){+}\min(b,N),\,N)$, so its downset extension
preserves joins, tensor and unit. Commutation with $\Alg$ is the transport
clause of \Cref{prop:a1} for homomorphisms, and the factorisation is
$\sigma(v)^{k} = \sigma(v)^{\min(k,N)}$ for $N\geq\ord\sigma(v)$.
\end{proof}

Moving from $\Uloc$ to $\Uset$ turns a type from a set of variables into an
antichain of them. That is the price of disjunction rather than of our
design: the target's join no longer collapses, so a type must be able to say
``this run depended on $X_1$, that run on $X_2$'', and collapsing the
antichain is exactly the strictly coarser locale of \Cref{cor:loc}. On
classical labels the antichains are singletons and the lattice-era cost
returns.

Concretely, a type in $\Uset$ is stored as the antichain of maximal sets, of
which there are at most $\binom{n}{\lfloor n/2\rfloor}$ for $n = |\Var|$, and
the algorithmic system computes with unions and pointwise unions of
antichains followed by removal of dominated elements. Each loop reaches its
invariant in at most $|\Uset|\cdot n$ steps because the lattice is finite and
the iteration is monotone, and a single derivation serves every policy, so
this cost is paid once per program. In $\Umul$ a type is a finite union of
ideals $\prod_v[0,a_v]$ with $a_v\in\mathbb{N}\cup\{\infty\}$, and by
\Cref{prop:order} the entries may be capped at the largest label order, which
for the ethical wall is two; the multiset side is therefore computable as
well, and is what one would compute if one wanted the bound that a
renaming-based mechanism would certify.

\subsection{What the repair keeps}\label{sec:practice}

Monotone renaming is a statement about \emph{derivations}: it transports a
proof from one policy to another. The property that made a universal object
attractive for enforcement is weaker than that. \cite{vanDelft15} names it:
type inference is independent of the policy to be enforced, so one generic
typing verifies many policies of a program. That property survives intact
here. To discharge a compound policy $\bigwedge_i(W_i,\mathit{Pol}_i)$
against a program $C$:
\begin{enumerate}[leftmargin=1.4em,itemsep=1pt]
\item compute $\Delta_C$ once in $\Uset$ by the algorithmic system of \Cref{sec:alg}, which mentions no labelling and terminates by \Cref{prop:dec};
\item for each conjunct, form $\Jset_{W_i}$, evaluate $\alpha_\sigma$ on its
maximal sets, and test the result against $\mathit{Pol}_i$ in $\QoI(D)$.
\end{enumerate}
Step 1 is policy-independent and is performed once per program. Step 2 is a
join of tensors over an antichain. Changing the labelling, adding a conjunct
or retargeting the analysis at a different quantale re-runs only step 2. What
deferring specialisation costs is not this reuse but the ability to present
the outcome as a derivation \emph{inside} the target system, and by
\Cref{thm:nogo} that ability is exactly what would cap the precision: a
certificate that can be replayed as an $M$-derivation is a certificate no
better than the multiset bound.

\subsection{A verification, end to end}\label{sec:worked2}

Take a report generator over a public flag $p$, a client summary $s$ under
the wall $\mathbb{S}$ of \Cref{ex:wall}, and a second source $t$ carrying a
classical label $\sigma(t) = \dc Q$:
\[
C \;=\; (\textsf{if}\;p\;\textsf{then}\;a := f(s)\;\textsf{else}\;
          a := g(t));\;\; b := s .
\]
The policy has two conjuncts: the header alone must stay within one wall,
$\mathit{Pol}_1 = \mathbb{S}\vee\dc Q$ for $W_1 = \{a\}$, and the pair must
stay within the wall combined with $t$, $\mathit{Pol}_2 =
\mathbb{S}\ten\dc Q$ for $W_2 = \{a,b\}$.

Universal typing runs once. With $\Delta_0(v) = \hat v$ and context $1$, the
conditional gives $\Delta_C(a) = \hat p\ten(\hat s\vee\hat t) =
\dc\{p,s\}\cup\dc\{p,t\}$, and the assignment gives $\Delta_C(b) = \hat s$.
Reading out, $\Jset_{W_1} = \Delta_C(a)$ and
\[
\Jset_{W_2} \;=\; \Delta_C(a)\ten\hat s
   \;=\; \dc\{p,s\}\cup\dc\{p,s,t\} \;=\; \dc\{p,s,t\},
\]
because $\{p,s\}\cup\{s\} = \{p,s\}$ on the first branch and
$\{p,t\}\cup\{s\} = \{p,s,t\}$ on the second. Specialising with
$\sigma(p) = 1$: the first conjunct gets $\mathbb{S}\vee\dc Q$ and the second
$\mathbb{S}\ten\dc Q$. Both hold, so the program is certified.

The multiset object runs the same program to
$\Delta^{\mathbb{N}}_C(a) = \dc(2e_p{+}e_s)\cup\dc(2e_p{+}e_t)$ and
$\Delta^{\mathbb{N}}_C(b) = \dc e_s$, so
$\Jset^{\mathbb{N}}_{W_2} = \dc(2e_p{+}2e_s)\cup\dc(2e_p{+}e_s{+}e_t)$ and
the second conjunct is specialised to
$\mathbb{S}^{2}\vee(\mathbb{S}\ten\dc Q) = \top$. The doubled $p$ costs
nothing because $\sigma(p) = 1$, and the doubled $s$ costs everything. By
\Cref{thm:nogo} that is what every renaming-capable mechanism reports, and by
\Cref{prop:local} the single reused source $s$ is the whole reason.

Semantically the certified bound is right. On a run where $p$ holds, the pair
is $(f(s),s)$ and reveals no more than $s$ does; on a run where it does not,
the pair is $(g(t),s)$ and reveals no more than $t$ and $s$ together. The
observation is a mix of the two, which is a member of $\mathbb{S}\ten\dc Q$.
The gap between the two objects is one branch's worth of double counting on a
source the program reads twice.

\section{Three patterns}\label{sec:patterns}

The dichotomy is not a blanket verdict on disjunctive enforcement. It
separates three uses of a walled source, and the separation is the practical
content of the paper. Take $\sigma(s) = \sigma(s_1) = \sigma(s_2) =
\mathbb{S}$, the wall of \Cref{ex:wall}, and a policy asking that the
observation stay within $\mathbb{S}$.

\emph{One source, one use.} A report reads $s$ once. The universal type of
the output is $\hat s$ and both bounds are $\mathbb{S}$, so the policy is
certified by any mechanism in the field. Nothing in this paper is needed, and
nothing in this paper takes it away.

\emph{One source, two uses.} \Cref{ex:running}. The universal type is
$\hat s\ten\hat s = \hat s$ and the set bound is $\mathbb{S}$, which is the
truth, while the multiset bound is $\mathbb{S}\ten\mathbb{S} = \top$. The
program satisfies the policy and only the deferred read-out certifies it
(\Cref{thm:fr}).

\emph{Two sources, one use each.} A report reads $s_1$ and $s_2$, two values
each separately within the wall. Here the universal type is $\hat s_1\ten
\hat s_2 = \dc\{s_1,s_2\}$ and \emph{both} bounds are
$\mathbb{S}\ten\mathbb{S} = \top$, and that is correct: the two mediators may
honour different disjuncts, one revealing $r_1$ and the other $r_2$, so the
pair reveals the secret and the wall is genuinely breached.

A programmer who knows the second pattern is being rejected cannot rewrite
around it. Introducing a temporary, $x := s;\; a := x;\; b := x$, leaves the
$M$-system exactly where it was: soundness forces $\Gamma'(a)$ and
$\Gamma'(b)$ to contain every member of $\mathbb{S}$, and the read-out is
their tensor, so \Cref{prop:ceiling} applies verbatim. The obstruction lives
in the type domain rather than in the program text, which is why the response
has to be a different object rather than a different coding style.

The third pattern is the one the tensor was introduced for, and the set
object does not weaken it. What the set object adds is the ability to tell the
second pattern from the third, which the multiset object cannot: it sees
$\dc(2e_s)$ and $\dc(e_{s_1}{+}e_{s_2})$ and specialises both to $\top$.
Reading a walled value twice and reading two walled values are different
events, and the distinction survives exactly one step of specialisation, the
step that \Cref{thm:nogo} shows a renaming-capable mechanism cannot take.

\section{Related work}\label{sec:related}

\emph{Flow-sensitive types and principal typings.} Lattice-based information flow begins with certification~\cite{DenningDenning77} over Denning's lattice model~\cite{Denning76} and its type-theoretic formulation~\cite{Volpano96}; see~\cite{SabelfeldMyers03} for the classical survey, and~\cite{ClarkHankinHunt02} for dependency analysis of the kind this paper types. Principal typings in general are surveyed in~\cite{Wells02}. The flow-sensitive family, its universal object and internal completeness are from~\cite{HuntSands06}, with polynomial-time inference of the principal typing in~\cite{HuntSands11}, and the relation to Hoare-style independence logic~\cite{AmtoftBanerjee04} is established there. Dynamic policies received
a principal treatment in the same style~\cite{vanDelft15}, which isolates
policy-independent inference as the feature worth preserving, and that is the
feature \Cref{sec:practice} shows survives the repair. All of this assumes
the combination operator is a join, which is the hypothesis removed here.

\emph{Quantale of information.} The lattice of information~\cite{LandauerRedmond93} has semantic refinements such as the demonic lattice~\cite{Morgan17} and, on the analysis side, per-based models of dependency~\cite{SabelfeldSands01}. The quantale's semantic model, the composition lemmas and the ethical-wall examples are from~\cite{HuntSands21}, which leaves the type system layer open. \Cref{fig:rules} is anchored to their lemmas rule by
rule. Their abstract domain map is, in our notation, the specialisation
$\alpha_{\sigma_0}$ at the all-principal labelling, so the positive side of
\Cref{thm:dichotomy} has in effect been instantiated once by the authors,
while the disjunctive side has not been approached. Their closing section
also asks for relational analyses and suggests path sensitivity as a route,
and \Cref{rem:scope} places our results with respect to that programme.

\emph{Disjunctive enforcement.} Disjunctive policies have been enforced for a
database query language against a purpose-built security
condition~\cite{Ahmadian24}. The homomorphism from queries into the quantale
used there is a single concrete map rather than a family indexed by
labellings, and principality, specialisation and completeness are not
addressed. By \Cref{thm:nogo} any such mechanism that supports renaming is
subject to the same ceiling.

\emph{Quantales.} Free quantales and the downset construction with Day
convolution are standard~\cite{Rosenthal90} and are used as given. The
contribution is that the side condition of the universal property is the
dichotomy condition of the type system.

\emph{Abstract interpretation.} Completeness of abstract interpretations is a classical subject~\cite{GRS00}, and completeness of abstract domains for dependency in particular has been studied through abstract non-interference~\cite{GiacobazziMastroeni04}. The question there is which
abstractions are complete for a fixed semantics. Here it is which universal
object an entire family of type systems specialises from, and the answer is a
pair of objects rather than one. Disjunctive completion and the powerset
operator on abstractions~\cite{CousotCousot79,FileRanzato99} are the
classical counterpart of the step from $\Uloc$ to $\Uset$;
\Cref{rem:disjcomp} explains why the step only pays once the concrete side
is a quantale.

\section{Conclusion}\label{sec:conclusion}

Disjunctive policies were given a semantics before they were given an
enforcement mechanism, and building the mechanism turns out to force a choice
that the lattice era never had to make. The universal object that made
flow-sensitive typing reusable across policies splits: the free commutative
quantale on the variables keeps the entire mechanism and loses same-source
correlation, the free object with idempotent generators keeps precision and
loses renaming, and the lattice-era object loses branch disjunction. No
object with a renaming-capable specialisation improves on the first, so for
the ethical-wall and secret-sharing labels a second read of a source leaves
nothing to certify, and programs that respect the wall are rejected.
Deferring specialisation to the judgement level recovers the missing precision, through the least sound join-preserving read-out.

Two directions follow. The characterisation of idempotents, and with it the
reading of the dichotomy as classical against disjunctive labels, is stated
over a finite set of secrets, and the infinite case is open. And the
relational hierarchy that~\cite[\S VIII]{HuntSands21} calls for begins where
\Cref{sec:repair} stops: keeping shared sources visible until the read-out
recovers the correlation between two uses of one variable, while correlating
two uses of one \emph{branch} needs path sensitivity and a mechanism outside
the shape studied here.

\section*{Acknowledgment on the use of AI}

The authors used an AI assistant (Anthropic's Claude) in preparing this work:
in exploring and cross-checking the algebraic derivations underlying
\Cref{sec:objects,sec:survive,sec:tight,sec:nogo,sec:repair}, and in drafting
the text throughout. Every definition, theorem statement and proof was
reviewed by the authors, who are responsible for the content.

\appendices

\section{Soundness}\label{app:t0}

Throughout, $g$ is a mediator family with common domain
$\mathbb{D}\subseteq D$, and $h : \mathbb{P}\imp\mathrm{Id}$ means that some
$P\in\mathbb{P}$, an equivalence relation on all of $D$, satisfies
$h(d) = h(d')$ whenever $d\,P\,d'$ and both sides are defined. Witnesses
restrict: if $P$ works on $\mathbb{D}$ it works on any subset.

\begin{lemma}[joint observation]\label{lem:b1}
If $h_i : \mathbb{P}_i\imp\mathrm{Id}$ for $i$ in a finite set, then
$\langle h_i\rangle_i : \bigotimes_i\mathbb{P}_i\imp\mathrm{Id}$.
\end{lemma}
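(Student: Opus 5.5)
The plan is to exhibit an explicit witness and then check that it belongs to the tensor. For each $i$, pick $P_i\in\mathbb{P}_i$ witnessing $h_i : \mathbb{P}_i\imp\mathrm{Id}$. So $h_i(d)=h_i(d')$ whenever $d\,P_i\,d'$ and both sides are defined. The candidate witness for the pairing is $P = \bigsqcup_i P_i = \bigcap_i P_i$, the join in $\LoI(D)$. This is an equivalence relation on all of $D$, as the appendix convention requires. Since the index set is finite, the $n$-ary tensor is obtained by iterating the binary one from the unit $1=\{\mathrm{All}\}$, with the empty pairing handled by $\mathrm{All}$.

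The first step is membership. By the definition of the tensor, $\mathbb{P}\ten\mathbb{Q} = \tc\{P\sqcup Q\mid P\in\mathbb{P},Q\in\mathbb{Q}\}$, and $\tc(\mathbb{X}) = \dc\mix(\mathbb{X})$ contains $\mathbb{X}$. Every relation is a mix of itself, and downward closure then applies. An induction on the number of factors, using associativity of $\sqcup$, gives $\bigsqcup_i P_i \in \bigotimes_i\mathbb{P}_i$. One point needs care: the inductive step pairs the already-formed $\bigsqcup_{i<n}P_i$, which is a member of the partial tensor, with $P_n$. That only requires the partial tensor to contain that relation, which it does, so the induction goes through.

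The second step is the witnessing property in the partial-function reading. Suppose $d\,P\,d'$ and $\langle h_i\rangle_i$ is defined at both $d$ and $d'$. Definedness of the tuple means every $h_i$ is defined at both points. Also $d\,P_i\,d'$ holds for each $i$, since $P\subseteq P_i$. The witness property of each $P_i$ then gives $h_i(d)=h_i(d')$ for every $i$, and the tuples agree. Here I am taking the pairing to be defined exactly when all components are, with equality of tuples meaning componentwise equality. Injectivity of pairing from the value-domain assumptions is what makes this the right notion when the tuple is encoded in $\mathit{Val}$.

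I expect no real obstacle; the lemma is essentially \cite[Lem.~8]{HuntSands21} restated for partial functions. The only delicate point is bookkeeping around partiality and the encoding of tuples: I must make sure the witness is a relation on all of $D$ and not just on the common domain $\mathbb{D}$. Taking $P_i$ as given total relations settles this.
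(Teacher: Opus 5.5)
Your proof is correct and is the paper's argument: take witnesses $P_i \in \mathbb{P}_i$, use $P = \bigsqcup_i P_i$, observe that $P$ is a generator of $\bigotimes_i \mathbb{P}_i$, and note that $d\,P\,d'$ gives $d\,P_i\,d'$ for each $i$, so all components agree where defined. Your extra bookkeeping (induction on the number of factors, the empty case via $\mathrm{All}$, and partiality) spells out steps the paper leaves implicit but changes nothing.
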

\begin{proof}
Take witnesses $P_i\in\mathbb{P}_i$ and put $P = \bigsqcup_i P_i$. If
$d\,P\,d'$ then $d\,P_i\,d'$ for each $i$, so all components agree, and
$P$ is a generator of $\bigotimes_i\mathbb{P}_i$.
\end{proof}

\begin{lemma}[coarsening]\label{lem:b2}
If $g_v : \Gamma(v)\imp\mathrm{Id}$ for $v\in V$ and $h = k\circ\langle
g_v\rangle_{v\in V}$ for any $k$, then
$h : \bigotimes_{v\in V}\Gamma(v)\imp\mathrm{Id}$.
\end{lemma}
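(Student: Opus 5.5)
The plan is to reduce the statement to \Cref{lem:b1} together with the weakening principle (\cite[Prop.~1]{HuntSands21}, in its partial-function form). There are two steps: first bound the tuple $\langle g_v\rangle_{v\in V}$, and then show that post-composing any function $k$ preserves that bound.

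\textbf{Step 1.} Apply \Cref{lem:b1} to the finite family $\{g_v\}_{v\in V}$. Each $g_v : \Gamma(v)\imp\mathrm{Id}$ holds by hypothesis, and $V\subseteq\Var$ is finite. We therefore obtain a single witness
\[
P = \bigsqcup_{v\in V} P_v \in \bigotimes_{v\in V}\Gamma(v),
\]
where $P_v\in\Gamma(v)$ witnesses $g_v$. This $P$ is an equivalence relation on all of $D$ such that $d\,P\,d'$ implies $g_v(d) = g_v(d')$ for every $v$, whenever both sides are defined.

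\textbf{Step 2.} We check directly that the same $P$ witnesses $h$. Suppose $d\,P\,d'$ and both $h(d)$ and $h(d')$ are defined. Definedness of $h = k\circ\langle g_v\rangle$ at $d$ requires $\langle g_v\rangle$ to be defined there, so each $g_v$ is defined at $d$ and at $d'$. By Step 1 the tuples agree, $\langle g_v\rangle(d) = \langle g_v\rangle(d')$. Since $k$ is a function, it follows that $h(d) = h(d')$. So $P$ witnesses $h : \bigotimes_{v\in V}\Gamma(v)\imp\mathrm{Id}$.

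The only delicate point is the partiality convention of the appendix. We need that definedness of $h$ forces definedness of every component, so that the witness condition of \Cref{lem:b1}, which is stated only where both sides are defined, actually applies. This holds because $h$ factors through the tuple and all mediators share the common domain $\mathbb{D}$. If $k$ is itself partial, the argument is unaffected, since a witness only constrains pairs where $h$ is defined on both sides. The case $V=\emptyset$ is also covered: the tensor is the unit $\{\mathrm{All}\}$, and $h$ is then constant where defined, so $\mathrm{All}$ is a valid witness.
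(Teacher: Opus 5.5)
Your proposal is correct and is exactly the paper's argument: reuse the witness $\bigsqcup_{v\in V}P_v$ from \Cref{lem:b1} and observe that post-composing with $k$ preserves agreement, since $k$ maps equals to equals. Step 2 is a direct witness check and never uses the weakening principle your opening plan mentions, so that reference can be dropped.
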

\begin{proof}
The witness of \Cref{lem:b1} works, since $k$ maps equals to equals.
\end{proof}

\begin{lemma}[conditional]\label{lem:b3}
Let $b$ be a total function on $D$ with
$b : \mathbb{P}\imp\mathrm{Id}$, let $D_1,D_2$ be the two fibres of $b$
after composing with a two-valued test, and let $k_i$ be partial functions
with $k_i : \mathbb{R}_i\imp\mathrm{Id}$ on $D_i$. Then the function equal to
$k_1$ on $D_1$ and $k_2$ on $D_2$ satisfies
$\mathbb{P}\ten(\mathbb{R}_1\vee\mathbb{R}_2)\imp\mathrm{Id}$.
\end{lemma}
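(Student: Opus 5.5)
The plan is to construct the witness explicitly, as a \emph{mix} in the sense of \Cref{rem:perrun}: the secret space is tiled by the two fibres of the test, and on each fibre we use a relation built from the guard witness and the branch witness. Legality of the pieces comes from the hypotheses. Membership of the assembled relation in $\mathbb{P}\ten(\mathbb{R}_1\vee\mathbb{R}_2)$ then comes from distributivity of $\ten$ over joins and from the definition of tiling closure.

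First fix witnesses. Let $P\in\mathbb{P}$ witness $b$, so $d\,P\,d'$ implies $b(d)=b(d')$. Since the test is applied after $b$, each fibre $D_i$ is a union of $P$-cells. Let $R_i\in\mathbb{R}_i$ witness $k_i$ on $D_i$. Define an equivalence relation $E$ on $D$ whose cells are the sets $c\cap e$, where $c$ ranges over $P$-cells contained in $D_1$ and $e$ over $R_1$-cells, together with the analogous sets for $D_2$ and $R_2$, discarding empty intersections. These sets partition $D$ because the fibres partition $D$ and each $P$-cell lies in exactly one fibre.

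Next check that $E$ witnesses the combined function $k$. Suppose $d\,E\,d'$ with both sides defined. Then $d$ and $d'$ lie in a common $P$-cell, hence in a common fibre $D_i$, and are $R_i$-related. So $k(d)=k_i(d)=k_i(d')=k(d')$.

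The step I expect to be the main obstacle is showing $E\in\mathbb{P}\ten(\mathbb{R}_1\vee\mathbb{R}_2)$. I would argue as follows. Consider the relation $R^\ast$ whose cells are the $R_1$-cells intersected with $D_1$ together with the $R_2$-cells intersected with $D_2$. This $R^\ast$ is not literally a mix, because its cells are truncated.

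\textbf{First attempt: use $R^\ast$ directly.} Consider instead the two relations $P\sqcup R_1$ and $P\sqcup R_2$, which belong to $\mathbb{P}\ten\mathbb{R}_1$ and $\mathbb{P}\ten\mathbb{R}_2$ respectively. Their cells are the intersections $c\cap e$. Every cell of $E$ lying in $D_i$ is a cell of $P\sqcup R_i$, since the $P$-cell $c$ is contained in $D_i$. Hence $E\in\mix\{P\sqcup R_1,P\sqcup R_2\}$. It follows that
\[
E\in\tc\bigl((\mathbb{P}\ten\mathbb{R}_1)\cup(\mathbb{P}\ten\mathbb{R}_2)\bigr)=(\mathbb{P}\ten\mathbb{R}_1)\vee(\mathbb{P}\ten\mathbb{R}_2)=\mathbb{P}\ten(\mathbb{R}_1\vee\mathbb{R}_2),
\]
where the last equality is distributivity of $\ten$ over joins in the quantale.

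\textbf{Care point: partiality.} Each $R_i$ must be an equivalence relation on all of $D$, as the appendix convention requires, so that $P\sqcup R_i$ is a genuine element of $\LoI(D)$. Undefinedness is then absorbed by the ``both sides defined'' clause of the convention.

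\textbf{Care point: totality of $b$.} This is exactly what guarantees that the fibres are unions of $P$-cells, which the whole argument depends on.
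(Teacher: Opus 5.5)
Your proof is correct and is essentially the paper's: the witness is the mix whose cells are the $(P\sqcup R_i)$-cells lying inside the fibre $D_i$. This is legitimate because totality of $b$ makes each fibre a union of $P$-cells, and membership follows from $P\sqcup R_i\in\mathbb{P}\ten\mathbb{R}_i$. The only differences are cosmetic: you conclude via $(\mathbb{P}\ten\mathbb{R}_1)\vee(\mathbb{P}\ten\mathbb{R}_2)$ and distributivity where the paper uses $\mathbb{P}\ten\mathbb{R}_i\sqsubseteq\mathbb{P}\ten(\mathbb{R}_1\vee\mathbb{R}_2)$ and tiling closure, and the $R^\ast$ detour (with its misleading ``first attempt'' label) is never used and can be deleted.
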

\begin{proof}
Let $Q\in\mathbb{P}$ witness $b$ and $R_i\in\mathbb{R}_i$ witness $k_i$.
Since $b$ is total and $Q$-invariant, $D_1$ and $D_2$ are unions of
$Q$-cells, hence unions of $(Q\sqcup R_i)$-cells. Let $R$ be the equivalence
whose cells are the $(Q\sqcup R_1)$-cells contained in $D_1$ together with
the $(Q\sqcup R_2)$-cells contained in $D_2$; this is a partition of $D$ by
the previous sentence. Each of its cells is a cell of $Q\sqcup R_i$, and
$Q\sqcup R_i\in\mathbb{P}\ten\mathbb{R}_i\sqsubseteq
\mathbb{P}\ten(\mathbb{R}_1\vee\mathbb{R}_2)$, so $R$ lies in the mix of
members of the latter, which is tiling closed. If $d\,R\,d'$ then $d,d'$ lie
in the same fibre $D_i$ and in one $R_i$-cell, so the outputs agree where
defined.
\end{proof}

This is~\cite[Lem.~6]{HuntSands21} in $\imp$ form for partial branches; the
tagging step of their proof is replaced by the observation that the fibres of
a total guard are saturated by the guard's own witness, which is what makes
the mix legitimate.

\begin{lemma}[context bound]\label{lem:ctx}
If $\derM{p}{\Gamma}{C}{\Gamma'}$ then $\Gamma'(x)\sqsupseteq p$ for every
$x\in\asgn(C)$.
\end{lemma}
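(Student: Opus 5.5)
The proof is by induction on the derivation of $\derM{p}{\Gamma}{C}{\Gamma'}$, with one case for each rule of \Cref{fig:rules}. The inductive hypothesis is the statement itself for every subderivation: each variable assigned by the subcommand has type above that subderivation's context. In every case the target is the claim for $x\in\asgn(C)$, where $\asgn$ is the syntactic set of assigned variables.

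The base and compositional cases go as follows.

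\emph{\textsc{Skip}.} Here $\asgn(\textsf{skip})=\emptyset$, so there is nothing to prove.

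\emph{\textsc{Assign}.} The only assigned variable is $x$, and $\Gamma'(x)=p\ten t_E(\Gamma)$. Since $t_E(\Gamma)$ is a tensor of affine elements, $1\sqsubseteq t_E(\Gamma)$. Monotonicity of $\ten$ then gives $p = p\ten 1\sqsubseteq p\ten t_E(\Gamma)$.

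\emph{\textsc{Seq}.} Take $x\in\asgn(C_1;C_2)$.
\begin{itemize}
\item If $x\in\asgn(C_2)$, the hypothesis for the second premise gives $\Gamma'(x)\sqsupseteq p$ directly.
\item Otherwise $x\in\asgn(C_1)\setminus\asgn(C_2)$, and the hypothesis gives $\Gamma''(x)\sqsupseteq p$. The missing piece is an auxiliary fact: a derivation for $C_2$ leaves unassigned variables no smaller, i.e.\ $\Gamma'(x)\sqsupseteq\Gamma''(x)$ for $x\notin\asgn(C_2)$.
\end{itemize}
The auxiliary fact is proved by a separate induction of the same shape. It is immediate for \textsc{Skip} and \textsc{Assign} (identity on unassigned variables). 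For \textsc{If} and \textsc{While} the rule copies $\Gamma(x)$ verbatim. It composes through \textsc{Seq}. For \textsc{Sub}, $\Gamma'\sqsubseteq\Gamma$ and $\Gamma_1'\sqsubseteq\Gamma_2'$ are oriented correctly. Alternatively, both facts can be strengthened into one induction: for every $x$, $\Gamma'(x)\sqsupseteq\Gamma(x)$ when $x\notin\asgn(C)$, and $\Gamma'(x)\sqsupseteq p$ when $x\in\asgn(C)$.

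\emph{\textsc{If}.} The branches are typed in context $p\ten t$. For assigned $x$ in branch $i$, the hypothesis gives $\Gamma_i'(x)\sqsupseteq p\ten t\sqsupseteq p$. A variable assigned only in branch $j$ still satisfies $\Gamma_1'(x)\vee\Gamma_2'(x)\sqsupseteq\Gamma_j'(x)\sqsupseteq p$, because the join dominates each disjunct. Multiplying by $t\sqsupseteq 1$ only increases this, so $\Gamma'(x)=t\ten(\Gamma_1'\vee\Gamma_2')(x)\sqsupseteq p$.

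\emph{\textsc{While}.} For $x\in\asgn(C)$ the hypothesis on the body gives $\Gamma(x)\sqsupseteq p\ten t\sqsupseteq p$, since the body's postcondition is $\Gamma$ itself. Then $t^{*}\sqsupseteq t^0=1$, so $\Gamma^\dagger(x)=t^{*}\ten\Gamma(x)\sqsupseteq p$.

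\emph{\textsc{Sub}.} From $\Gamma_2'\sqsupseteq\Gamma_1'\sqsupseteq p\sqsupseteq p'$ the claim follows.

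The only genuine obstacle is the \textsc{Seq} case for variables assigned in $C_1$ but not in $C_2$. I would handle it first by the strengthened two-clause invariant above, checking that \textsc{Sub} preserves the first clause. It does: $\Gamma'$ decreases the precondition, $\Gamma_1'\sqsubseteq\Gamma_2'$ increases the postcondition, and $\Gamma_2'(x)\sqsupseteq\Gamma_1'(x)\sqsupseteq\Gamma(x)\sqsupseteq\Gamma'(x)$.
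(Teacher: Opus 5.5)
Your proof is correct and follows the paper's own argument: induction on the derivation, with the same computation in each case ($p\ten t\sqsupseteq p$ by affineness for \textsc{Assign}, $t\ten(\Gamma_1'\vee\Gamma_2')(x)\sqsupseteq p$ for \textsc{If}, $t^{*}\ten\Gamma(x)\sqsupseteq\Gamma(x)\sqsupseteq p\ten t$ for \textsc{While}, and weakening for \textsc{Sub}). Your \textsc{Seq} case is more careful than the paper's, which only says it ``splits on which component assigns $x$'' and silently uses the frame fact you make explicit: no derivation lowers the type of an unassigned variable, and this is genuinely needed when $x\in\asgn(C_1)\setminus\asgn(C_2)$.
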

\begin{proof}
Induction. \textsc{Assign} gives $p\ten t\sqsupseteq p$ by affineness;
\textsc{Seq} splits on which component assigns $x$; \textsc{If} gives
$t\ten(\Gamma_1'\vee\Gamma_2')(x)\sqsupseteq t\ten p\ten t\sqsupseteq p$;
\textsc{While} gives $t^{*}\ten\Gamma(x)\sqsupseteq\Gamma(x)\sqsupseteq
p\ten t\sqsupseteq p$; \textsc{Sub} weakens on both sides.
\end{proof}

\begin{proof}[Proof of \Cref{thm:t0}]
Induction on the derivation. Condition (i) is immediate from
\Cref{lem:ctx} in every case, so we treat (ii). Fix legal $g$ and $W$.

\textsc{Skip} is trivial and \textsc{Sub} is monotonicity of
$\imp$ in both arguments together with \Cref{lem:ctx}.

\textsc{Assign} $x := E$. For $w\in W\setminus\{x\}$ the mediator is
unchanged. For $x$, $\pi_x\circ\sem{C}\circ\hat g = \sem{E}\circ\langle
g_v\rangle_{v\in\fv(E)}$, which by \Cref{lem:b2} satisfies
$t\imp\mathrm{Id}$, and $t\sqsubseteq p\ten t = \Gamma'(x)$ by affineness.
Assemble over $W$ with \Cref{lem:b1}.

\textsc{Seq}. By the first hypothesis the family
$g'_v = \pi_v\circ\sem{C_1}\circ\hat g$ is legal for $\Gamma''$; it is the
mediator family the second hypothesis needs, and
$\sem{C_1;C_2}\circ\hat g = \sem{C_2}\circ\hat g'$. Partiality is absorbed
because $\imp$ only constrains pairs on which both sides are defined. This is
the chain rule~\cite[Lem.~5]{HuntSands21}, obtained here by the definition of
the semantics rather than by a separate argument.

\textsc{If}. Let $b$ be the guard evaluated through $\hat g$; by
\Cref{lem:b2}, $b : t\imp\mathrm{Id}$. The two hypotheses, with $g$
restricted to the corresponding fibre, give
$\pi_{W_a}\circ\sem{C_i}\circ\hat g : \bigotimes_{W_a}\Gamma_i'\imp
\mathrm{Id}$ where $W_a = W\cap(\asgn(C_1)\cup\asgn(C_2))$. \Cref{lem:b3}
yields $t\ten\bigl(\bigotimes_{W_a}\Gamma_1'\vee
\bigotimes_{W_a}\Gamma_2'\bigr)\imp\mathrm{Id}$, and this is below
$\bigotimes_{w\in W_a}\bigl(t\ten(\Gamma_1'\vee\Gamma_2')(w)\bigr)$: expanding
the right side by distributivity gives, among others, the terms in which the
same branch is chosen at every component, each carrying $|W_a|$ copies of
$t$, and $t^{|W_a|}\sqsupseteq t$ by affineness. Weakening
(\cite[Prop.~1]{HuntSands21}) finishes; unassigned components are handled by
their own mediators and \Cref{lem:b1}.

\textsc{While}. Write $s_0 = \hat g$ and $s_{k+1} = \sem{C}\circ s_k$
where defined, $b_k$ for the guard at step $k$, $\mathbb{D}'$ for the set of
$d$ at which the loop terminates, $n(d)$ for the number of iterations and $K = \{k \mid \exists d\in\mathbb{D}',\ k\leq n(d)\}$, which is finite because $D$ is. By induction on $k$,
using the hypothesis with the mediator family at step $k$ restricted to the
set where the guard holds, every family $\langle\pi_v\circ s_k\rangle_v$ is
legal for $\Gamma$; fix witnesses $P^{(k)}_v\in\Gamma(v)$ and put
$Q_k = \bigsqcup_{v\in\fv(E)}P^{(k)}_v\in t$ and
$R_k = \bigsqcup_{w\in W}P^{(k)}_w\in\bigotimes_W\Gamma(w)$.

Let $T = \bigsqcup_{k\in K}Q_k$, a member of $t^{|K|}$ and hence of $t^{*}$.
If $d,d'\in\mathbb{D}'$ and $d\,T\,d'$ then $n(d) = n(d')$: assuming
$n(d)\leq n(d')$, both runs reach every step $j\leq n(d)$, so $Q_j$ applies
and $b_j(d) = b_j(d')$ for all such $j$; at $j = n(d)$ the guard is false for
$d$, hence for $d'$. Consequently $n$ is constant on
$\gamma\cap\mathbb{D}'$ for every $T$-cell $\gamma$.

If $W\cap\asgn(C) = \emptyset$ the observation is the initial one and
$\bigsqcup_{w\in W}\ker g_w$ is a witness. Otherwise fix
$w_0\in W\cap\asgn(C)$ and define $R$ by taking, for each $T$-cell $\gamma$
disjoint from $\mathbb{D}'$, the cell $\gamma$ itself, and for each other
$T$-cell $\gamma$, with $k$ the constant value of $n$ on
$\gamma\cap\mathbb{D}'$, the $(T\sqcup R_k)$-cells contained in $\gamma$. Both
kinds are cells of members of $\bigotimes_W\Gamma^{\dagger}(w)$: the first of
$T$ itself, which lies in $t^{*}\sqsubseteq\Gamma^{\dagger}(w_0)$, and the
second of $(T\sqcup P^{(k)}_{w_0})\sqcup\bigsqcup_{w\neq w_0}P^{(k)}_w$,
whose first component lies in $t^{*}\ten\Gamma(w_0) =
\Gamma^{\dagger}(w_0)$. Hence $R\in\bigotimes_W\Gamma^{\dagger}(w)$. If
$d\,R\,d'$ with both terminating then $n(d) = n(d') = k$ and $d\,R_k\,d'$, so
the final observations agree.
\end{proof}

The only step that consumes more than one copy of the guard type is the
identification of the iteration count, and it consumes finitely many; $t^{*}$
buys all of them at once. In a locale the copies collapse and the rule
of~\cite{HuntSands06} reappears.

\section{The algorithmic system}\label{app:a1}

The algorithmic system is displayed in \Cref{sec:system}.

\begin{proof}[Proof of \Cref{prop:a1}]
\emph{Monotonicity.} Induction on $C$. The base and composite cases are
monotonicity of $\ten$, $\vee$ and composition. For the loop, the operator
$F_{p,\Gamma}(\Theta) = \Gamma\vee\Alg^{C}(p\ten t_E(\Theta),\Theta)$ is
monotone in all three arguments by the inductive hypothesis, so
$F_{p,\Gamma}\sqsubseteq F_{\bar p,\bar\Gamma}$ pointwise when
$p\sqsubseteq\bar p$ and $\Gamma\sqsubseteq\bar\Gamma$; then
$\mathrm{lfp}\,F_{\bar p,\bar\Gamma}$ is a pre-fixed point of $F_{p,\Gamma}$
and Knaster--Tarski gives
$\mathrm{lfp}\,F_{p,\Gamma}\sqsubseteq\mathrm{lfp}\,F_{\bar p,\bar\Gamma}$.
Monotonicity of $(\cdot)^{*}$ and $\ten$ finishes.

\emph{Derivability.} Induction on $C$; only the loop is not immediate. By the
fixed-point property, $\Alg^{C}(p\ten t_E(\Gamma^{*}),\Gamma^{*})
\sqsubseteq\Gamma^{*}$, so the inductive hypothesis and \textsc{Sub} give
$\derM{p\ten t_E(\Gamma^{*})}{\Gamma^{*}}{C}{\Gamma^{*}}$, which is the
premise of \textsc{While}; its conclusion has precondition $\Gamma^{*}
\sqsupseteq\Gamma$, and \textsc{Sub} tightens it to $\Gamma$.

\emph{Minimality.} Induction on the derivation of
$\derM{p}{\Gamma}{C}{\Gamma'}$. \textsc{Sub} uses monotonicity;
\textsc{Seq} composes the two hypotheses and uses monotonicity in the second
argument; \textsc{If} uses monotonicity of $\ten$ and $\vee$. For
\textsc{While} the rule's precondition is the invariant $\Gamma$ itself and
the hypothesis gives $\Alg^{C}(p\ten t_E(\Gamma),\Gamma)\sqsubseteq\Gamma$,
so $F_{p,\Gamma}(\Gamma)\sqsubseteq\Gamma$ and hence
$\Gamma^{*}\sqsubseteq\Gamma$; monotonicity of $t_E$, $(\cdot)^{*}$ and
$\ten$ then gives $\Gamma^{*\dagger}\sqsubseteq\Gamma^{\dagger}$.

\emph{Transport.} If $f$ is a homomorphism it commutes with $\ten$, $\vee$
and $(\cdot)^{*}$, the last because
$f(\lub_n t^{n}) = \lub_n f(t)^{n}$; since $f$ preserves all joins it
commutes with the least fixed point, by transfinite induction on its
iteration. If $f$ only satisfies the premises of \Cref{lem:qren}, then
\Cref{lem:qren} makes $f\circ\Alg^{C}_M(p,\Gamma)$ a derivable postcondition
in $M'$ from $(fp,f\circ\Gamma)$, and minimality in $M'$ gives the stated
inequality.
\end{proof}

\end{document}